\documentclass[review=false]{jfp-epi}

\usepackage{amsmath,stmaryrd}
\usepackage{amsthm}
\usepackage{enumerate}
\usepackage{bussproofs}
\usepackage{mathtools}
\usepackage{makecell}
\usepackage{comment}
\usepackage{proof}
\usepackage{graphicx}
\usepackage[misc]{ifsym}
\usepackage[shortlabels]{enumitem}
\usepackage{booktabs,dcolumn}
\usepackage{xspace}
\usepackage{colortbl}
\usepackage{hhline}
\usepackage{rotating}

\usepackage{xcolor}
\usepackage{float}

\usepackage{listings}
\usepackage{xcolor}
\def\withcolor{}
\def\incodesize{\small}
\newcommand{\codesize}{\small}

\ifdefined\withcolor
	\definecolor{haskellblue}{rgb}{0.0, 0.0, 1.0}
	\definecolor{haskellred}{rgb}{1.0, 0.0, 0.0}
	\definecolor{gray_ulisses}{gray}{0.55}
	\definecolor{castanho_ulisses}{rgb}{0.0,0.4,0.0}
	\definecolor{preto_ulisses}{rgb}{0.41,0.20,0.04}
	\definecolor{green_ulisses}{rgb}{0.8,0.0,0.8}
\else
	\definecolor{haskellblue}{gray}{0.1}
	\definecolor{haskellred}{gray}{0.1}
	\definecolor{gray_ulisses}{gray}{0.1}
	\definecolor{castanho_ulisses}{gray}{0.1}
	\definecolor{preto_ulisses}{gray}{0.1}
	\definecolor{green_ulisses}{gray}{0.1}
\fi

\lstdefinelanguage{HaskellUlisses} {
	basicstyle=\ttfamily\codesize,
	sensitive=true,
	mathescape=true,
	xleftmargin=\parindent,
	morecomment=[l][\color{gray_ulisses}\ttfamily\itshape\codesize]{--},
	morestring=[b]",
	literate={'"'}{\textquotesingle "\textquotesingle}3,
        literate={
           {<!}{{{\color{lcolor}<!}}}2
           {`}{{{$^{\backprime}{}$}}}1
           {?}{{{\color{lcolor}?}}}1
	   {\{-@\ }{{{}}}0
	   {\ @-\}}{{{}}}0
           {<=}{{$\leq$}}1
           {/=}{{$\neq$}}1
           {bot}{{$\bot$}}1
           {top}{{$\top$}}1
           {theta}{{$\theta$}}1
           {>>}{{{\color{haskellblue}>>}}}2
           {>>=}{{{\color{haskellblue}>>=}}}3
           {</>}{{{\color{haskellblue}</>}}}3
           {<*>}{{{\color{haskellblue}<*>}}}3
           {->}{{$\rightarrow$}}2
           {~>}{{$\imparrow$}}2
           {<-}{{$\leftarrow$}}2
           {<~}{{$\leftarrow$}}2
           {dollar}{{$\texttt{\$}$}}1
	},
	stringstyle=\color{haskellred},
	showstringspaces=false,
	numberstyle=\codesize,
	numberblanklines=true,
	showspaces=false,
	breaklines=true,
	showtabs=false,
	emph=
	{[1]
		FilePath,IOError,acos,acosh,all,any,appendFile,approxRational,asTypeOf,asin,
		sum,assert,checkSum1,checkSum2,sumTR',lemSumTR,thmSumTR,reachTrans,aval,bval,
		pre,sub,bsub,vc,ic,get,set,verify,
		lemSub,lemBsub,
		thmVC,lemVC,lemIC,lemIf,lemWhile,lemAsgn,thmFHLegit,
		plus, zeroL, zeroR, comm, succR,getEq, sorted, insert, merge, f, g, h
	},
	emphstyle={[1]\color{haskellblue}},
	emph=
	{[2]
		bool,char,int,nat,int8,list,olist,incList,decList,pair,incPair,
		BStep,Assertion,BExp,Com,Id,AExp,Val,State,Map,Bool,Char,Double,Either,
		Float,IO,Integer,Int,Maybe,Ordering,Rational,Ratio,ReadS,ShowS,String,
		Word8,Nat,NonZero,Nat64,Text,Map,ByteString,ByteStringSZ,ByteStringN,
    Ptr,ForeignPtr,CSize
    InPacket,Tree,Edge,Reach,ReachProp,ReachEv,Prop,TreeEq,TreeLt,Vec,
    NullTerm,IncrList,DecrList,UniqList,Map,BST,MinHeap,MaxHeap,
    PtrN,ByteStringN,ByteStringEq,VO,ByteStringsEq,ByteStringNE,
		List,Even,Valid,FH,Legit,Peano
	},
	emphstyle={[2]\color{castanho_ulisses}},
	emph=
	{[3]
		case,class,data,deriving,do,else,if,switch,return,import,in,infixl,infixr,instance,val,let,rec,
		requires,ensures,assume,val,module,measure,predicate,of,primitive,then,refinement,type,where,lazy
	},
	emphstyle={[3]\color{preto_ulisses}\textbf},
	emph=
	{[4]
		quot,rem,div,mod,elem,notElem,seq
	},
	emphstyle={[4]\color{castanho_ulisses}\textbf},
	emph=
	{[5]
		PS,Tip,Node,EQ,true,false,GT,Node,Leaf,Just,LT,Left,Nothing,Right,Show,Eq,Ord,Num,
		Z, S,Cons,Nil,OCons,ONil,Skip,Assign,Seq,While,If,BSkip,BAsgn,BSeq,BIfT,BIfF,BWhlT,BWhlF,
		Bc,Not,And,Leq,Eql,Plus,Minus,Times,V,N,FHSkip,FHAsgn,FHSeq,FHIf,FHWhl,FHPre,FHPost,
		Self,Step,Def,Set, True, False
	},
	emphstyle={[5]\color{green_ulisses}}
}

\lstnewenvironment{fscode}
{\lstset{language=HaskellUlisses,basicstyle=\ttfamily\small}}
{}

\lstnewenvironment{code}
{\lstset{language=HaskellUlisses}}
{}

\newcommand{\inlinespeccode}[1]{\lstinline[language=spec,mathescape=true]{#1}}

\lstnewenvironment{codebox}
{\lstset{language=HaskellUlisses,frame=tlbr,basicstyle=\ttfamily\codesize}}
{}

\lstMakeShortInline[language=HaskellUlisses,basicstyle=\ttfamily\incodesize]@

\lstdefinelanguage{spec}{
  basicstyle=\ttfamily\codesize,
  basewidth=0.5em,
  mathescape=true,
  xleftmargin=\parindent,
  morekeywords={bound,type,data,if,then,else,let,in,where,
                label,bind,return,unlabel,downgrade,toLabeled,
                project,proj1,selectList,select,
                respond,authUser
                },
  belowskip=5pt,
  aboveskip=5pt,
  mathescape
}

\lstnewenvironment{spec}{\lstset{language=HaskellUlisses}}{}

\lstnewenvironment{binahbox}{\lstset{language=spec,frame=single}}{}
\lstMakeShortInline[language=spec,mathescape=true]|

\usepackage{halloweenmath}

\usepackage[plain]{algorithm}
\usepackage[noend]{algpseudocode}
\definecolor{zgray}{RGB}{63,63,63}
\algloopdefx{Continue}[0]{\textbf{continue}}
\newcommand{\CodeComment}[2][\linewidth]{//~{\color{zgray}#2}}
\algblockdefx[SuchThat]{SuchThat}{EndSuchThat}{\textit{such that}}{}

\definecolor{darkcoral}{rgb}{0.8, 0.36, 0.27}
\definecolor{brown(traditional)}{rgb}{0.59, 0.29, 0.0}
\definecolor{darkmagenta}{rgb}{0.55, 0.0, 0.55}
\definecolor{nicepink}{rgb}{0.8,0.0,0.8}

\newcommand{\added}[1]{#1}
\newcommand{\rev}[1]{#1}

\newcommand{\todomp}[1]{}
\newcommand{\todoam}[1]{}

\newcommand{\commentout}[1]{}

\newif\iffull
\fulltrue 

\newcommand{\ite}{\mathit{ite}}
\newcommand{\nil}{\mathit{Nil}}
\newcommand{\cons}{\mathit{Cons}}

\newcommand{\halloc}[1]{\Fr^{x;v}}

\newcommand{\len}{\mathit{length}}

\newcommand{\height}{\mathit{height}}

\newtheorem{theorem}{Theorem}
\newtheorem{lemma}{Lemma}
\newtheorem{corollary}{Corollary}
\newtheorem{definition}{Definition}
\newtheorem{proposition}{Proposition}
\newtheorem{example}{Example}
\newsavebox{\qedtext}
\renewenvironment{proof}[1][]{%
    \sbox\qedtext{#1}%
    \par\addvspace{6pt}\noindent\textbf{Proof}\hskip5.5pt
}{\hfill\usebox{\qedtext}\hskip2pt$\blacksquare$\par\addvspace{6pt}}

\makeatletter
\newtheorem*{rep@theorem}{\rep@title}
\newcommand{\newreptheorem}[2]{%
\newenvironment{rep#1}[1]{%
 \def\rep@title{#2 \ref{##1}}%
 \begin{rep@theorem}}%
 {\end{rep@theorem}}}
\makeatother
\newreptheorem{theorem}{Theorem}

\makeatletter
\newtheorem*{rep@lemma}{\rep@title}
\newcommand{\newreplemma}[2]{%
\newenvironment{rep#1}[1]{%
 \begin{rep@lemma}}%
 {\end{rep@lemma}}}
\makeatother
\newreptheorem{lemma}{Lemma}

\makeatletter
\newtheorem*{rep@corollary}{\rep@title}
\newcommand{\newrepcorollary}[2]{%
\newenvironment{rep#1}[1]{%
 \begin{rep@corollary}}%
 {\end{rep@corollary}}}
\makeatother
\newreptheorem{corollary}{Corollary}

\newcommand{\listtype}{\mathsf{List}}
\newcommand{\inttype}{\mathsf{Int}}
\newcommand{\sorted}{\mathit{sorted}}
\newcommand{\reverse}{\mathit{rev}}
\newcommand{\extend}{\mathit{extend}}
\newcommand{\evenlist}{\mathit{evenlist}}
\newcommand{\ins}{\mathit{insert}}
\newcommand{\head}{\mathit{head}}
\newcommand{\tail}{\mathit{tail}}
\newcommand{\isnil}{\mathit{isNil}}
\newcommand{\iscons}{\mathit{isCons}}
\newcommand{\keys}{\mathit{elems}}
\newcommand{\mem}{\mathit{mem}}
\newcommand{\cont}{\mathit{Contract}}

\newcommand{\Ss}{\mathcal{S}}
\newcommand{\Ff}{\mathcal{F}}
\newcommand{\Dd}{\mathcal{D}}
\newcommand{\Tt}{\mathcal{T}}
\newcommand{\Mm}{\mathcal{M}}
\newcommand{\Nn}{\mathcal{N}}

\newcommand{\defn}{\mathit{def}\!}
\newcommand{\defns}{\mathit{DEF}}
\newcommand{\ctor}{\mathit{ctor}}
\newcommand{\dtor}{\mathit{dtor}}
\newcommand{\tup}{\overline{t}}
\newcommand{\thstd}{\Tt_\mathit{std}}
\newcommand{\thcomb}{\Tt_\mathit{comb}}

\newcommand{\smtproc}{\textsc{QFreeSAT}}

\newcommand{\qialg}{\textsc{UQFR}}

\newcommand{\formulas}{\mathit{formulas}}
\newcommand{\unfolded}{\Dd\_\mathit{applications}}
\newcommand{\infunfold}{\mathit{Inf}}

\newcommand{\liquid}{\textsc{Liquid Haskell}\xspace}
\newcommand{\leon}{\textsc{Leon}\xspace}
\newcommand{\stainless}{\textsc{Stainless}\xspace}

\newcommand{\sys}{\textit{FLUID}\xspace}
\newcommand{\lh}{\textsc{Liquid Haskell}\xspace}
\newcommand{\slh}{\textsc{LH}\xspace}
\newcommand{\mypara}[1]{\smallskip\noindent\emph{\textbf{#1}.\ }}

\newcommand{\fdef}[1]{\defn_{#1}}
\newcommand{\instat}[2]{{#1}[#2]}
\newcommand{\tPeano}{\mathit{Peano}}
\newcommand{\tplus}{\mathit{plus}}
\newcommand{\tpred}{\mathit{pred}}
\newcommand{\tget}{\mathit{get}}
\newcommand{\tset}{\mathit{set}}
\newcommand{\tJust}{\mathit{Just}}
\newcommand{\tZ}{\mathit{Z}}
\newcommand{\tS}{\mathit{S}}
\newcommand{\isZ}{\mathit{isZ}}
\newcommand{\isS}{\mathit{isS}}
\newcommand{\tIte}[3]{\mathit{ite}(#1,\ #2,\ #3)}
\newcommand{\ie}{i.e.\xspace}
\newcommand{\eg}{e.g.\xspace}
\newcommand{\vcl}{\mathit{Defs}}
\newcommand{\vcr}{\mathit{\varphi}}
\newcommand{\vc}{\vcl \rightarrow \vcr}
\newcommand{\ivc}{instantiated VC\xspace}
\newcommand{\univ}{\mathcal{U}}
\newcommand{\intp}[1]{\mathcal{I}(#1)}

\newcommand{\strat}{\mathit{strat}}
\newcommand{\pathcond}{\mathit{Path}}

\newcommand{\DApp}{\mathit{DApp}}

\jfpVolume{38}
\jfpArticle{3}
\jfpDOI{12345.67890}
\jfpYear{2026}
\jfpMonth{1}

\received[Submitted]{2007-02-20}
\received[accepted]{2009-06-05}

\begin{document}

\title{Complete first-order reasoning for functional programs}

\author{Adithya Murali}
\orcid{0000-0002-6311-1467}
\affiliation{%
  \institution{University of Wisconsin-Madison}
  \city{Madison}
  \country{USA}
  \authoremail{adithyamurali@cs.wisc.edu}
}

\author{Lucas Pe\~{n}a}
\orcid{0000-0002-1898-439X}
\affiliation{%
  \institution{University of Illinois Urbana-Champaign}
  \city{Urbana}
  \country{USA}
  \authoremail{lucaspena13@gmail.com}
}

\author{Ranjit Jhala}
\orcid{0000-0002-1802-9421}
\affiliation{%
  \institution{University of California San Diego}
  \city{San Diego}
  \country{USA}
  \authoremail{rjhala@eng.ucsd.edu}
}

\author{P. Madhusudan}
\orcid{0000-0002-9782-721X}
\affiliation{%
  \institution{University of Illinois Urbana-Champaign}
  \city{Urbana}
  \country{USA}
  \authoremail{madhu@illinois.edu}
}

\begin{abstract}
Several practical tools for automatically verifying functional programs (e.g., \liquid and \leon for Scala programs) rely on a heuristic based on unrolling recursive function definitions followed by quantifier-free reasoning using SMT solvers. We uncover foundational theoretical properties of this heuristic, revealing that it can be generalized and formalized as a technique that is in fact \emph{complete} for reasoning with combined First-Order theories of algebraic datatypes and background theories, where background theories support decidable quantifier-free reasoning. The theory developed in this paper explains the efficacy of these heuristics when they succeed, explains why they fail when they fail, and the precise role that user help plays in making proofs succeed.
\end{abstract}

\maketitle

\section{Introduction}
\label{sec:introduction}

The automation of program verification has been revolutionized
with the advent of efficient \emph{logic engines} that check
validity of logical formulas over various theories that capture
domains that programs work with (arithmetic, strings, arrays,
algebraic datatypes, pointer-based heaps, etc.).
In particular, \emph{quantifier-free logics} over various
theories admit decidable validity checking, and further, permit
decision procedures for the combination of theories
(Nelson-Oppen style combinations) that have been realized
by efficient DPLL(T)-based SMT solvers~\cite{calcofcomp,z3,nelson80,nelson-oppen1979}.

However, automation's grip becomes tenuous when it comes to the
verification of first-order properties of \emph{functional programs}
over \emph{algebraic data types} (ADTs) such as lists or trees
over basic types like integers.
Functional programs over ADTs can be expressed mathematically using
a set of \emph{recursively defined functions} over types.
Programs hence can be expressed as a set of first-order definitions of functions $\vcl$ that are \emph{universally quantified} over their inputs.
%
The goal of verification, then, is to determine whether a particular
FO (First-Order) theorem $T$ involving these defined (or interpreted) functions is
mathematically valid under a set of definitions $\vcl$.

\mypara{Automation is impossible in theory}
Even though the theorem $T$ that needs to be validated is universally
quantified (and hence can be seen as a quantifier-free formula),
reasoning about the validity of $T$ under \emph{interpreted definitions} $\vcl$
is extremely hard.
The validity problem is not decidable (while validity of $T$ under
\emph{uninterpreted functions} \emph{is} typically decidable).
Worse, the problem is not even recursively enumerable (there is
no complete proof system nor a semi-decision procedure that is
guaranteed to terminate on at least all valid theorems).
A simple proof of this fact is that we can define addition and
multiplication as defined (interpreted) functions using recursion,
and use universal quantification to specify the neither decidable nor
recursively enumerable problem of determining the \emph{non-existence} of
solutions to Diophantine equations~\cite{Matiyasevich}.

\mypara{Automation is effective in practice}
Despite the above hardness, there has been significant progress
in systems that provide varying degrees of automation to the
process of verifying such theorems.
\lh (\slh)~\cite{Vazou18} and \leon/\stainless~\cite{BlancKuncakKneussSuter,stainless} both exploit the automation provided
by logic engines via decidable \emph{quantifier-free reasoning}
to prove FO theorems.
Extrinsic-style verification in \slh reduces checking quantifier-free (implicitly universal) properties
of functions over ADTs to proving pre- and post-condition contracts
that assert those properties in the code (``proofs'') written by the
verification engineer.
\footnote{\slh implements various algorithms including refinement inference. 
In this work, when we say \slh, we refer specifically to extrinsic-style full functional correctness proofs over user-defined functions using methods proposed in \cite{Vazou18}.}
The \leon verifier \cite{BlancKuncakKneussSuter} (as well as its successor \stainless~\cite{stainless}) uses a similar style of reasoning for Scala programs
with quantifier-free contracts, where the contracts themselves are written using recursively defined pure Scala functions.
\leon verifies each function's contract by compiling the
body of the function to a \emph{verification condition} (VC),
modeling functions called in the body using defined functions
and assuming they satisfy their contracts
\footnote{This is an inductive proof (induction on the
size of the implicit call stack) that all functions satisfy their contracts.}.
While \slh and \leon provide different mechanisms for users to
prove properties via induction and auxiliary lemmas, we observe
that they share a common fundamental interface to logic engines:
verification is reduced to proving VCs of the
form $\vc$ where $\vcr$ is universally quantified.
\slh treats functions defined in $\vcl$ as \emph{uninterpreted}
using a heuristic called \emph{logical evaluation} that finitely
unfolds the definitions for terms that appear in $\vcr$.
\leon's strategy is also to unfold the recursive definitions based on
function applications that occur in $\varphi$. However, it differs
from \slh in that it does this \emph{recursively}, unfolding definitions iteratively for larger and larger depths and assuming that such
unfolded calls to functions satisfy their contract.

To summarize, both tools automate verification via logical
engines by
(1) generating VCs of the form $\vc$ (where $\vcr$ is a universally quantified FO formula),
(2) treating all defined functions as largely uninterpreted,
(3) instantiating definitions repeatedly only on certain terms, and
(4) dispatching them to an SMT solver that does quantifier-free decidable reasoning.
This technique is certainly sound but clearly not a decision procedure:
\slh just makes a fixed set of instantiations, which may be insufficient;
\leon can continuously unfold definitions and may proceed forever (timeout). 
Yet, despite the hardness results, this heuristic works well
in practice, giving predictable results though they may require
users to find new inductive lemmas and guidance in proofs!

\smallskip
%
\emph{Why does the heuristic of unfolding recursive
definitions followed by quantifier-free reasoning work so
well in practice?}
In this paper, we establish \emph{foundational results} that this procedure is in fact a \emph{complete}
procedure for the underlying combination of first-order theories.
Our results not only explain \emph{when} this heuristic method
works well, but also explains when and why they \emph{fail},
and the role of further help asked of the user.

\subsection*{The standard model vs combined theories}

The answer to this question lies in the tension between
the theory of the \emph{standard model} and the \emph{combined FO theory} of the various sorts. 
First-order theorems that express properties of functional
programs can be seen as formulas over a \emph{combination of sorts},
in particular sorts that refer to ADTs (\eg trees) and the base
sorts (\eg integers) that the datatypes are built upon.
When a verification engineer wishes to prove a theorem,
they want it to be proven for the \emph{fixed} universe (the standard model) consisting of the various sorts.
In this universe, the ADT sort is the natural universe
of \emph{algebraic terms} of the appropriate type, with
constructors and destructors interpreted in the standard
manner, and the integer sort and functions over them
(\eg $+$) are interpreted in the standard manner.

\mypara{Axiomatized models}
In first-order logic, however, we often reason with models
that are \emph{axiomatized}: we capture various
properties of models using a finite or recursive set of axioms,
and reason over \emph{any} model that satisfies the axioms.
In particular, ADTs can be axiomatized and the
universe of integers with addition can be axiomatized.
In fact, they can be individually given \emph{complete axiomatizations}--- \ie, all models satisfying the axioms satisfy the same first-order theorems as the standard model~\cite{barwise1982handbook,malcev62,hodges97,bjornerthesis,kovacs17,presburgertranslation2}. There may be other models, called \emph{nonstandard models}, that are not isomorphic to the standard model (in fact they always exist, say, by the L\"owenheim-Skolem Theorem) but one cannot distinguish them using a first-order formula. Nonstandard models are well-known in the literature~\cite{hodges97,skolem1934-nonstandard-models}.

\mypara{Rogue nonstandard models that disagree with the standard model}
However, when we combine universes and their theories, interfacing them with uninterpreted functions, the combined axioms are no longer powerful enough. 
More precisely, it is well known that the combined axioms can admit \emph{rogue} nonstandard models that disagree with standard models on first order expressible theorems, and hence the theory entailed by the combined axioms becomes weaker. Rogue nonstandard models are a special case of nonstandard models of the combined theory that disagree with the standard model on some first-order formulas. For example, if we take the complete axiomatization of ADTs and the complete axiomatization of uninterpreted functions (congruence axioms) and combine them, the union of the axioms admits rogue nonstandard models of ADTs that contradict theorems true in the standard model of ADTs with uninterpreted functions. \rev{For example, if we think of recursively defined functions over ADTs, the standard model would interpret them according to the least fixpoint of the definition, whereas nonstandard models could interpret them by arbitrary fixpoints\footnote{This is an only an intuitive explanation. As we will see in Section~\ref{sec:fluid}, we do not need any notion of least fixpoint to express definitions in the standard model. The high level idea is that since the standard model of ADTs consists of finite trees, well-behaved recursive functions will eventually recurse on smaller and smaller trees until they terminate, and First-Order Logic therefore suffices to capture their semantics accurately.}.}
%
%
%

Nonstandard models exist even for complete theories, but there are no rogue models in such theories since, by the definition of completeness, all nonstandard models agree with the standard model on all FO expressible theorems. Combined theories are incomplete as there are rogue nonstandard models. 
However, \emph{quantifier-free} formulas over combinations of theories (using the
Nelson-Oppen method) do not suffer from such issues,
which is why we can think of validity procedures for them as provers for the standard model.

\subsection*{Main contributions}

Our central insight in this paper is that
the method of unfolding recursive definitions
and performing quantifier-free reasoning can \emph{always} prove and \emph{only} prove the subset of theorems that
are valid over the \emph{combined theory of ADTs
and the background sorts}. Consequently, it \emph{cannot} prove theorems that are valid in the standard model but invalid in a rogue nonstandard model.
%
%
We develop this insight to explain the unusual
effectiveness of unfolding recursive definitions
into uninterpreted function applications,
via four concrete contributions.

\mypara{1. A \sys logic (\S~\ref{sec:fluid})}
Our first contribution is the definition of a logic called
\sys (First-order Logic for Universal properties under Inductive Definitions)
that captures the essence of definitions\footnote{\slh and \leon also support higher-order functions, but such definitions are beyond the scope of this paper. We provide further discussion on higher-order functions in Section~\ref{sec:discussion}.} and VCs
generated by \slh\footnote{In consultation with the developers of \slh, we believe that \sys captures all VCs generated by \slh for extrinsic style proofs using refinement reflection!} and \leon. 
\sys formulas are of the form $\vc$ where $\vcl$ are \emph{provably acyclic} recursive definitions, and $\varphi$ is a universally quantified formula. \rev{Intuitively, provably acyclic definitions are those for which one can prove termination in the usual sense using ranking functions, arguing that recursive calls decrease the rank (the subtlety being that we do not require ranks to be well-founded; see Section~\ref{sec:fluid}).} 
Verification conditions for correctness of many functional programs can thus be formulated using \sys formulas; in fact, systems like \slh and \leon generate VCs that are in this logical fragment. 

\mypara{2. Completeness of Unfolding followed by Quantifier-Free Reasoning (UQFR) (\S~\ref{sec:qialg})}
\qialg~ is a technique for proving validity by modeling recursive functions as uninterpreted functions, unfolding recursive definitions $\vcl$ systematically on a class of ground terms, and reasoning with the resulting quantifier-free formulae using decision procedures.
Our second contribution is a foundational result that shows
that \qialg~ is a
\emph{complete} semi-decision procedure for the validity of \sys formulas over the combined first-order theory of ADTs and background sorts. Namely, \qialg~ guarantees
to prove all theorems that are valid in the combined FO theory.
Consequently, when a theorem that is valid over the
standard model is \emph{not} proven using this technique,
we are guaranteed that there is a rogue nonstandard model
(satisfying the ADT and background theories)
where the theorem does not hold.
The proof of completeness is nontrivial
for two reasons.
First, the unfoldings of recursive definitions that
\qialg~ uses (and tools such as \slh and \leon use) are  \emph{thrifty}; they instantiate definitions of functions only on terms on which they are called, and do not expand instantiations to terms that arise from the underlying axiomatizations of theories. 
%
Second, every time a theorem that is valid on the
standard model is \emph{not} proven, it is nontrivial to 
construct a rogue nonstandard model falsifying the theorem.  The model construction in the proof of this theorem crucially exploits the fact that \sys definitions are provably acyclic. 
%

\mypara{3. Completeness in practice (\S~\ref{sec:lh})}
Thus, far from being a whimsical heuristic that happens
to work in practice, \qialg~ is rather a robust procedure
whose completeness may explain why this heuristic performs
so predictably well.
In particular, it does not miss proving theorems that can
be proved using pure FO reasoning of the underlying axioms
of the theories.
Our third contribution shows how this bears out in practice.
We explain how \slh performs \sys verification using \qialg.
Crucially, when theorems are not proved valid, we show it
is because rogue nonstandard models exist, and that the lemmas
and induction hints provided by the user then serve to
eliminate such models, all while reasoning within the \sys fragment.
%
Next, we show how we can use a slightly different \sys formula
to mimic \leon's more sophisticated reasoning which
additionally assumes pre/post contracts for functions
at each unfolding. Hence, our completeness
result also applies to explain the effectiveness of \leon (\S~\ref{sec:leon}).

\mypara{4. Limits of \sys (\S~\ref{sec:boundarythms})}
Our final contribution is a set of results that show why our results on \sys are unlikely to extend to more expressive logics.
We show though the validity problem for \sys admits complete procedures, it is \emph{undecidable}, hence distinguishing it from  several decidable fragments identified in the literature (\eg, \cite{suter10}).
We also show that attempts to generalize \sys, \eg by allowing functions whose definitions are required to be terminating (but not \emph{provably} so, using FO proofs) makes \qialg~ \emph{not} complete. This result also implies that replacing definitions with arbitrary universally quantified formulas makes \qialg~ an incomplete procedure.






\rev{
\smallskip
\mypara{Note on extended material} This manuscript extends a publication that appeared at OOPSLA 2023 under the title `Complete First-Order Reasoning for Properties of Functional Programs'. In addition to the material in the prior publication, this manuscript contains more detailed discussions on key technical concepts, formal constructions and intuitive descriptions of several rogue nonstandard models referenced in the prior publication as well as a few new rogue models, detailed proofs of all the theorems, and multiple explanatory discussions connecting the contributions of this paper to prior art. One significant addition was the formalization of the claims made in Section~\ref{sec:leon} and the addition of a `walkthrough' of a \leon-style proof of a VC, showcasing a rogue nonstandard model with a nontrivial construction that arises in this case. We thank the anonymous referees for their help in shaping this manuscript.
}
\section{Overview}
\label{sec:overview}

In this section we provide an overview of our work, which defines
\sys, a fragment of First-Order Logic (FOL) that expresses the quantified verification conditions
that arise when verifying correctness properties of functional programs
(see Section~\ref{sec:fluid}), and show how these VCs can be proved valid
by a \emph{thrifty} unfolding of definitions followed by quantifier-free reasoning.
Additionally, we show in \S~\ref{sec:lh} and \S~\ref{sec:leon} that verification
in systems like \slh and \leon reduces to checking validity of \sys fragment formulas.
We illustrate these ideas via example programs over the datatype of lists over integers:

\begin{code}
data List = Nil | Cons Int List
\end{code}




\subsection{Insertion and sortedness}
\label{sec:insert-example}

Consider the following program that inserts an element into
a (sorted) list 

\begin{code}
sorted :: List -> Bool
sorted Nil                    = True
sorted (Cons h Nil)           = True
sorted (Cons h1 (Cons h2 tl)) = h1 <= h2 && sorted (Cons h2 tl)

insert :: List -> Int -> List
insert Nil k                     = Cons k Nil
insert (Cons x xs) k | x >= k    = Cons k (Cons x xs)
                     | otherwise = Cons x (insert xs k)
\end{code}

\mypara{Definitions}
We can encode the above Haskell programs in FOL where each function's definition introduces no new variables, instead using destructors ($\head,\tail$) and recognizers ($\isnil,\iscons$) to simulate pattern matching. To ensure that destructors are applied sensibly, we \emph{guard} the use of terms of the form $\head(t)$ and $\tail(t)$ with the recognizer $\iscons(t)$.

$\forall x:\mathsf{List}.\, {\sorted}(x) = \;\ite(\isnil(x),\mathit{True},$

$\textrm{\hspace{8.5em}\;\;} \ite(\isnil(\tail(x)),\mathit{True},$

$\textrm{\hspace{15.5em}\;\;}\head(x) \leq \head(\tail(x)) \land \sorted(\tail(x))))$

$\forall x: \mathsf{List},\, k: \mathsf{Int}.\, \ins(x, k) = \;\ite(\isnil(x), \cons(k, \nil),$

$\textrm{\hspace{12.25em}\;\;\,}\ite(\head(x) \geq k, \cons(k, x), \cons(\head(x), \ins(\tail(x),k))))$
%

\noindent
where we treat $\sorted$ and $\ins$ as uninterpreted functions in the signature. 
We refer to these formulae as the \emph{definitions} of $\sorted$ and $\ins$ and denote them by $\defn_\sorted$ and $\defn_\ins$ respectively.

\mypara{Verification conditions}
Let us consider the example of verifying that inserting an element $k$ into the empty list yields a sorted list.
We state this formally as the following \emph{verification condition} (VC):

\centerline{
$\left(\defn_\sorted \land \defn_\ins\right) \rightarrow \sorted(\ins(\nil,k))$
}
Note that this VC is of the form $\defns \rightarrow \varphi$, where $\defns$ is a set of definitions (when it appears in a formula we are referring to the conjunction of the formulas in the set) and $\varphi$ is quantifier-free, \ie, all variables are
implicitly universally quantified. Informally, the VC says that the property $\varphi$ should
hold \emph{assuming} the set of definitions $\defns$. The \sys fragment we define (see Section~\ref{sec:fluid}) consists of such formulas.

\mypara{Unfolding}
We prove the above VC valid by \emph{unfolding} the definitions. 
For a term $t$, let $\defn_{\sorted}[t]$ denote the quantifier-free formula
obtained by instantiating the quantified variable $x$ in $\defn_{\sorted}$ with $t$. We refer to this as unfolding the definition of $\sorted$ on $t$. Similarly we can define the unfolding $\ins[\tup]$.
To prove the VC valid we simply unfold definitions on arguments that occur in $\varphi$,
\ie, we attempt to prove
\begin{center}
    $\left(\defn_{\ins}[(\nil,k)] \land \defn_{\sorted}[\ins(\nil,k)]\right) \rightarrow \sorted(\ins(\nil,k))$
\end{center}
This formula can be dispatched using SMT solvers~\cite{z3,cvc4}
that use a combination of decision procedures for ADTs and Integers. It is in fact valid because unfolding $\defn_\ins$ on $(\nil,k)$ shows that $\ins(\nil,k)$ evaluates to $\cons(k,\nil)$, and unfolding $\defn_\sorted$ on $\cons(k,\nil)$ shows that $\sorted(\ins(\nil,k))$ evaluates to $\mathit{True}$.

We generalize this technique of Unfolding definitions followed by Quantifier-Free Reasoning
into an algorithm \qialg~ (Section~\ref{sec:qialg}), and argue that tools like
\liquid~ (Section~\ref{sec:lh}) and \leon (Section~\ref{sec:leon}) perform similar reasoning on such formulas.

\subsection{Insertion preserves sortedness}
\label{sec:insert-example-contracts}

Next, let us turn to a more interesting theorem, namely that 
insertion preserves sortedness.
Formally, we wish to prove the following \emph{contract} for insertion:
$\forall x, k. \sorted(x) \rightarrow \sorted(\ins(x, k))$. 
%
Here the VC is 
$\mathit{VC}_{\mathit{simple}} \equiv \left(\defn_\sorted \land \defn_\ins\right) \rightarrow (\sorted(x) \rightarrow \sorted(insert(x,k)))$

Unlike the example in Section~\ref{sec:insert-example}, it turns out that there is no set of terms such that unfolding the definitions on these terms can prove the VC valid. Consequently, \slh fails to prove the theorem.

\mypara{Using contracts} Tools like \leon not only unfold definitions but also use contracts for terms generated during unfolding. For example, note that unfolding $\defn_\ins$ on $(x,k)$ yields the term $\ins(\tail(x),k)$. 
Then, the VC that \leon attempts to prove is not $\mathit{VC}_{\mathit{simple}}$ but rather  
\begin{align*}
    \mathit{VC}_{\leon} \equiv \left(\defn_\sorted \land \defn_\ins\right) \rightarrow \big(\left(x \neq \nil \right. & \left. \rightarrow \left(\sorted(\tail(x))\rightarrow \sorted(\ins(\tail(x),k))\right)\right)\\
    &\rightarrow \left(\sorted(x) \rightarrow \sorted(\ins(x,k))\right)\big)
\end{align*}

\noindent
which additionally assumes the contract for $\ins(\tail(x),k)$ (when $x \neq \nil$). Observe that this VC is also of the form $\defns \rightarrow \varphi$ and is therefore in the \sys fragment. We show in Section~\ref{sec:leon} that $\mathit{VC}_{\leon}$ can be obtained automatically from the original VC, \ie, $\mathit{VC}_{\mathit{simple}}$. 

We attempt to prove $\mathit{VC}_{\leon}$ using the same technique of unfolding recursive definitions on arguments appearing in the formula (\qialg). This succeeds, and one can verify that unfolding $\ins$ on $\{(x,k), (\tail(x),k)\}$ and unfolding $\sorted$ on $\{x,\tail(x),\ins(x,k), \ins(\tail(x),k)\}$ proves $\mathit{VC}_{\leon}$ valid\footnote{The reader may note here that we only argue the validity of $\mathit{VC}_{\leon}$ and not the original goal $\mathit{VC}_{\mathit{simple}}$. We discuss why validity of the former implies validity of the latter in Section~\ref{sec:leon}.}. Observe that the unfolding strategy used in the examples we have seen is \emph{thrifty} in the sense that definitions are unfolded \emph{exactly} on terms that occur as arguments to the corresponding functions. We discuss the utility of this strategy in Section~\ref{sec:discussion}.

It is clear that using contracts is a more powerful approach. In general, there are theorems whose proofs require even more instantiations of contracts on terms obtained during further unfoldings. We show in Section~\ref{sec:leon} using a reduction that the use of multiple repeated instantiations of definitions as well as contracts can also be viewed as proving \sys fragment formulas using \qialg. Consequently, our results apply not only to \slh but also to tools like \leon.

\subsection{Membership in a sorted list}
\label{sec:sorted-list-membership}

One prominent aspect of program verification in \slh or \leon is proof by induction. However, induction is \emph{not} part of \qialg. In this section we discuss the example of checking membership in a sorted list where all the above approaches fail, and explain the role of induction (in the form of explicit user help) in these tools. We first define a function $\keys$ to capture the set of elements stored in a list and a function $\mem$ that checks the membership of an element in a sorted list.



    $\forall x:\listtype.\, \keys(x) = \ite(\isnil(x), \emptyset, \{\head(x)\} \cup \keys(\tail(x)))$
    
$\forall x:\listtype,\,k:\inttype.\, \mem(x,k) = \ite(\isnil(x), \mathit{False},\,\ite(k = \head(x),\mathit{True},$

$\textrm{\hspace{19em}\;\,\,\,}\ite(k < \head(x), \mathit{False},$

$\textrm{\hspace{26em}\;\;}\mem(\tail(x),k))))$



We want to verify that $\mem$ precisely captures membership for sorted lists. Formally, the contract is:\; $\sorted(x) \rightarrow (\mem(x,k) \leftrightarrow k \in \keys(x))$

However, the approaches discussed above do not work for this example. They do not succeed even if the definitions are unfolded infinitely and contracts are assumed for all of the infinitely many terms/tuples that occur in the unfoldings.

To see why this is the case, consider what happens when we replace the usual \emph{standard} model of ADTs and Integers we have in our minds with complete axiomatizations for each of the sorts, along with congruence axioms for the function symbols $\keys$ and $\mem$. In this setting, the standard model is only one of the possible models and in general a model of the axiomatized universes may not be identical to the standard model. 
Validity in the axiomatized setting is an under-approximation to validity in the standard model, and as we show in Section~\ref{sec:combined-theories} it is in fact a strict under-approximation. There are theorems that are true on the standard model that do not hold under axiomatization. This is because of the presence of \emph{rogue nonstandard models} 
where the property we want to prove is not true. A rogue nonstandard model is a model that obeys the axioms but is not identical to the standard model, and further, falsifies the desired theorem. Nonstandard models always exist in the axiomatized setting, but they may satisfy all the same first-order properties as the standard model using a first-order formula. However, \emph{rogue} nonstandard models, when they exist, can disagree with the standard model on a desired first-order theorem.

Our soundness and completeness results in Section~\ref{sec:qialg} show that the proving power of unfolding definitions and using contracts is \emph{precisely} that of validity over the axiomatized universe. Therefore, if there is a rogue nonstandard model that falsifies a property, then unfolding based reasoning \emph{cannot} prove it. Indeed, both \slh and \leon fail on the above example without extra help.

\mypara{Rogue nonstandard model} Let us look at the rogue nonstandard model where our theorem does not hold. 
The universe $U$ is:
\begin{align*}
&\{s \mid s \textrm{ is a finite sequence of integers}\}\\
\cup\,& \{(s,i) \mid s \textrm{ is an infinite sequence of integers}, i \textrm{ is an integer}\}   
\end{align*}
%
%
The finite sequences correspond to ADT lists of integers as we think of them but the infinite sequences are \emph{nonstandard elements}\footnote{The standard model of ADTs consists exactly of all terms. \emph{Nonstandard elements} are elements in a nonstandard model that do not correspond to any term. In particular, one cannot destruct them a finite number of times to reach $\nil$. Nonstandard models always have such elements with ``infinite tails''.}. $\nil$ is interpreted to be the empty sequence and $\cons$ behaves as expected on standard elements (prepending an element to a finite sequence). On the nonstandard elements $\cons$ is defined by $\cons(j,(s,i)) = (j::s,i+1)$ where $j::s$ denotes prepending $j$ to the sequence $s$. $\head$ and $\tail$ behave as inverses to $\cons$ in the usual sense. One can check easily that this model satisfies the usual axioms of ADTs~\cite{hodges97,bjornerthesis}.

The meaning of $\sorted$ on this model is as expected: we define only elements with non-decreasing sequences to be sorted. 
The definition of $\keys(x)$ is as follows
\begin{center}
$\keys(x) = \begin{cases}
  \{ v \,\mid\, \text{$v$ is an element of $x$} \} \quad &\text{for a standard element } x \\
  \{ v \,\mid\, \text{$v$ is an element of $x$} \} \cup \{-1\} \quad &\text{for a nonstandard element } (x,i)
\end{cases}$
\end{center}

Lastly $\mem(x, k)$ holds if and only if $k$ occurs in the longest non-decreasing prefix of the sequence corresponding to $x$. Note that if $x$ is sorted, the longest non-decreasing prefix of $x$ is $x$ itself.


The above interpretations are consistent with the definitions. Consider the function for $\keys$, for example. On standard elements it is consistent with the definition because it has the expected value. It is also consistent on nonstandard elements. Observe that for a nonstandard element $x$, $\tail(x)$ is also a nonstandard element. Therefore, the inclusion of an extraneous element $-1$ in the $\keys$ of both $x$ and $\tail(x)$ is consistent with the recursion $\keys(x) = \{\head(x)\} \cup \keys(\tail(x))$.

Finally, we see this is a rogue nonstandard model as it does not satisfy the property $\sorted(x) \rightarrow (\mem(x,k) \leftrightarrow k \in \keys(x))$. 
Consider the nonstandard element $x = ([0,0,0\ldots],0)$. Note that $x$ is sorted since it is a non-decreasing sequence, and $\keys(x) = \{0,-1\}$ by the above construction. Hence $-1 \in \keys(x)$. However $\mem(x,-1) = \mathit{False}$ since $-1$ does not occur in $x$.


\mypara{Role of user help} To prove the above example in \slh, one must provide additional hints or \emph{inductive lemmas} (whose proof of the induction step is itself performed using unfolding/\qialg)\footnote{\leon~ is able to verify $\mem$, but it does so using a heuristic for \emph{structural induction} rather than its primary algorithm of instantiating definitions and contracts. There are other examples involving list reversal where \leon~ also requires lemmas to deal with rogue nonstandard models that fail the theorem (see Section~\ref{sec:leon}).}. We show in Section~\ref{sec:lh} that these lemmas eliminate rogue nonstandard models like the one shown above, and therefore enable the VC to be proven using unfolding techniques. There is also work in recent literature~\cite{reynolds15,Weikun19,fossil,Sivaraman2022-LemmaSynthsesisCoq} on automatically synthesizing lemmas.

\mypara{Rogue nonstandard models of integers} It is tempting to think that the above difficulties can be avoided by stating a constraint that lists are finite, i.e., there must exist a non-negative integer corresponding to the length. However, this does not work. This is because there exist \emph{rogue nonstandard models of the integers} containing elements considered `non-negative' by the model's interpretation but do not correspond to an integer (i.e., decrements do not reach $0$). The lengths of infinite lists would be interpreted to such nonstandard numbers, and we would still need user help.

\section{Preliminaries} 
\label{sec:preliminaries}

In this section we define the general setting of multi-sorted first-order logic over algebraic datatypes (ADTs) and other base types with recursively defined functions. We define \sys, our logic of study, as a fragment of this logic in Section~\ref{sec:fluid}.


\subsection{Syntax and semantics}
\label{sec:syntax-semantics}

The logic we work with is defined over a finite set of disjoint nonempty sorts $\Ss$.  
We distinguish certain sorts among these as \emph{foreground} sorts. The foreground sorts support a signature of Algebraic Datatypes (ADTs) which we describe below. The other sorts are referred to as \emph{background} sorts (background sorts could also consist of ADTs). 

An ADT signature for a sort $\sigma$ consists of a finite set of function symbols $\textit{ctor}_i, 1 \leq i \leq m$ called \emph{constructors}. Each constructor has an arity $r_i \geq 0$ and a signature $\sigma_1 \times \sigma_2 \times \ldots \times \sigma_{r_i} \rightarrow \sigma$, where $\sigma_j, \sigma \in \Ss$. 
Corresponding to each constructor with the above signature, we also have $r_i$ many \emph{destructors} $\dtor_{ij}$ with signature $\sigma \rightarrow \sigma_j$ for $1 \leq j \leq r_i$, and \emph{recognizers} $\mathit{is\_ctor}_i$ with signature $\sigma \rightarrow \mathit{Bool}$.

For example, the algebraic datatype of lists over natural numbers $\mathit{List Nat}$ is defined by the nullary constructor $\mathit{nil}: \mathit{List Nat}$ and the binary constructor $\cons: \mathit{Nat} \times \mathit{List Nat} \rightarrow \mathit{List Nat}$ whose corresponding destructors are $\head: \mathit{List Nat} \rightarrow \mathit{Nat}$ and $\tail: \mathit{List Nat} \rightarrow \mathit{List Nat}$. The recognizer $\mathit{is\_cons}$ identifies elements that correspond to non-nil lists. Note that standard pattern matching idioms for ADTs used in functional programs can be expressed using this vocabulary. 

We can also define hierarchical datatypes (e.g., lists of lists of integers), mutually recursive datatypes (e.g., terms corresponding to a context-free grammar), as well as sum (unions) and product types (tuples). We cannot define co-inductive datatypes such as infinite lists in our logic. However, we do not lose generality with respect to the various tools studied in this paper; in fact, \slh's termination checker precludes the creation of values like infinite lists.

\smallskip
\noindent Our logics have signatures of the form $\Sigma = (\Ss, \Ff, \Dd)$, where:

\begin{itemize}
    \item $\mathcal{S}$ is a finite non-empty set of sorts as defined above with a partitioning of sorts into a set of foreground ADT sorts and a set of background sorts. We require that there is at least one foreground sort.
    \item $\Ff$ is a set of constant, function, and relation symbols over the sorts $\Ss$. These will be used to model symbols over the sorts that models give interpretations to. These include 
    functions like integer addition or set union, and constructors, destructors, and recognizers over ADT sorts. 
    \item $\Dd$ is a set of function symbols distinct from $\Ff$ that will be used to model functions that have \emph{definitions}. 
\end{itemize}

The syntax is standard multi-sorted first-order logic over sorts $\Ss$ and over symbols $\Ff \cup \Dd$. We make two modifications. 
First, we require that every occurrence of a destructor term $\dtor_{ij}(t)$ is \emph{guarded} by the corresponding recognizer $is\_\ctor_i(t)$ to ensure that destructor terms are well-defined. We do not lose generality as any formula with well-defined destructor terms can be rewritten to an equivalent one with the appropriate guards. In practice, tools check that $is\_\ctor_i(t)$ holds by generating a separate Verification Condition. 
Second, we 
allow $\ite$ ($\mathit{if}\!-\!\mathit{then}\!-\!\mathit{else}$) expressions over terms and formulas. The semantics of our formulas is the standard one for FOL. We refer the reader to a standard reference text~\cite{enderton} for the notion of first-order logic, first-order models, syntax, and semantics. Semantics is defined in terms of models (aka structures) that give interpretation to all symbols, including those in $\Dd$. We use the notation $M \models \varphi$ to denote that a sentence $\varphi$ evaluates to \emph{true} in a model $M$, and 
$\varphi \models \psi$ to denote semantic entailment (all models satisfying $\varphi$ also satisfy $\psi$).


\mypara{Inductive definitions} Intuitively, a \emph{definition} of $D$ (for $D \in \Dd$) gives a particular interpretation for $D$. 
The definition of a function $D \in \Dd$ of arity $r$ is a quantified formula $\defn_D$ of the form 

\centerline{
$\forall x_1,x_2,\ldots,x_r.\, D(x_1,x_2,\ldots, x_r) = \rho(x_1,x_2,\ldots,x_r)$}    

\noindent where $\rho$ is a quantifier-free formula over $x_1$ through $x_r$ called the \emph{body} of the definition. Of course, the body may use other inductively defined symbols $G \in \Dd$. We require that every function in $\Dd$ has exactly one definition.

In order to obtain well-defined definitions, we demand a notion of termination. We define this notion using the \emph{standard model} of our logic, which we introduce in the next section.









\subsection{The standard model}
\label{sec:standard-model}



The intended standard interpretation of an ADT signature is the initial term algebra where the universe consists of terms that respect the sorts and the interpretation of constructors is that of term application, i.e., $\llbracket\textit{ctor}_i\rrbracket(e_1, \ldots, e_{r_i}) = \textit{ctor}_i(e_1, \ldots, e_{r_i})$.


The destructors are interpreted as $\llbracket\dtor_{ij}\rrbracket(\ctor_i(e_1, \ldots, e_{r_i})) = e_j$
and is otherwise interpreted to be a default value on other elements~\footnote{Since we consider only formulas that are guarded to check elements to be of the right sort before applying destructors, the semantics of the formula on other elements is irrelevant.}. Finally, recognizers are only true on terms constructed with the corresponding constructor: $\llbracket\mathit{is\_ctor}_i\rrbracket(\ctor_i(e_1,e_2,\ldots e_{r_i}))  = \mathit{True}$, and is $\mathit{False}$ for other elements.

More generally, our logic is parameterized by a \emph{standard model} $\Mm_{\Ss,\Ff}$ of the foreground and background sorts. 
This is typically true of sorts used in program verification: ADTs, integers, sets, etc. 
Note that this model does not give interpretations to functions in $\Dd$.


We require that inductive definitions are terminating on the standard model using a standard \emph{eager} semantics~\cite{WinskellSemantics}. Informally, we evaluate a definition on concrete elements over the standard model as follows: (i) we evaluate recursively defined function terms by evaluating the definition on the arguments; (ii) for $\ite$ expressions, we evaluate the conditions first and then only evaluate the appropriate branch; (iii) for all other expressions, we first evaluate all recursively defined function terms (with subterms evaluated before their superterms) and then evaluate the expression. A terminating definition is one for which this procedure terminates on all inputs. 

The following proposition states that over $\Mm_{\Ss,\Ff}$, there exists a unique valuation for the defined functions $\Dd$ that is consistent with their definition.

\begin{proposition}
\label{prop:standard-model-unique-defs}
Given $(\Ss, \Ff, \Dd)$ let $\defns =\{\defn_D \,\mid\, D \in \Dd\}$ be a set of definitions. There exists a unique model $\Mm_{\Ss,\Ff,\Dd}$ such that the interpretation of symbols in $\Ff$ coincides with $\Mm_{\Ss,\Ff}$ and interpretations of symbols in $D \in \Dd$ satisfy their definitions, i.e., $\Mm_{\Ss,\Ff,\Dd} \models \defn_D$. 
\end{proposition}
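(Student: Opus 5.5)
Existence and uniqueness follow from the termination requirement on the standard model: we build the interpretation of each $D \in \Dd$ from the eager evaluation procedure, show it satisfies $\defn_D$, and show any interpretation satisfying all definitions must agree with it. Throughout, $\Mm_{\Ss,\Ff}$ is fixed and we only choose interpretations of the symbols in $\Dd$.

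\emph{Existence.} For every $D \in \Dd$ of arity $r$ and every tuple $\bar e$ of elements of the right sorts, the evaluation procedure on $D(\bar e)$ terminates by assumption and returns a value $v_{D,\bar e}$; we set $\llbracket D \rrbracket(\bar e) = v_{D,\bar e}$. This is a function because the procedure is deterministic. To check $\Mm_{\Ss,\Ff,\Dd} \models \defn_D$, we observe that evaluating $D(\bar e)$ consists of one step that replaces it by $\rho(\bar e)$, followed by the evaluation of $\rho(\bar e)$. We then prove, by induction on the length of the terminating evaluation run, that whenever the procedure evaluates an expression $\tau$ (with elements substituted) to a value $v$, we have $\llbracket \tau \rrbracket = v$ in $\Mm_{\Ss,\Ff,\Dd}$. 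For a nested call $G(\bar u)$ this holds by definition of $\llbracket G\rrbracket$. For an $\ite$, the procedure evaluates only the taken branch, which agrees with the semantic $\ite$. For $\Ff$-symbols, the procedure applies the standard interpretation to already-computed argument values. Applying this claim to $\rho(\bar e)$ gives $\llbracket D\rrbracket(\bar e) = \llbracket \rho(\bar e)\rrbracket$.

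\emph{Uniqueness.} Let $M'$ be any expansion of $\Mm_{\Ss,\Ff}$ satisfying all the definitions. We show $M'$ agrees with the constructed model on every $D(\bar e)$, by induction on the length (number of steps) of the terminating evaluation run for $D(\bar e)$. Since $M' \models \defn_D$, we have $D^{M'}(\bar e) = \rho^{M'}(\bar e)$. Every recursive call $G(\bar u)$ actually evaluated while computing $\rho(\bar e)$ has a strictly shorter run, so by the induction hypothesis $G^{M'}(\bar u)$ equals the value the procedure computed. Calls that occur in untaken $\ite$ branches do not affect the value of $\rho^{M'}(\bar e)$, because the guard evaluates identically in both models (again by induction). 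Hence $\rho^{M'}(\bar e)$ equals the procedure's value.

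\emph{Main obstacle.} The delicate point is the induction measure: we must be careful that the argument uses only calls that are \emph{actually evaluated} under the eager, $\ite$-lazy semantics, not every call syntactically present in $\rho$. Doing so requires a precise (e.g., big-step) formalization of the evaluation procedure so that "the run for a subcall is a proper sub-derivation" is literal. I would define evaluation as a big-step derivation relation and induct on derivation height. Termination then says that derivations exist and are finite, and determinism gives uniqueness of the value.
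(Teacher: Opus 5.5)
Your proof is correct, and it is essentially the paper's approach. The paper gives no separate proof: it presents the proposition as a direct consequence of its requirement that every definition terminates under eager evaluation on all inputs of the standard model. That is exactly the hypothesis you use, once to make each $\llbracket D\rrbracket$ total, so that untaken $\ite$ branches still have values, and once to get finite derivations for the uniqueness induction, which correctly counts only the calls actually reached.
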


Observe that functional programs can be modeled using definitions (we only consider terminating programs, of course). Then, verifying universal FO properties of functional programs, say $\varphi$, can be modeled as checking the validity of formulas of the form $\defns \rightarrow \varphi$. As a slight abuse of notation we use $\defns$ in formulas to mean the conjunction of formulas in the set of definitions $\defns =\{\defn_D \,\mid\, d \in \Dd\}$. 

However, as discussed in Section~\ref{sec:introduction} it is easy to show that the problem of validity of even quantifier-free formulas on $M_{\Ss,\Ff,\Dd}$ is \emph{not recursively enumerable}.

\begin{proposition}[Incompleteness Theorem for the Standard Model]
\label{prop:standard-model-incompleteness}
There exists a sort $\sigma$ with an ADT signature $\Ff$ and defined functions $\Dd$ such that checking $\Mm_{\{\sigma\},\Ff,\Dd} \models \varphi$ is not recursively enumerable for quantifier-free $\varphi$.
\end{proposition}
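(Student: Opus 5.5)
We reduce from the complement of Hilbert's tenth problem, as suggested in the introduction. Take the single foreground sort $\sigma$ to be the Peano naturals, with nullary constructor $\tZ$, unary constructor $\tS$, destructor $\tpred$ and recognizers $\isZ,\isS$. In $\Dd$ we place $\tplus$ and $\mathit{times}$ with the usual terminating definitions
\[
\forall x,y.\ \tplus(x,y)=\tIte{\isZ(x)}{y}{\tS(\tplus(\tpred(x),y))},
\]
and $\mathit{times}(x,y)=\tIte{\isZ(x)}{\tZ}{\tplus(y,\mathit{times}(\tpred(x),y))}$. Each recursive call is on $\tpred(x)$, so eager evaluation terminates on every standard input. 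By Proposition~\ref{prop:standard-model-unique-defs}, in $\Mm_{\{\sigma\},\Ff,\Dd}$ these symbols are interpreted as genuine addition and multiplication on $\mathbb{N}$, identified with the numerals $\tS^n(\tZ)$.

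Given a Diophantine equation $p(\bar x)=q(\bar x)$ with $p,q$ polynomials with natural-number coefficients, we translate $p$ and $q$ into terms $t_p,t_q$ built from $\tplus$, $\mathit{times}$ and numerals; integer equations reduce to this form by moving negative terms across. We then set $\varphi \equiv t_p \neq t_q$, which is quantifier-free with free variables implicitly universally quantified. By the previous paragraph, $\Mm_{\{\sigma\},\Ff,\Dd}\models\varphi$ holds iff the equation has no solution in $\mathbb{N}$.

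The map from equations to $\varphi$ is computable. If validity in $\Mm_{\{\sigma\},\Ff,\Dd}$ were recursively enumerable, then non-solvability of Diophantine equations would be r.e. Solvability is already r.e., since one can enumerate candidate tuples, so solvability would then be decidable. This contradicts Matiyasevich's theorem~\cite{Matiyasevich}. Hence validity is not r.e.

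The main point needing care is the claim that the unique model of Proposition~\ref{prop:standard-model-unique-defs} really interprets $\tplus$ and $\mathit{times}$ as arithmetic. This follows by a routine induction on the first argument over standard numerals. Everything else is bookkeeping. One small check is that the destructor $\tpred$ is guarded by the recognizer inside the $\ite$, which the syntax requires.
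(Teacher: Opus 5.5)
Your proposal is correct and matches the argument the paper has in mind. The paper gives no separate proof, but its introduction sketches exactly this reduction: define addition and multiplication as recursive functions, express non-existence of solutions to a Diophantine equation as a universally quantified (quantifier-free) formula, and invoke Matiyasevich's theorem. Your version fills in the details the paper leaves implicit: the Peano sort, the concrete definitions of $\tplus$ and $\mathit{times}$, the computable reduction, and the induction showing that the unique standard-model interpretation is genuine arithmetic.
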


Note that validity over ADTs without background sorts and definitions is decidable~\cite{malcev62} since it has a complete axiomatization~\cite{hodges97}. The introduction of definitions (programs) is what leads to incompleteness.

\subsection{Combinations of theories, nonstandard models, and rogue nonstandard models}
\label{sec:combined-theories}

A primary observation we make in this paper is that techniques for reasoning based on function unfolding and quantifier-free reasoning (as in {\sc Liquid Haskell} and {\sc Leon}) do not reason with the standard model but rather with a certain \emph{combination of first-order theories}. We will show this in Section~\ref{sec:soundness}. In this section we formalize the notation for combined theories. 

A \emph{theory} $\Tt$ for a signature is an entailment closed set of first-order sentences. A model $\Mm$ satisfies a theory $\Tt$, denoted $\Mm \models \Tt$, if every sentence in the theory holds in the model. A sentence $\psi$ is valid in $\Tt$, denoted $\Tt \models \psi$ if $\psi$ belongs to $\Tt$.

A \emph{theory tuple} for $(\Ss, \Ff, \Dd)$ is:
\begin{itemize}
    \item The first-order theory of ADTs $\Tt_\sigma$ for each foreground sort $\sigma$. This is precisely the theory of the standard ADT model 
    for $\sigma$, which may involve functions over other sorts 
    using which the elements of $\sigma$ are to be constructed. These other sorts 
    are themselves constrained by theories like Presburger Arithmetic, or an ADT theory. 
    \item A theory $\Tt_\textit{bg}$ for the combined signature of the background sorts that is recursively enumerable. We require the background sorts in the standard model to satisfy this theory. In practice, this theory is the union of several axiomatized theories, say for arrays, integers, bitvectors, etc.
    \item Theory of uninterpreted functions with equality for symbols in $\Dd$ (i.e., the so-called empty theory).
\end{itemize}

The combined theory $\thcomb$ of a theory tuple is the entailment closure of the union of the theories in the tuple. A model satisfies a  theory tuple (and consequently the combined theory) if the projection of the model to each subset of sorts satisfies the theories constraining those sorts. The combined theory $\thcomb$ is the set of all FO sentences that hold in all these models. For example, consider the ADT $\mathit{List Nat}$ of lists over natural numbers introduced earlier. A theory tuple for this signature could be one that has (a) the theory of ADT lists for the foreground sort, and (b) the theory of Presburger Arithmetic (natural numbers with addition) for the background sort. The combined theory is the entailment closure of the union of the two theories.

Note that the first order theory of ADTs is \emph{complete}. Therefore, the above setting is agnostic to the choice of any complete axiomatization for the ADT sorts!~\cite{hodges97,kovacs17}. Consequently, our results are also quite general and agnostic to the choice of axiomatization.

The standard models for each sort satisfy their respective theories. The other models of the individual theories are called \emph{nonstandard models}. The combined standard model $\Mm_{\Ss,\Ff,\Dd}$ is a model of $\thcomb$, and similarly the other models of $\thcomb$ are nonstandard models for the combined theory.

Since the standard model is a model of $\thcomb$, it is clearly the case that $\thcomb$ is a \emph{subset} of the theory of the standard model $\Mm_{\Ss,\Ff,\Dd}$, which we denote by $\thstd$. However, the reverse is not true in general, and in fact the combined theory can be \emph{strictly smaller} than the theory of the standard model. For example, consider the above example of $\mathit{List Nat}$ where we extend the logic with the predicate symbol $R$ with the following recursive definition:
\begin{align*}
  R(x) = \mathit{ite}&(\mathit{is\_nil}(x), \mathit{False},\\ \mathit{ite}&(\mathit{is\_nil}(\tail(x)),\head(x)=1,\\
  &\hspace{65pt}\head(x)=\head(\tail(x)) \land R(\tail(x))))
\end{align*}
%
%
One would expect that $R(x)$ holds only for nonempty lists $x$ whose elements are all $1$. Indeed, the statement $R(x) \rightarrow \head(x)=1$ is valid on the standard model. However, this sentence is \emph{not valid} in the combined theory as there is a \emph{rogue nonstandard model} that falsifies it. In this work, we define a rogue nonstandard model as a nonstandard model that falsifies a theorem of interest which is valid on the standard model. 

An example of a rogue nonstandard model falsifying $R(x) \rightarrow \head(x) = 1$ is as follows. It has an element $u$ in the ADT universe that does not correspond to any standard (i.e., finite) ADT term such that $R(u)$ is true and $\head(u)=2$. Destructing $u$ consecutively would proceed forever without reaching $\mathit{nil}$ and and all these elements will satisfy $R$ and have their head element to be $2$, hence satisfying the recursive equation for $R$. 


Standard and nonstandard models satisfy the same FO properties for ADTs, but the addition of recursively defined functions destroys this. Although the combination of ADTs and recursively defined functions is the primary technical hurdle, 
we develop completeness results for a theory that also includes background sorts. This is crucial to verify functional programs as they invariably involve background theories.





Formally, in this paper we work with validity under the combination of theories $\thcomb$. 

\subsection{Validity under defined functions}
\label{sec:validity}

%
Let $\defns$ be a set of definitions $\defn_D$ for each $D \in \Dd$. We define the validity of a first-order formula $\varphi$ under definitions $\defns$ and modulo an arbitrary theory $\Tt$ by considering pairs of the form $(\defns, \varphi)$. 
Recall that when we use $\defns$ in formulas we mean the conjunction of definitions in the set.

\begin{definition}[Validity of FOL Formulae with Defined Functions]
\label{def:defs-phi-validity}
Given $(\Ss, \Ff, \Dd)$ along with definitions $\defns$ of functions in $\Dd$ and an FO-theory $\Tt$, we say that a formula $\varphi$ is $\Tt$-valid under the definitions iff 
$\defns \rightarrow \varphi$ is $\Tt$-valid, i.e., is in $\Tt$.  
We denote this by $\Tt \models (\defns, \varphi)$.\qed
\end{definition}

We can utilize the above notion in the case of the theory of the standard model or the combined theories, writing $\thstd \models (\defns,\varphi)$ or $\thcomb \models (\defns,\varphi)$ respectively. As before, if  $\thcomb \models (\defns,\varphi)$ then $\thstd \models (\defns,\varphi)$.



\section{A \sys logic}
\label{sec:fluid}





In this section we define our first main contribution: the \sys (First-Order Logic of Universal properties under Inductive Definitions) fragment that captures VCs generated by tools like \slh and \leon. 
The heart of the \sys fragment is a class of inductive definitions called \emph{provably acyclic} definitions.

Recall that we require definitions to terminate on the standard model. We demand in the \sys fragment that definitions also satisfy a \emph{provable acyclicity} condition, which is a notion similar to termination. 
Intuitively, acyclicity means that when definitions are unrolled, there is no cyclic dependency between the recursive calls. Note terminating functions must be acyclic, but acyclic definitions can be non-terminating. For example, the function $\mathit{forever}$ on Lists defined by $\mathit{forever}(x) = \mathit{forever}(\cons(0,x))$
does not terminate, but it is acyclic because the recursive calls do not repeat. We demand in the \sys fragment that the acyclicity property expressed as a first-order formula is \emph{provable} for the recursive definitions\footnote{A subtle point here is that 
even though terminating functions are acyclic, 
they need not be \emph{provably} acyclic 
\iffull
(see the \hyperlink{acyclicity-vs-termination}{discussion} at the end of this section 
\fi
\iffull\else
(see the full version of the paper 
\fi
for an example). Therefore, 
termination and provable acyclicity are incomparable. Of course, \emph{provable} termination implies provable acyclicity, as expected.}. We formulate provable acyclicity below using ranking functions and path conditions, which we first define formally.

\smallskip
An ordered sort $S \in \Ss$ is one with a binary predicate $<$ such that $<$ forms a strict partial order. Formally, $<$ must satisfy the FO axioms expressing irreflexivity and transitivity under $\thcomb$. Note that $<$ need not be well-founded because we only require acyclicity, not termination\footnote{Well-foundedness is not expressible in FOL anyway.}. Every ADT sort is an ordered sort with respect to the (strict) subterm relationship.

For a recursively defined function $D \in \Dd$, a \emph{ranking function} for $D$ is a recursively defined function $\mathit{Rank}_D \in \Dd$ from the domain of $D$ to an ordered sort. We require that $\Dd$ is \emph{stratified}. The stratum of a function $D \in \Dd$ is a natural number denoted by $\strat(D)$. Note that multiple functions can have the same strata. We require that every $D \in \Dd$ with $\strat(D) > 0$ has a ranking function $\mathit{Rank}_D$ whose stratum is \emph{strictly lower} than $D$. When $\strat(D) = 0$, we require that $D$ is unary over an ordered sort, and its ranking function is the identity function. Finally, we require that the definition of a function at stratum $i$ can only call functions of strata lower than or equal to $i$.




\smallskip
We now define path conditions.

\begin{definition}[Path Condition]
\label{def:pathcond}
Given a formula $\rho$\footnote{$\rho$ is usually the body of a recursively defined function}, we denote by $\pathcond_\rho(\psi,E)$ that the sub-expression (subterm or subformula) $E$ occurs in $\rho$ with path condition $\psi$. It is the least relation satisfying the following recurrence:
\begin{align*}
    \pathcond_\rho(\mathit{True}, \rho) &\, \textrm{ holds}\\
    \textrm{If  } \pathcond_\rho(\psi, \ite(\mathit{cond}, E_1, E_2)) &\, \textrm{ then  } \pathcond_\rho(\psi, \mathit{cond})\\
    \textrm{If  } \pathcond_\rho(\psi, \ite(\mathit{cond}, E_1, E_2)) &\, \textrm{ then  } \pathcond_\rho(\psi \land \mathit{cond}, E_1)\\
    \textrm{If  } \pathcond_\rho(\psi, \ite(\mathit{cond}, E_1, E_2)) &\, \textrm{ then  } \pathcond_\rho(\psi \land \neg\mathit{cond}, E_2) \\
    \textrm{If  } \pathcond_\rho(\psi, D(E_1\ldots,E_n)) \textrm{ for $D \in \Dd$} &\, \textrm{ then } \pathcond_\rho(\psi, E_j), 1 \leq j \leq n\\
    \textrm{If  } \pathcond_\rho(\psi, \oplus(E_1\ldots,E_n)) \textrm{ for $\oplus \neq \ite, \oplus \notin \Dd$} &\, \textrm{ then } \pathcond_\rho(\psi, E_j), 1 \leq j \leq n
\end{align*}
\end{definition}

Informally, the path condition is the conjunction of all the conditions of $\ite$ expressions that must be satisfied in order to ``reach'' the given sub-expression.

\medskip
\noindent
We are now ready to define provable acyclicity:

\begin{definition}[Provably Acyclic Definitions]
\label{def:provably-acyclic}
Given a signature with combined theory $\thcomb$ and stratified definitions $\defns$, a definition $\defn_D \equiv \forall \overline{x}.\, D(\overline{x}) = \rho(\overline{x})$ is provably acyclic if for every $G(\overline{t})$ ($G \in \Dd$) occurring in $\rho$  with $\strat(G) = \strat(D)$, $\mathit{Rank}_G$ and $\mathit{Rank}_D$ have the same range sort, and furthermore, for every $\psi$ such that $\pathcond_\rho(\psi, G(\overline{t})))$:

\centerline{$\thcomb \models \left(\bigwedge\limits_{\strat(H) < \strat(D)}\defn_H\right) \rightarrow \left(\psi \,\rightarrow\, \mathit{Rank}_G(\overline{t}) \,\!<\!\, \mathit{Rank}_D(\overline{x})\right)$}

\noindent
where the overloaded symbol $<$ represents an order predicate in the corresponding sort.\qed
\end{definition}

Informally, the above definition says that the arguments to recursive calls must be \emph{provably} (w.r.t $\thcomb$) smaller than the input arguments as measured using ranking functions. Although we say `provable', note that the definition uses semantic entailment ($\models$). However, these two notions are the same since FOL is complete. Provable acyclicity ensures that when a definition is unrolled, there is no cyclic dependency between recursive calls as the arguments will always decrease. We can use the definitions of functions in lower strata and the path condition to the recursive call to establish this property. We give an example below.

\begin{example}[Sorted List Merge]
Consider the usual function $\mathit{merge}(x, y)$ for merging sorted lists:

\rev{
$\forall x,y:\mathsf{List}.\, {\mathit{merge}}(x,y) = \;\ite(\isnil(x),y,$

$\textrm{\hspace{10.3em}\;\;} \ite(\isnil(y),x,$

$\textrm{\hspace{10.3em}\;\;} \ite(\head(x) \leq \head(y),\cons(\head(x),\mathit{merge}(\tail(x),y)),$

$\textrm{\hspace{20em}\;\;}\cons(\head(y),\mathit{merge}(x,\tail(y))))))$
}

Let the stratum for $\mathit{merge}$ be $1$, with its ranking function being the sum of lengths of $x$ and $y$. The stratum of the length function $\mathit{length}$ is $0$. 

Consider the recursive call $\mathit{merge}(\tail(x),y)$. The path condition in this case is $x \neq \nil \land y \neq \nil \land \head(x) \leq \head(y)$. 
We can show that this call has smaller rank, i.e., $(\mathit{length}(\tail(x)) + \mathit{length}(y)) < (\mathit{length}(x) + \mathit{length}(y))$ using the definition of $\mathit{length}$ 
and the path condition ($x \neq \nil$ ensures that the term $\tail(x)$ is well-defined). We can also show similarly that the other recursive call has smaller arguments, and therefore $\mathit{merge}$ is provably acyclic.

We can also show that $\mathit{length}$ is provably acyclic. Since its stratum is 0, its ranking function must be identity, therefore we have to show that the arguments to recursive calls must themselves decrease. This is evidently true since $\mathit{length}(x)$ recurses on $\tail(x)$, which is smaller according to the ADT subterm ordering.\qed
\end{example}


\mypara{Aside} We note here some subtleties in the 
definition of provable acyclicity. First, the relation $<$ is a \emph{mathematical} one, and does not need to be part of the signature or logically defined. Consequently, Definition~\ref{def:provably-acyclic} can be established by a user/system in any way. For example, if $<$ denotes the subterm ordering on ADT Lists, then a system can trivially deduce that $\tail(x) < x$. In particular, a definition that recurses on destructions of the called arguments is immediately provably acyclic. Second, ADTs are an ordered sort regardless of the choice of axiomatization because the subterm relation is an order in any model that satisfies a complete axiomatization of ADTs, including nonstandard models (even rogue ones). Therefore, we do not need ADT signatures/axiomatizations with an explicit subterm predicate~\cite{kovacs17}. 
Third, observe that ranks 
need not be well-founded as we only require acyclicity, not termination. Contrary to the usual ranking functions in literature, ranks need not be lower-bounded. For example, the function $\mathit{forever}$ defined above is provably acyclic because we can say $\mathit{Rank}(\cons(0,x)) < \mathit{Rank}(x)$ with the rank being negative of the length, which has no lower bound.

\smallskip
Our fragment is very general and includes most definitions we know that tools use. In practice, provable acyclicity is satisfied when functional programs are proved terminating. This is because, to the best of our knowledge, every tool that proves functional programs terminating uses ranking functions that map arguments to a well-founded order (typically tuples of natural numbers, often associated with the size of ADTs), and shows that (1) the ranking function decreases (according to some order relation $<$ ) on recursive calls, and (2) the order $<$ on which the ranking function decreases is well-founded. Condition (1) is precisely the property in Definition~\ref{def:provably-acyclic}!


Intuitively, provable acyclicity generalizes the idea of proving termination. Termination makes sense on a standard model, but in a nonstandard model ADT elements can have ``infinite tails'' and therefore a function that terminates on the standard model can be nonterminating on a nonstandard model\footnote{Here we mean nonterminating in the sense that the evaluation procedure described in Section~\ref{sec:standard-model} does not terminate for all inputs drawn from the nonstandard model.}. In contrast, provable acyclicity makes sense on all models, standard and nonstandard. We show that in any $\thcomb$ model, provably acyclic definitions are always satisfiable (though they may not have a unique interpretation). Formally 
\iffull
(see Corollary~\ref{cor:provably-acyclic} in Section~\ref{sec:completeness} 
\fi
\iffull\else
(see the full version of the paper\footnote{The full version of the paper can be found here: \url{https://madhu.cs.illinois.edu/FLUID_OOPSLA2023.pdf}} 
\fi
for a proof):

\begin{theorem}
\label{thm:provably-acyclic}
Given a signature $(\Ss, \Ff, \Dd, \thcomb)$, a set of stratified definitions $\defns$ that are provably acyclic, and a model $\Mm$ of $\thcomb$, there exists a model $\Mm'$ of $\thcomb$ such that the interpretation of symbols in $\Ff$ coincides with $\Mm$ and interpretations of symbols in $\Dd$ satisfy their definitions.
\end{theorem}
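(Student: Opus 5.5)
The plan is to build $\Mm'$ stratum by stratum. Symbols in $\Ff$ always keep their interpretation from $\Mm$; we only choose interpretations for the symbols of $\Dd$. Since $\Mm'$ agrees with $\Mm$ on the sorts and on $\Ff$, and the combined theory places no constraints on $\Dd$ beyond congruence, $\Mm' \models \thcomb$ holds automatically. So the whole task is to satisfy the definitions.

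\textbf{Outer induction on strata.} Suppose interpretations for all $H$ with $\strat(H) < i$ have been fixed so that $\defn_H$ holds. Call the resulting structure $\Mm_{<i}$.

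\textbf{Stratum $i$: the call graph.} Consider the set $N$ of nodes $(D,\overline{a})$, where $\strat(D)=i$ and $\overline{a}$ is a tuple of elements of $\Mm$ of the right sorts. Each node has a rank $r(D,\overline{a}) = \llbracket\mathit{Rank}_D\rrbracket(\overline{a})$, computed in $\Mm_{<i}$ (or the identity at stratum $0$). Fix any interpretation $I$ of the stratum-$i$ symbols on top of $\Mm_{<i}$. Evaluate the body $\rho_D$ at $\overline{a}$ with the eager semantics of Section~\ref{sec:standard-model}. The calls $G(\overline{t})$ with $\strat(G)=i$ that are actually reached are exactly those whose path condition $\psi$ is true under $I$.

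The acyclicity entailment of Definition~\ref{def:provably-acyclic} has $\thcomb$ and the lower definitions on the left. Stratum-$i$ symbols appear in it only as uninterpreted symbols. Hence the entailment holds in every expansion of $\Mm_{<i}$, whatever $I$ is. This yields the key consequence:
\[ r(G,\llbracket\overline{t}\rrbracket_I) < r(D,\overline{a}) \quad \text{for every reached call.} \]
In words, under every candidate interpretation, the value of $D$ at $\overline{a}$ depends only on values at nodes of strictly smaller rank in the same ordered sort. This already handles nested calls such as $D(D(x))$, since the inner call is also reached and its result feeds the outer argument.

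\textbf{Finding a fixed point.} We need a consistent choice, i.e.\ a fixed point of the operator $F(I)(D,\overline{a}) = \llbracket\rho_D\rrbracket_I(\overline{a})$.
\begin{itemize}
\item If $<$ is well-founded on the relevant ranks, we define $I$ by well-founded recursion on rank. The previous step guarantees that the recursion only consults already-defined values.
\item In general $<$ is only a strict partial order, so I would use Zorn's lemma. Take partial interpretations $J$ defined on a set $S \subseteq N$ that is \emph{closed} in the following sense: for every node in $S$, evaluating its body using $J$ reaches only nodes in $S$, and the computed value agrees with $J$. Order these by extension. Chains have unions, so a maximal $J$ exists.
\end{itemize}

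\textbf{The main obstacle.} The hard part is showing that a maximal $J$ is total. Given a node outside $S$, we must extend $J$ to the (possibly infinite, non-well-founded) set of nodes it transitively reaches. Every descending rank chain is acyclic, but it may have no bottom. For example, in $\mathit{forever}$ the chain $x, \cons(0,x), \ldots$ goes \emph{up} in length while the rank goes \emph{down}.

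My first attempt would be a coinductive/greatest-fixpoint choice. Assign an arbitrary default value to every node reached only via infinite descending paths, then propagate upward wherever the body's value is determined by strictly-lower-rank values. Acyclicity, i.e.\ strictness and transitivity of $<$ forbidding revisits, should ensure that no node is ever asked to equal two different things. This is where a careful argument is needed, namely that the ``determined-by-below'' assignments never conflict with the defaults.

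As a fallback, I would prove finite satisfiability: any finite set of nodes is a finite, hence well-founded, partial order, so recursion on rank gives a solution. I would then try to transfer this to all of $N$ by a compactness argument for these constraints. The difficulty to watch with this route is that compactness would normally produce an elementary extension of $\Mm$, while the statement requires the universe of $\Mm$ itself.
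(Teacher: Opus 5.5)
\textbf{There is a genuine gap at the step you flag as the main obstacle, and it cannot be closed.} With the universe and $\Ff$-interpretation of $\Mm$ held fixed, the statement is false, so no Zorn or coinductive argument on $\Mm$ can succeed.

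Take lists over $\mathsf{Int}$ and the single stratum-$0$ definition
\[
\forall x.\; h(x) = \ite\big(\isnil(x),\, 0,\, \ite(h(\tail(x)) \geq 0,\, h(\tail(x))+1,\, 0)\big).
\]
It recurses only on $\tail(x)$ under $\neg\isnil(x)$, so it is provably acyclic. It also terminates on the standard model, where it computes the length.

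Let $\Mm$ have the standard integers and the list universe of Section~\ref{sec:sorted-list-membership}: finite sequences plus pairs $(s,i)$ with $s$ infinite. For $u=(0^\omega,0)$ put $y_k = h(\tail^k(u))$. No $\tail^k(u)$ is $\nil$, so the definition forces every $y_k \geq 0$. Hence $y_k = y_{k+1}+1$ for all $k$, so $y_0 \geq n$ for every $n\in\mathbb{N}$, which is impossible in $\mathbb{Z}$.

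Consequences for your two routes:
\begin{itemize}
\item In your Zorn argument, the maximal closed partial solution ($h$ equal to length on the standard lists) is genuinely not extendable.
\item No choice of defaults on infinite descending chains can work. The obstruction is not a conflict between defaults and upward propagation. It is that the equations along a single descending chain have no solution in the given value sort.
\item Your fallback could succeed on $\Mm$ itself only when the relevant value sorts are finite, where Tychonoff-style compactness needs no new elements.
\end{itemize}

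\textbf{What the paper does is your fallback.}
\begin{itemize}
\item Lemma~\ref{lem:multi-repair-lemma} repairs a finite set of $\Dd$-applications in a fixed model. Its $\mathit{Minimal}$ procedure is your recursion on rank over finitely many nodes. It terminates by the same observation you correctly derived from Definition~\ref{def:provably-acyclic}: rank decreases along reached calls under every candidate interpretation.
\item Lemma~\ref{thm:compclose-completion} then applies compactness to all ground instances $\defns[\DApp]$ over a Herbrand (Skolemized) universe.
\item Corollary~\ref{cor:provably-acyclic} runs this stratum by stratum with $C=\Gamma=\emptyset$.
\end{itemize}
That argument produces \emph{some} model of $\thcomb$ satisfying $\defns$; in the example above, $h$ would take nonstandard integer values on nonstandard lists. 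But, as you suspected, its $\Ff$-part need not be $\Mm$, and the paper's corollary passes over this point.

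\textbf{What you can actually prove.} Either of these versions holds, and either suffices for Theorem~\ref{thm:completeness}:
\begin{itemize}
\item $\defns$ is $\thcomb$-satisfiable;
\item more sharply, some elementary extension of $\Mm$ expands to a model of $\defns$. To get this, run the same argument with $\thcomb$ strengthened by the elementary diagram of $\Mm$, doing the finite repairs inside $\Mm$.
\end{itemize}
With the conclusion weakened this way, your fallback is a complete proof; drop the primary route.
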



\smallskip
\mypara{\sys} We now define the \sys fragment.

\begin{definition}[\sys Fragment]
\label{def:safe-fragment}
Given a signature $(\Ss, \Ff,\Dd,\thcomb)$ and a set of stratified definitions $\defns$ for the symbols in $\Dd$, the pair $(\defns, \varphi)$ is in the \sys fragment if (1) every definition in $\defns$ is provably acyclic, and (2) $\varphi$ is purely universally quantified.
\end{definition}

\added{

\mypara{Discussion on provable acyclicity vs. termination}\hypertarget{acyclicity-vs-termination}{}

Termination refers only to termination on the standard model. In contrast, provable acyclicity requires that arguments do not repeat when unfolding a definition (on any model), which is established using an order predicate and ranking functions. However, neither one implies the other.

For example, the function $f(x) = f(cons(0,x))$ is a provably acyclic function since the arguments to the function will never repeat across successive recursive calls. However, $f$ does not terminate on the standard model: it simply calls itself on larger and larger lists.

In contrast, the following predicate $g$ terminates on the standard model but is not provably acyclic:

\begin{center}
$\forall x.\, \mathit{std}(x) = \ite(x = \nil, \mathit{True}, \mathit{std}(\tail(x)))$\\
$\forall x.\, g(x) = \ite(\mathit{std}(x), \mathit{True}, g(x))$
\end{center}

$\mathit{std}$ always terminates on the standard model returning $True$: it continually destructs the element and recursively calls itself until it reaches $\nil$, at which point it returns $\mathit{True}$. Therefore, $g(x)$ also terminates for elements in the standard model as one would simply evaluate the outermost condition (which terminates), and then take the branch corresponding to the success of the condition (which is just $\mathit{True}$).

However, $g$ is not provably acyclic: it recursively calls $g(x)$ which does not decrease the argument. We can’t use the fact we used in the termination argument that the $\mathit{else}$ branch will never be taken because for provability we have to consider all models, not just the standard model. There are models where $\mathit{std}$ does not always evaluate to $\mathit{True}$.

In Section~\ref{sec:boundarythms} we use the above construction 
and the fact that $\mathit{std}$ cannot be proved to always evaluate to $\mathit{True}$ to show incompleteness of \qialg\ when provably acyclic functions are replaced by terminating functions (see Theorem~\ref{thm:boundary-incompleteness}).

} 
\section{Completeness of recursive function unfolding and quantifier-free reasoning}
\label{sec:qialg}

In this section we describe the algorithm $\qialg$, based on Unfolding definitions followed by Quantifier-Free Reasoning, for checking validity of universal properties. We show that the algorithm intrinsically only proves theorems in the combined theory $\thcomb$. We then prove the main technical contribution of this paper, that the algorithm is \emph{complete} for $\thcomb$. Let us fix a signature $(\Ss, \Ff, \Dd, \thcomb)$ through this section. Recall that $\thcomb$ represents the combined theory for the foreground and background sorts, with $\Dd$ being uninterpreted. Fix also the theory of the standard model $\thstd$.

We require for our algorithm that $\thcomb$-validity is \emph{decidable} for \emph{quantifier-free formulas}, and that the quantifier-free fragments of $\thcomb$ and $\thstd$ are identical. We are agnostic to the choice or presence of an axiomatization for the theories and have no other constraints on $\thcomb$. This assumption is satisfied for several combined theories, including those that admit Nelson-Oppen combination~\cite{nelson-oppen1979, tinnelizarba} \eg ADTs, linear arithmetic, reals, etc. In fact, such theories also admit efficient decision procedures as evidenced by SMT solvers~\cite{z3,cvc4}. Checking validity is achieved by negating and checking for unsatisfiability. Note that the \emph{quantified} theories $\thstd$ and $\thcomb$ are however typically different; see Section~\ref{sec:combined-theories}.


\subsection{\qialg~algorithm}
\label{sec:qialg-description}

The high-level picture of the algorithm is as follows: presented with a set of definitions $\defns$ and a quantifier-free formula $\varphi$, \qialg~ systematically unfolds the definitions on terms on which functions are applied and dispatches the resulting quantifier-free formulas to a decision procedure for satisfiability.
We first provide some  definitions that are useful in describing the algorithm.

\begin{definition}[$\Dd$-Application]
\label{def:recdef-terms}
A $\Dd$-application is a pair $(D,\tup)$ for $D \in \Dd$ and a tuple of terms $\tup = (t_1,t_2\ldots, t_r)$ such that $D(\tup)$ is well-formed, i.e., $D$ has signature $\sigma_1 \times \sigma_2 \ldots \times \sigma_r \rightarrow \sigma$ and $t_i$ is of type $\sigma_i$ for $1 \leq i \leq r$.
A $\Dd$-application $(D,\tup)$ occurs in a formula $\psi$ if $D(\tup)$ occurs in $\psi$.\qed
\end{definition}

\begin{definition}[Definition Unfolding]
\label{def:definition-unfolding}
Let $\varphi \equiv \forall x_1.\,\forall x_2.\,\ldots\forall x_n.\, \psi$ be a universally quantified formula such that $\psi$ is quantifier free. The instantiation of $\varphi$ with a tuple of terms $\tup \equiv (u_1, u_2, \ldots u_n)$, written $\varphi[\tup]$, is the quantifier-free formula $\psi[u_1/x_1, \ldots u_n/x_n]$.\qed
\end{definition}

Given a set of $C$ of $\Dd$-applications and a set $\defns = \{\defn_D \mid D \in \Dd\}$ 
of definitions 
we denote $\defns[C] = \{\defn_D[\tup] \mid (D,\tup) \in C\}$. Informally, $\defns[C]$ is the set of quantifier-free formulas that contains all the unfoldings of definitions corresponding to the pairs in $C$.



\begin{algorithm}[t]
\begin{algorithmic}[1]
\Statex {\bf Input:} $(\defns, \varphi)$ such that $\varphi$ is universally quantified, with $\defns = \{\defn_D \mid D \in \Dd\}$
\Statex {\bf Output:} $\mathit{VALID}$ (when it terminates)
\Statex {\bf Imports:} $\smtproc$ for deciding $\thstd$-satisfiability of quantifier-free formulas
\Procedure{\textsc{{\bfseries \qialg}}$[\Ss; \Ff; \Dd; \thstd]$}{}
\State $\formulas := \{\neg \varphi\}$ \CodeComment{Negate the formula and Skolemize}
\While{$\mathit{True}$}
\label{qialg:outer-loop}
\State $\mathit{res} := \smtproc(\formulas)$ \CodeComment{Check sat of $\neg\varphi$ with current unfoldings}
\label{qialg:checksat}
\If{$\mathit{res} = \mathit{UNSAT}$} 
\State \Return{$\mathit{VALID}$} \CodeComment{$(\defns,\varphi)$ is valid}
\Else
\State \CodeComment[\linewidth]{Compute $\Dd$-applications occurring in $\formulas$}
\State $\unfolded := \{ (D,\tup) \mid D(\tup) \textrm{ occurs in } \psi \textrm{ for } \psi \in \formulas\}$
\label{qialg:get-terms}
\State \CodeComment[\linewidth]{Unfold the definitions and add them to formulas}
\State $\formulas := \formulas \cup \defns[\unfolded]$
\label{qialg:add-instantiation}
\EndIf
\EndWhile
\EndProcedure
\end{algorithmic}
\caption{Algorithm for Unfolding Definitions followed by Quantifier-Free Reasoning}
\label{fig:qialg}
\end{algorithm}

\mypara{Algorithm Description} Algorithm~\ref{fig:qialg} shows the pseudocode for \qialg, parameterized by the signature $(\Ss, \Ff,\Dd,\thstd)$ with the theory of the standard model. It takes as input a set of definitions $\defns = \{\defn_D \mid D \ \in \Dd \}$ and a formula $\varphi$ such that $\varphi$ is universally quantified. 
\qialg~ attempts to prove validity by establishing unsatisfiability of the negation $\defns \land \neg\varphi$. 
Finally, \qialg\ uses an external procedure $\smtproc$ that checks the satisfiability of quantifier-free formulas with respect to 
$\thstd$. It takes as input a set of formulas and outputs $\mathit{SAT}$ if the conjunction of the formulas is $\thstd$-satisfiable and $\mathit{UNSAT}$ otherwise. 

The algorithm maintains a set $\formulas$ of quantifier-free formulas consisting of $\neg\varphi$ along with finitely many unfoldings of the definitions. If this set is unsatisfiable then the formula $\defns \land \neg\varphi$ is unsatisfiable as well, i.e., $\varphi$ is valid under $\defns$. Initially the set contains only $\neg\varphi$. Since $\varphi$ is purely universal, we Skolemize the quantified variables in $\neg\varphi$ and treat it as a quantifier-free formula, adding the existentially quantified variables as new ground terms (constants) in our signature. 

At a general point in the algorithm (line~\ref{qialg:outer-loop}), we check the $\thstd$-satisfiability of $\formulas$ using the external procedure $\smtproc$ (line~\ref{qialg:checksat}). Note that although there is a unique valuation for every $D \in \Dd$ on the standard model consistent with $\defns$, the set $\formulas$ only enforces this consistency for finitely many unfoldings of $\defns$ and otherwise treats the symbols in $\Dd$ as uninterpreted, which is an over-approximation. 
If $\formulas$ is unsatisfiable we exit and return $\mathit{VALID}$. Otherwise, we refine our approximation by unfolding the definitions on more terms. We compute the set of $\Dd$-application terms occurring in $\formulas$ (line~\ref{qialg:get-terms}), add the corresponding unfoldings of definitions to $\formulas$ (line~\ref{qialg:add-instantiation}), and go back to the beginning of the loop. 
Observe that if the algorithm is not able to prove the unsatisfiability of $\neg\varphi$ using any amount of unfoldings then it does not terminate. 


\subsection{Soundness and completeness of \qialg~ under combined theories}
\label{sec:soundness}

In this section we prove the primary contribution of this work, namely that \qialg~ is complete for $\thcomb$-validity of \sys formulas. We first show  the soundness of \qialg.

\begin{theorem}[Soundness of \qialg~ w.r.t $\thcomb$]
\label{thm:soundness}
If $\qialg(\Ss; \Ff; \Dd; \thstd)$ terminates on $(\defns, \varphi)$ then $\thcomb \models (\defns, \varphi)$.
\end{theorem}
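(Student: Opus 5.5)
The argument is short. Suppose $\qialg$ terminates on $(\defns,\varphi)$. The only exit point returns $\mathit{VALID}$, and it is reached when $\smtproc$ reports that the current set $\formulas$ is $\thstd$-unsatisfiable. At that iteration, $\formulas$ is the Skolemized $\neg\varphi$ together with a finite set of unfoldings $\defns[C]$, where $C$ is the finite set of $\Dd$-applications collected over the earlier iterations. So we obtain that the quantifier-free formula $\neg\varphi_{sk} \land \bigwedge \defns[C]$ is $\thstd$-unsatisfiable, with $\varphi_{sk}$ denoting $\varphi$ with its universal variables replaced by fresh Skolem constants.

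The first step transfers this unsatisfiability from $\thstd$ to $\thcomb$. We use the standing assumption of this section that the quantifier-free fragments of $\thcomb$ and $\thstd$ coincide. Concretely, the negation $\neg\big(\neg\varphi_{sk} \land \bigwedge\defns[C]\big)$, with its Skolem constants read as universally quantified variables, is a universal sentence valid in $\thstd$, and therefore it also belongs to $\thcomb$. Equivalently, no model of $\thcomb$ satisfies $\neg\varphi_{sk}\land\bigwedge\defns[C]$ under any interpretation of the constants. The symbols in $\Dd$ are uninterpreted in both readings, so this transfer is legitimate.

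The second step lifts the result back to the quantified statement. Let $\Mm$ be any model of $\thcomb$ with $\Mm \models \defns$, and suppose toward a contradiction that $\Mm \not\models \varphi$. Since $\varphi$ is purely universal, some assignment $\nu$ of elements to its variables falsifies the quantifier-free matrix. Interpret the Skolem constants by $\nu$. Each formula in $\defns[C]$ is an instance $\defn_D[\tup]$ of a universally quantified definition that holds in $\Mm$, so every such instance is true in $\Mm$ under $\nu$. Hence $\Mm,\nu$ satisfies $\neg\varphi_{sk}\land\bigwedge\defns[C]$, contradicting the first step. Therefore every $\thcomb$-model of $\defns$ satisfies $\varphi$, that is, $\defns\rightarrow\varphi \in \thcomb$. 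By Definition~\ref{def:defs-phi-validity} this is $\thcomb \models (\defns,\varphi)$.

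The only delicate point is the first step, and the care it needs is bookkeeping. The Skolem constants are not in the original signature, so the quantifier-free formula should be read as a universal closure over fresh variables. In that reading "quantifier-free fragments coincide" means that satisfiability of quantifier-free formulas with free variables, or equivalently with uninterpreted constants and the $\Dd$-symbols, agrees between the two theories. With that convention fixed, the argument is pure first-order instantiation and needs neither provable acyclicity nor Theorem~\ref{thm:provably-acyclic}. Those hypotheses are needed only for completeness.
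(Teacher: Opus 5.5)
Your proposal is correct and takes essentially the same approach as the paper. You first transfer the $\thstd$-unsatisfiability of the quantifier-free set $\defns[C]\cup\{\neg\varphi\}$ to $\thcomb$, and then conclude that $\defns\land\neg\varphi$ is $\thcomb$-unsatisfiable because every model of $\defns$ satisfies all its instances; your care about Skolem constants and about reading the $\Dd$-symbols as uninterpreted spells out what the paper leaves implicit.
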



\begin{proof}
In each round of the algorithm the set $\formulas$ is of the form $\defns[C] \cup \{\neg\varphi\}$, where $\defns[C]$ contains unfoldings (i.e., instantiations) of $\defns$ on a set $C$ of $\Dd$-applications. Therefore, if \qialg~ terminates then $\defns[C] \land \neg\varphi$ is unsatisfiable with respect to $\thstd$ (line~\ref{qialg:checksat}). 

Now, $\smtproc$ can also be seen as a satisfiability procedure for the combined theory $\thcomb$ since the input formulas are \emph{quantifier-free}. We hence have that $\defns[C] \land \neg\varphi$ is unsatisfiable with respect to $\thcomb$, which yields $\defns \land \neg\varphi$ is unsatisfiable with respect to $\thcomb$, \ie, $\thcomb \models (\defns,\varphi)$.
\end{proof}

We showed in Section~\ref{sec:combined-theories} that typically $\thstd$ is strictly larger than $\thcomb$. The above result shows that the proving power of \qialg~ is in fact bounded by $\thcomb$. 
Therefore, not only are there valid theorems in $\thstd$ that are not valid in $\thcomb$, but it is also the case that \qialg~ (and hence systems such as \slh and \leon) will never be able to prove those theorems. 


\label{sec:completeness}
\medskip
We now show that \qialg~ is complete. 

\begin{theorem}[Completeness of \qialg~ w.r.t $\thcomb$ for \sys]
\label{thm:completeness}
If $(\defns, \varphi)$ belongs to the \sys fragment (Definition~\ref{def:safe-fragment}) and $\thcomb \models (\defns, \varphi)$, then $\qialg(\Ss; \Ff; \Dd; \thstd)$ terminates on $(\defns, \varphi)$ and reports it valid.
\end{theorem}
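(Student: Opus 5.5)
We argue by contraposition: if \qialg\ runs forever on $(\defns,\varphi)$, we build a model of $\thcomb$ that satisfies $\defns \land \neg\varphi$, so $\thcomb \not\models (\defns,\varphi)$. Let $C_0 \subseteq C_1 \subseteq \cdots$ be the $\Dd$-applications gathered in successive rounds, and let $C_\infty = \bigcup_i C_i$. Non-termination means every finite stage $\{\neg\varphi\} \cup \defns[C_i]$ is $\thstd$-satisfiable. Since the quantifier-free fragments of $\thstd$ and $\thcomb$ coincide, each stage is also $\thcomb$-satisfiable. By compactness, $\{\neg\varphi\} \cup \defns[C_\infty] \cup \thcomb$ (with the Skolem constants) has a model $\Mm_0$. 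This $\Mm_0$ respects the definitions on every term of $C_\infty$, and $C_\infty$ is closed under the unfolding step: every $D$-application occurring in an unfolding $\defn_D[\tup]$ with $(D,\tup)\in C_\infty$ is itself in $C_\infty$. It does not yet satisfy the full quantified $\defns$, because $\Dd$ is arbitrary outside the interpretations of $C_\infty$.

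The second step repairs $\Dd$ off the image of $C_\infty$. Let $A = \{(D, \llbracket \tup\rrbracket_{\Mm_0}) \mid (D,\tup)\in C_\infty\}$ be the set of semantic applications already fixed. We keep $\Ff$ and the values of $D$ on $A$, and redefine $D$ elsewhere so that every definition holds at every tuple of elements. The natural approach is Theorem~\ref{thm:provably-acyclic}, in a relativised form. Working stratum by stratum, we take the lower-stratum functions as already fixed, including their ranking functions, which are fully defined and satisfy their definitions. By provable acyclicity, every recursive call at the same stratum strictly decreases $\mathit{Rank}$ under the path condition, and this holds in every model of $\thcomb$ satisfying the lower definitions. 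So the call graph among same-stratum applications in the model is acyclic. Because of the path conditions, we only follow calls in branches that are actually taken.

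Two cases arise for an element tuple not in $A$. If its call dependencies are well-founded, we define its value by recursion along that graph. If it has infinite descending dependency chains, which occur in nonstandard models, we need a consistent assignment along each chain. We would obtain this by a compactness or König-style argument, or by a transfinite construction choosing values consistent with all constraints. Acyclicity guarantees that no constraint forces a value to equal a function of itself. For example, fix the value arbitrarily at the "top" of each chain and propagate; since the rank strictly decreases and never repeats, no contradiction arises.

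The key point is that values on $A$ are never constrained by tuples outside $A$. This is exactly the closure of $C_\infty$: an application in $A$ only calls applications in $A$, as seen in $\Mm_0$ via the unfolded formulas. Outside tuples may call into $A$, and that is harmless because those values are already fixed and consistent. The resulting $\Mm'$ agrees with $\Mm_0$ on $\Ff$, so it still models $\thcomb$, and on all ground terms of $\neg\varphi$, so it satisfies $\neg\varphi$; it also satisfies $\defns$. This contradicts the assumed validity.

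The main obstacle is this extension step. It needs a careful form of Theorem~\ref{thm:provably-acyclic} that accepts a partial, closed, consistent pre-assignment and extends it. That requires handling ill-founded same-stratum chains in nonstandard models, and checking that the ranking argument via the lower-stratum definitions still applies once those lower functions have themselves been extended. I would prove this extension lemma by induction on strata, using a Zorn's-lemma argument over partial consistent interpretations closed under dependencies.
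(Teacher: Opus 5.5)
Your Stage~1 (compactness to obtain $\Mm_0 \models \{\neg\varphi\}\cup\defns[C_\infty]$, and the closure of $C_\infty$ under unfolding) matches the paper. The extension step, however, has a genuine gap that more care cannot close: a dependency-closed, consistent partial interpretation of $\Dd$ over a \emph{fixed} $\Ff$-structure need not extend to a total one.

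``Fix the value at the top of a chain and propagate'' runs against the direction of the constraints, since a caller's value is determined by its callees. Acyclicity also does not make an infinite descending chain of constraints solvable. Take Peano with a background integer sort and the stratum-$0$ definition
\[ D(x) = \ite\bigl(\isZ(x),\, 0,\, \ite(D(\tpred(x)) \geq 0,\, D(\tpred(x)),\, -D(\tpred(x))) + 1\bigr), \]
which is provably acyclic because every call sits under $\neg\isZ(x)$. Let $\Mm_0$ have the Peano universe $\univ$ of Figure~\ref{fig:peano:non} and the standard $\mathbb{Z}$ as its integer sort. With $\varphi \equiv \mathit{False}$ you get $C_\infty=\emptyset$, \qialg\ loops, and $\Mm_0$ is a legitimate outcome of your Stage~1. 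On the chain $\{n' \mid n\in\mathbb{Z}\}$ every $D(n')$ must be $\geq 1$, hence $D(n') = D((n-1)')+1$ for all $n$. So $D((-k)') = D(0')-k$ eventually falls below $1$, and no interpretation of $D$ over this $\Ff$-structure satisfies $\defn_D$.

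None of your proposed remedies rescues this. Compactness is only available if the model may change. K\"onig's lemma needs finite branching, but the value domains are infinite. Zorn merely yields maximal closed consistent partial interpretations that omit the whole chain. So the fixed-structure, ``relativised'' Theorem~\ref{thm:provably-acyclic} you plan to prove is false. The paper's own argument for that theorem goes through the compactness-based completion lemma, which does not preserve $\Ff$.

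The missing idea is to give up on preserving $\Mm_0$'s $\Ff$-part and combine \emph{finite} repairs with compactness, as the paper does. Work in a Herbrand model (after Skolemizing $\thcomb$). Lemma~\ref{lem:multi-repair-lemma} repairs any finite set $K$ of $\Dd$-applications outside the computationally closed set $C$:
\begin{itemize}
\item Eagerly evaluate a body and descend into any unrepaired application of $K$ that the evaluation actually reaches.
\item By Lemma~\ref{lem:pathcond-soundness} and provable acyclicity, each descent strictly lowers the rank, so finiteness of $K$ guarantees a minimal application whose evaluation reads only repaired values or values already in the model.
\item Repairing in that order never breaks earlier repairs, $\defns[C]$, or $\neg\varphi$; the bottom of each descent simply reads whatever arbitrary values the current model has outside $K$.
\end{itemize}
Lemma~\ref{thm:compclose-completion} then applies compactness to get a model of $\defns[\DApp]\cup\defns[C]\cup\{\neg\varphi\}$. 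This model may differ from $\Mm_0$ on $\Ff$; in the example above, $D(0')$ becomes a nonstandard integer. That is harmless because $\neg\varphi$ and $\defns[C]$ are ground formulas in the compactness set. Taking the resulting model Herbrand means every element is named by a term, so $\defns$ holds everywhere. Your stratum-by-stratum induction then goes through exactly as in the paper's epilogue.
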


We dedicate the rest of this section to the proof of the completeness theorem.

\subsection*{Prologue: Theorem simplification and reduction to model construction}
We make some simplifications for ease of presentation. First, we assume that $\defns$ has only one stratum. 
We provide a generalization of the argument made here to multiple strata at the end of this section. Second, we assume without loss of generality that the signature $(\Ss, \Ff, \Dd, \thcomb)$ is such that if a formula $\Gamma$ is satisfiable in a $\thcomb$ model, then it is satisfiable in a Herbrand model consisting of the terms occurring in $\Gamma$ and closed under the applications of functions in $\Ff \cup \Dd$. This can always be done by Skolemizing $\thcomb$ and expanding $\Ff$ with new function symbols.

We first rewrite the statement of the theorem to an equivalent one. Consider the value of the sets $\formulas$ and $\unfolded$ through the algorithm:
\begin{align*}
    \formulas_0 &= \{\neg\varphi\} \tag{initial value}\\
    \unfolded_i &= \{(D, \tup) \mid D(\tup) \textrm{ occurs in } \psi \in  \formulas_{i-1}\} \tag{$i > 0$}\\
    \formulas_i &= \formulas_{i-1} \cup \defns[\unfolded_i] \tag{$i > 0$}
\end{align*}

\noindent
where the subscript $i$ denotes their values in the $i^{th}$ \emph{round} of the outermost loop on line~\ref{qialg:outer-loop}. Observe that $\formulas_i \subseteq \formulas_j$ and $\unfolded_i \subseteq \unfolded_j$ for every $j > i$. The completeness result can then be stated as follows:

\begin{theorem}[Completeness of \qialg~ w.r.t $\thcomb$]
\label{thm:completeness-formal}
If $\thcomb \models (\defns, \varphi)$ then $\formulas_i$ is $\thcomb$-unsatisfiable for some $i \geq 0$.
\end{theorem}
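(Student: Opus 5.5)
We argue by contraposition: assume $\formulas_i$ is $\thcomb$-satisfiable for every $i \geq 0$, and build a model of $\thcomb$ that satisfies $\defns \land \neg\varphi$. This contradicts $\thcomb \models (\defns,\varphi)$.

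\emph{Step 1: a limit model.} Let $\formulas_\infty = \bigcup_i \formulas_i$. Every finite subset lies in some $\formulas_i$ and is therefore $\thcomb$-satisfiable. By compactness (applied to $\thcomb \cup \formulas_\infty$), there is a model $\Mm$ of $\thcomb$ that satisfies $\neg\varphi$ together with $\defn_D[\tup]$ for every $\Dd$-application $(D,\tup)$ in $\unfolded_\infty = \bigcup_i \unfolded_i$. By construction $\unfolded_\infty$ is closed under the following operation: whenever $(D,\tup)$ is in it, every $\Dd$-application occurring in $\defn_D[\tup]$ is also in it. By the Herbrand assumption of the prologue, we may take $\Mm$ to be generated by ground terms. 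We then write $A \subseteq \Mm$ for the set of tuples $\bigl(D,\llbracket\tup\rrbracket\bigr)$ with $(D,\tup)\in\unfolded_\infty$. On $A$, the interpretation of each $D$ already satisfies its definition. The problem is that $\Mm$ may violate $\defn_D$ on argument tuples outside $A$, because these were never unfolded.

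\emph{Step 2: repair the model off $A$.} We keep the universe and the interpretation of $\Ff$ from $\Mm$, and keep $D^{\Mm}$ on the tuples in $A$. Off $A$ we must choose new values so that every $\defn_D$ holds everywhere. Under the single-stratum simplification, this is a fixpoint problem. Following Theorem~\ref{thm:provably-acyclic}, we solve it using the ranks $\mathit{Rank}_D$, which have stratum $0$ and are therefore already determined in the model. Provable acyclicity guarantees that each recursive call reached under a true path condition has strictly smaller rank. We define the values off $A$ by a transfinite or well-founded-style construction along the strict partial order $<$. Since $<$ need not be well-founded, we cannot simply recurse downward. Instead, we choose a consistent assignment by a compactness/Zorn argument: every finite set of constraints $\defn_D[\bar a]$ is satisfiable, because finitely many elements induce an acyclic dependency graph that can be evaluated bottom-up, with an arbitrary value at the minimal nodes. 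We take Corollary~\ref{cor:provably-acyclic} to package exactly this. The key point is that we must \emph{extend} the fixed partial interpretation on $A$, not construct from scratch. This works because $A$ is closed downward: if $(D,\bar a)\in A$ and $D(\bar a)$ calls $G(\bar b)$ on a reachable branch, then $(G,\bar b)\in A$ by the closure of $\unfolded_\infty$. Hence the constraints on $A$ form a self-contained, already satisfied subsystem. Constraints outside $A$ may depend on values in $A$, but never the reverse.

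\emph{Step 3: conclude.} The repaired model agrees with $\Mm$ on $\Ff$, so it still satisfies $\thcomb$; the $\Dd$ symbols are governed only by the empty theory, whose congruence holds trivially for functions. It also agrees with $\Mm$ on every $D$-value that $\neg\varphi$ mentions, since those applications lie in $\unfolded_1\subseteq\unfolded_\infty$. Therefore $\neg\varphi$ remains true, while $\defns$ now holds everywhere. This is the desired contradiction. To handle multiple strata, we repeat Step 2 stratum by stratum, from lowest to highest, so that the ranks and lower-stratum calls are fixed before each step.

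\emph{Main obstacle.} The delicate step is Step 2, in particular the interaction between the non-well-founded order and the downward-closure claim. Two issues need care. First, ranks are compared via $\thcomb$-entailment that uses the lower-stratum definitions; this entailment must transfer to $\Mm$, which holds because $\Mm\models\thcomb$ and the lower strata are already repaired. Second, the dependency graph must be evaluated in terms of \emph{elements}, whereas $\unfolded_\infty$ is stated in terms of \emph{terms}. I would first try to resolve both by working in the Herbrand model, so that $A$ is literally a set of terms, and by invoking the extension form of Theorem~\ref{thm:provably-acyclic}.
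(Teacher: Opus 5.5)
Your Step 1, your downward-closure observation for $A$, and your finite bottom-up repair are all sound. The last is essentially Lemma~\ref{lem:multi-repair-lemma}, provided you compute dependencies dynamically in the current model, as $\mathit{Minimal}$ does: which calls are reached, and on which arguments, depends on the values of other $\Dd$-applications. The gap is the passage from finitely many constraints to all of them in Step 2. You fix the universe and the interpretation of $\Ff$ from $\Mm$ and only choose values of $\Dd$ off $A$. On a fixed structure, finite satisfiability of the constraints does not give simultaneous satisfiability when the value sorts are infinite. Compactness produces a \emph{new} structure, and Zorn's lemma only yields maximal partial assignments, which need not be total.

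In fact the extension you want can fail to exist. Take lists over a background sort of naturals with Presburger arithmetic, the provably acyclic $\defn_{\mathit{length}} \equiv \forall x.\, \mathit{length}(x) = \ite(\isnil(x), 0, 1 + \mathit{length}(\tail(x)))$, and $\varphi \equiv \forall x.\, x = \nil$. Then $\formulas_i = \{c \neq \nil\}$ for every $i$ and $A = \emptyset$. Nothing in Step 1 prevents $\Mm$ (even a term-generated one) from having the standard $\mathbb{N}$ as its naturals while interpreting $c$ as a list $u$ with an infinite tail, as in Section~\ref{sec:sorted-list-membership}. Any interpretation satisfying $\defn_{\mathit{length}}$ would need $\mathit{length}(u) = k + \mathit{length}(\tail^k(u)) \geq k$ for every $k \in \mathbb{N}$, which is impossible. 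So no choice of values off $A$ works. The $\Ff$-preserving reading of Theorem~\ref{thm:provably-acyclic} that you plan to invoke fails for this $\Mm$, even with $A = \emptyset$. Citing Corollary~\ref{cor:provably-acyclic} would be circular anyway: the paper derives it from Lemma~\ref{thm:compclose-completion}, which is the very step at issue, and that argument goes through compactness and does not preserve the $\Ff$-reduct of $\Mm$.

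The fix, which is the paper's route, is to give up agreement with $\Mm$ on $\Ff$; the theorem never needs it. Let $\Gamma = \{\neg\varphi\}$ and $C = \bigcup_i \unfolded_i$. Apply compactness, modulo $\thcomb$, to the set of ground sentences $\Gamma \cup \defns[C] \cup \defns[\DApp]$, where $\DApp$ collects all $\Dd$-applications on ground terms. Each finite subset is satisfied by finitely many repairs of $\Mm$ (your bottom-up argument). Hence some $\thcomb$-model $\Nn$ satisfies the whole set; in the example above, its naturals are nonstandard. Taking $\Nn$ term-generated, satisfying all ground instances means satisfying $\defns$ everywhere (Lemma~\ref{lem:defs-everywhere-herbrand}). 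Moreover, $\neg\varphi$ holds in $\Nn$ because it belongs to the set, not because $\Nn$ agrees with $\Mm$; this replaces your Step 3. For several strata, repeat this stratum by stratum and carry the lower-strata definitions in the compactness set. Then the finite repairs at stratum $i$ take place in a model where the lower-strata hypotheses of the rank entailment in Definition~\ref{def:provably-acyclic} hold everywhere.
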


Note that the above theorem implies that \qialg~ is complete for $\thcomb$-validity because if for some $i$ we have that $\formulas_i$ is $\thcomb$-unsatisfiable, then it is also $\thstd$-unsatisfiable, therefore the algorithm will terminate in round $i$. By the soundness theorem (Theorem~\ref{sec:soundness}), $(\defns, \varphi)$ is $\thcomb$-valid.

We show the contrapositive of the above statement. Let us assume that $\formulas_i$ is $\thcomb$-satisfiable for every $i \in \mathbb{N}$. We show that $\defns \land \neg\varphi$ is $\thcomb$-satisfiable. Specifically, we construct a $\thcomb$ model $\Nn$ such that $\Nn \models \defns \land \neg\varphi$.

\mypara{Proof Plan} 
We construct $\Nn$ in two stages:
\begin{enumerate}[leftmargin=36pt]
    \item[Stage 1:] We first use the assumption that $\formulas_i$ is $\thcomb$-satisfiable for every $i \in \mathbb{N}$ to construct a model $\Mm$ (using compactness) that satisfies $\bigcup\limits_{i \geq 1}\defns[\unfolded_i]$ and $\neg\varphi$. Note that this model need not satisfy $\defns$ \emph{everywhere} (as we have only instantiated definitions for a subset of terms). 
    \item[Stage 2:] In this stage we take the model $\Mm$ and consider a finite set $K$ of pairs of the form $(D, \tup)$ such that the interpretation of $D$ in $\Mm$ does not satisfy the definition of $D$ on $\tup$. We show that we can `repair' the model so that definition of $D$ now holds on $\tup$ for every $(D, \tup) \in K$. We then show that definitions can be repaired everywhere using a compactness argument. This results in the model $\Nn$ we seek.
\end{enumerate}

\subsection*{Stage 1: Model of infinite instantiations}

We recall the compactness theorem for FOL under combinations of theories.

\begin{proposition}[FOL Compactness with Theories]
\label{prop:fol-compactness}
Given a signature $(\Ss, \Ff, \Dd, \thcomb)$, a set of formulas $\Gamma$ (finite or infinite) is $\thcomb$-satisfiable if and only if every finite subset of $\Gamma$ is $\thcomb$-satisfiable.\qed
\end{proposition}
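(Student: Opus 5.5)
The plan is to reduce the statement to the ordinary compactness theorem of (many-sorted) first-order logic. The point is that ``$\thcomb$-satisfiable'' just means satisfiable together with the set of sentences $\thcomb$.

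The ``only if'' direction is immediate. A $\thcomb$-model of $\Gamma$ is also a $\thcomb$-model of every finite subset of $\Gamma$.

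For the ``if'' direction, I proceed in four steps.
\begin{enumerate}
\item I check that the models of $\thcomb$ are exactly the models of the union $\Tt_\cup$ of the theories in the theory tuple. By definition, a model satisfies the tuple iff each of its projections satisfies the theory constraining those sorts. Since these theories only mention symbols of their own sorts, this is the same as satisfying $\Tt_\cup$ as a set of sentences. Taking entailment closure does not change the class of models.
\item If $\Gamma$ contains formulas with free variables, I replace those variables by fresh constants of the appropriate sorts. Satisfiability of formulas is always read as satisfiability under some assignment, so this makes $\Gamma$ a set of sentences without affecting satisfiability of $\Gamma$ or of any of its subsets. The Skolemized constants used by \qialg{} are already of this kind.
\item I consider the set of sentences $\Gamma \cup \thcomb$ and take an arbitrary finite subset $\Delta$ of it. Then $\Delta \subseteq \Gamma_0 \cup \thcomb$ for some finite $\Gamma_0 \subseteq \Gamma$. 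By hypothesis $\Gamma_0$ has a $\thcomb$-model, and that model satisfies $\Delta$. So $\Gamma \cup \thcomb$ is finitely satisfiable.
\item By the compactness theorem for many-sorted first-order logic, $\Gamma \cup \thcomb$ has a model. That model is a $\thcomb$-model of $\Gamma$.
\end{enumerate}

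The only step needing care is the appeal to many-sorted compactness. If one wants it from the one-sorted version, I would use the standard relativization. Introduce a unary predicate $P_\sigma$ for each sort $\sigma \in \Ss$, and add axioms saying three things:
\begin{itemize}
\item each $P_\sigma$ is nonempty;
\item the $P_\sigma$ are pairwise disjoint and together cover the universe;
\item each function symbol in $\Ff \cup \Dd$ maps arguments in the right $P_{\sigma_i}$ into $P_\sigma$.
\end{itemize}
Then relativize all quantifiers. Sorts are required to be nonempty and disjoint, so models of the relativized theory correspond exactly to many-sorted structures. Compactness therefore transfers.

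I regard this translation as routine and not a genuine obstacle. The substance of the proposition is the observation in the first step that $\thcomb$ is an ordinary set of first-order sentences whose models are exactly the models of the combined theory.
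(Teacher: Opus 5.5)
Your proof is correct and essentially matches the paper, which states this proposition without proof as the standard compactness theorem. Your reduction is the routine justification behind that appeal: you add the sentences of $\thcomb$ to $\Gamma$ (after replacing free variables by fresh constants) and invoke many-sorted compactness, noting only that the paper's $\ite$ terms can first be eliminated by a standard equivalence-preserving translation.
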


From our assumption we know that $\formulas_i$ is $\thcomb$-satisfiable for every $i$. Using compactness and the fact that $\formulas_i$ form an increasing sequence w.r.t $\subseteq$, it follows
that the infinite set $\infunfold = \bigcup\limits_{i \in \mathbb{N}}\formulas_i$
is $\thcomb$-satisfiable. We rewrite this as $\infunfold = \{\neg\varphi\} \cup \bigcup\limits_{i \geq 1}\defns[\unfolded_i]$. 

\smallskip
\noindent
Let $\Mm$ be a $\thcomb$-model that satisfies $\infunfold$. From our simplifying assumptions, we can assume that $\Mm$ is a Herbrand model. It satisfies $\neg\varphi$ and satisfies the definitions only on certain tuples, namely for $(D, \tup) \in \bigcup\limits_{i \geq 1} \unfolded_i$. 

Note here that if the model $\Mm$ happened to be the \emph{standard model} the repair we wish to do would be trivial as $\defn_D$ is uniquely defined (see Proposition~\ref{prop:standard-model-unique-defs}) for each $D \in \Dd$ and we can simply `complete' the model with the correct valuations. However, $\Mm$ can be a nonstandard model, and this results in the nontrivial aspects of our construction below.

\subsection*{Stage 2: Computational closure and model repair}
The reason we can repair $\Mm$ 
is because the set $\bigcup\limits_{i \geq 1} \unfolded_i$ has a special property: it is \emph{computationally closed}. We define this property below.

\begin{definition}[Computationally Closed Set]
\label{def:computational-closure}
Let $\Gamma$ be a set of quantifier-free formulas. 
A set $C$ of $\Dd$-applications is said to be computationally closed with respect to $\Gamma$ if: (1) if $D(\tup)$ occurs in some formula in $\Gamma$ then $(D, \tup) \in C$, and (2) if $(D, \tup_1) \in C$ and a $\Dd$-application $(G, \tup_2)$ occurs in $\defn_D[\tup_1]$ then $(G, \tup_2) \in C$. 
\end{definition}

Intuitively, for a recursively defined function $D$, the computational closure of a term $D(t)$ contains all the recursive calls (at any level) made by a call to $D$ on $t$, where we represent a recursive call to a function $G$ on a term $r$ by the $\Dd$-application $(G, r)$. The set is called a computational closure because it is the set of calls that occur when `computing' the value of $D$ on $t$ symbolically. The computational closure of a formula is then the union of the computational closures of all terms of the form $D(t)$ occurring in the formula. For example, consider the length function $\mathit{length}$ on Lists. The computational closure of $\mathit{length}(\cons(1,\nil))$ is the set $\{(\mathit{length}, \cons(1,\nil)),\, (\mathit{length}, \nil)\}$. Similarly, the computational closure of $\mathit{length}(x)$ is $\{(\mathit{length},x),\, (\mathit{length},\tail(x)),\, (\mathit{length},\tail(\tail(x))),\, \ldots\}$.


Using the above definition we can see that $\bigcup\limits_{i \geq 1}\unfolded_i$ is computationally closed for $\neg\varphi$. 
%
We now show that we can repair definitions everywhere on a Herbrand model if the definitions are already satisfied on a computationally closed sub-universe. Using this result, we can repair $\Mm$ so that definitions are satisfied everywhere, which is what we want. 

\begin{lemma}[Finite Repair outside Computational Closure]
\label{lem:multi-repair-lemma}
Let $\Mm$ be a Herbrand model, $\Gamma$ a set of quantifier-free formulae, $C$ a computationally closed set for $\Gamma$, and $K$ a finite set of $\Dd$-applications not in $C$. Let $\Mm$ satisfy $\defns[C] \cup \Gamma$. Then there exists a model $\Mm'$ that satisfies $\defns[K] \cup \defns[C] \cup \Gamma$.
\end{lemma}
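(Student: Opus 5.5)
We plan to prove the lemma by changing only the interpretation of the symbols in $\Dd$ on the finitely many Herbrand elements named by tuples in $K$, processing these tuples in increasing order of rank, and leaving every other part of $\Mm$ untouched. By the prologue we are in the single-stratum case. We will also assume, as the prologue licenses, that in the Herbrand model each $\Dd$-application $(D,\tup)$ names its own ground term $D(\tup)$. The value of that term can then be reassigned without disturbing the interpretation of symbols in $\Ff$ or of any other application.

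\emph{Step 1: the order.} Let $K' = K \cup C''$, where $C''$ is the finite set of $\Dd$-applications occurring in $\defns[K]$ that lie in neither $K$ nor $C$. For each $(G,\tup_2)$ occurring in $\defn_D[\tup_1]$ with $(D,\tup_1)\in K$, provable acyclicity gives
\[
\thcomb \models \psi[\tup_1] \rightarrow \mathit{Rank}_G(\tup_2) < \mathit{Rank}_D(\tup_1)
\]
for its path condition $\psi$. Strata lower than $D$ are vacuous here, and since $\mathit{Rank}$ is the identity at stratum $0$, no definitions are needed to evaluate ranks. Because $\Mm \models \thcomb$ and $<$ is a strict partial order in $\Mm$, the relation ``$(G,\tup_2)$ is reached from $(D,\tup_1)$ under a path condition true in $\Mm$'' is acyclic on the finite set $K$. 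We take a topological order $a_1,\dots,a_n$ of $K$ that is compatible with this relation, with calls before callers.

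\emph{Step 2: the inductive repair.} We set the values of $D(\tup)$ for $(D,\tup)\in C''$ arbitrarily. Equivalently, we leave their $\Mm$-values, since they are unconstrained. Then, for $j=1,\dots,n$, we redefine the value of $a_j=(D,\tup)$ to be the current value of $\rho_D(\tup)$. Evaluating $\rho_D(\tup)$ only uses, along its true path conditions, applications that are earlier in the order, lie in $C''$, or lie in $C$. So once a value is fixed it is never invalidated. A key point is that no element of $C$ is touched. This is essential because $\Gamma$ and $\defns[C]$ mention only applications in $C$, by closure property (1) and property (2) respectively, so they remain true in $\Mm'$. Closure property (2) also guarantees that the bodies of definitions instantiated at elements of $C$ call only elements of $C$. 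Finally, applications occurring under false path conditions do not affect the value of an $\ite$, so changing them is harmless. Because only the interpretation of $\Dd$ changes, $\Mm'$ is still a $\thcomb$-model, as $\Dd$ is uninterpreted in $\thcomb$.

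\emph{Main obstacle.} The hardest part is making the acyclicity claim rigorous. The issue is that whether a path condition is true depends on values of $\Dd$-applications, including ones being repaired. Within a single stratum, path conditions may mention same-stratum calls whose values change during the repair.

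Our first attempt is to order not by static reachability but dynamically, using the rank values in $\Mm$. Ranks at stratum $0$ are fixed elements of the ordered sort and do not depend on $\Dd$. We would evaluate each body with an eager, $\ite$-respecting evaluation under the partially repaired model. Under that evaluation, every same-stratum call that is actually evaluated has strictly smaller rank, by the entailment instantiated in the current model, and the current model is still a $\thcomb$-model. We then process $K$ by a linear extension of the rank order, which is possible because the ranks form a finite subset of a partial order. Applications with incomparable or equal ranks are never reached from each other along true paths, so their relative order does not matter.

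If the path-condition dependence still causes circularity, we fall back on well-founded recursion over the finite set $K$ ordered by rank. We define each repaired value by recursion on this order and verify $\defn_D[\tup]$ for $(D,\tup)\in K$ by induction.
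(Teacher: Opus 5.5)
Your proof is correct once you state it with the schedule from your ``Main obstacle'' paragraph from the start: process $K$ along a linear extension of the rank order. The Step~1 order is indeed not enough, because it is built from path conditions true in the original $\Mm$, and repairs can change same-stratum $\Dd$-values that occur in those conditions. The rank order avoids this. In the one-stratum case every rank is the identity, so ranks never change during the repair. Provable acyclicity, applied in each intermediate model (still a $\thcomb$-model), then shows that every call reached from $a_j$ has strictly smaller rank. Such a call was therefore repaired earlier and stays frozen, so no later repair can alter the value of $a_j$'s body. Your ``fallback'' is the same argument, and $C''$ is unnecessary.

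The paper instead orders repairs dynamically. Its procedure $\mathit{Minimal}$ picks any unrepaired element, evaluates the body eagerly in the current partially repaired model, and descends into any unrepaired member of $K$ it meets. It uses rank decrease only to show that this descent terminates, at an element whose evaluation touches no unrepaired member of $K$; that element is then repaired.

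Your static schedule makes the invariant that later repairs never invalidate earlier ones immediate, and it avoids the termination argument for the chase. The cost is that you need ranks to stay invariant across all repairs. This holds here, and also in the stratified induction, since lower strata are already fixed. The paper's chase only uses rank decrease within one fixed intermediate model.

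Otherwise the two proofs rest on the same points. Both use $K \cap C = \emptyset$ together with closure properties (1) and (2) to protect $\Gamma$ and $\defns[C]$. Both also rely on the idealization that one $\Dd$-application can be reassigned in the Herbrand model independently of all the others.
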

\begin{proof}[\textit{End of proof of Lemma~\ref{lem:multi-repair-lemma}}]
Observe that if $K$ is singleton, say $\{(D, \tup)\}$, we can construct $\Mm'$ by simply `updating' the interpretation of $D$ on $\tup$ according to the definition. Formally, the model $\Mm[D(\tup) := \llbracket\rho(\tup)\rrbracket_{\Mm}]$ satisfies $\defns[\{(D,\tup)\}] \land \defns[C] \land \Gamma$. Here $\Mm[(D,\tup):= v]$ denotes an updated model whose interpretation of $D(\tup)$ is $v$ but is otherwise identical to $\Mm$. We also use $\llbracket \cdot \rrbracket_{\Mm}$ to denote the interpretation of $\Mm$. The correctness of this construction follows from the fact that the definitions over $C$ are satisfied despite the update since $C$ is computationally closed. Consequently the satisfaction of $\Gamma$ is also unaffected because if $G(\overline{r})$ occurs in $\Gamma$ then $(G, \overline{r})$ belongs to $C$.

\smallskip
To show that $\defns[K]$ is satisfiable for an arbitrary finite subset $K$, we take $\Mm$ and apply updates as above on each pair in $K$. However, we have to do this carefully so that each repair does not break any previous repairs. Fix a set $K'$ and a model $\Mm'$ such that $K' \subseteq K$ and $\Mm'$ is $\Mm$ with updated with the fixes for the elements in $K'$. Initially $K' = K$ and $\Mm' = \Mm$. We describe below a mechanism $\mathit{Minimal}(K, K', \Mm')$ to choose a `minimal' element in $K$ that has not been fixed yet, and repair it as described above. 

\noindent
$\mathit{Minimal}(K, K', \Mm')$ is as follows:
\begin{enumerate}
    \item Pick an arbitrary element $(D, \tup) \in (K \setminus K')$. Let the body of $\defn_D$ be $\rho$.
    \item We evaluate $\rho(\tup)$ on $\Mm'$ in the following way: subterms must be evaluated before superterms, and for conditionals we evaluate the condition first and then only evaluate the appropriate branch.
    \item If the evaluation as described above does not encounter any element in $K \setminus K'$, then return $(D, \tup)$.
    \item If the evaluation of $\rho(\tup)$ encounters a term $G(\overline{r})$ such that $(G, \overline{r}) \in (K \setminus K')$, we recurse, going back to Step (2) and evaluating $\tau(\overline{r})$ where $\tau$ is the body of $\defn_G$.
\end{enumerate}

\noindent
Informally, this mechanism has the flavor of an \emph{eager evaluation}, in that we evaluate $\rho(\tup)$ eagerly, following the evaluation procedure down (recursively) to a minimal unfixed $\Dd$-application in $K$. 

\smallskip
Finally, when the procedure returns an element $(H, \overline{u})$, we add it to $K'$ and update $\Mm'$ with the repair for $(H, \overline{u})$. We then repeat this process of picking a minimal element and repairing the model on it until all elements in $K$ are fixed. This completes our construction, and the model $\Mm'$ obtained at the end of all the fixes is the model we desire.

\medskip
\rev{A subtle point is that the correctness of the above argument relies on showing that $\mathit{Minimal}(K, K', \Mm')$ terminates. However, this follows from the definition of provable acyclicity.} 
%
%
%
%
Formally: 
\begin{lemma}
\label{lem:pathcond-soundness}
Let $(D, \tup)$ be a $\Dd$-application in $K$, $\rho$ be the body of the definition of $D$, and $\Mm$ be a model of $\thcomb$. Let $(G, \overline{r})$ be a $\Dd$-application such that $G(\overline{r})$ is a sub-expression of $\rho(\tup)$ and 
the evaluation of $\rho(\tup)$ in $\Mm$ as performed in the computation of $\mathit{Minimal}$ described above encounters $G(\overline{r})$. Further, let $\psi$ be the path condition of the sub-expression $G(\overline{r})$ that is reached 
(see Definition~\ref{def:pathcond}). Then, we have that $\Mm \models \psi$.
\end{lemma}

We skip the proof of this lemma as it follows trivially from the description of the evaluation mechanism and Definition~\ref{def:pathcond}. Note that since $G(\overline{r})$ occurs in the expansion of $D(\overline{t})$, we have $\strat(D) = \strat(G)$. Now, from the definition of provable acyclicity (Definition~\ref{def:provably-acyclic}), we know:
\begin{center}
$\thcomb \models \left(\bigwedge\limits_{\strat(H) \,<\, \strat(D)}\defn_H\right) \rightarrow \psi \rightarrow \mathit{Rank}_G(\overline{r}) < \mathit{Rank}_D(\tup)$
\end{center}
Per our assumption we have only one stratum, so the set $\{H\}_{H \in \Dd, \strat(H) < \strat(D)}$ is empty. Since $\Mm$ is a $\thcomb$ model that satisfies $\psi$, we obtain $\Mm \models \mathit{Rank}_G(\overline{r}) < \mathit{Rank}_D(\tup)$. Therefore, if updating $D(\tup)$ requires updating $G(\overline{r})$, then the rank of $G(\overline{r})$ is smaller. Therefore, each recursive call of $\mathit{Minimal}$ is made on a smaller element of $K$, and therefore the evaluation of $D(\tup)$ cannot depend on itself. The mechanism for picking a minimal element is indeed well-defined and we can produce at the end of the procedure a model $\Mm'$ that satisfies $\defns[K] \cup \defns[C] \cup \Gamma$.
\end{proof}

\added{
\mypara{Repair for all tuples} We now show that we can repair definitions everywhere, i.e., on arbitrarily large sets of $\Dd$-applications. We first show some easy results. For a sort $\sigma$, consider the set $U_\sigma$ consisting of all terms of type $\sigma$. Then, the set $\DApp$ of all possible $\Dd$-applications is:

\begin{center}
$\DApp = \{ (D,(t_1,t_2,\ldots t_r)) \,\mid\, D \in \Dd, \textrm{ D has signature } \sigma_1 \times \sigma_2 \ldots \sigma_r \rightarrow \sigma,\, t_i \in U_{\sigma_i} \}$
\end{center}

Note that any $\Dd$-application $(D, \tup)$ must belong to $\DApp$, and in particular any computationally closed set $C$, which is a set of $\Dd$-applications, must be a subset of $\DApp$.

Finally, if $\Nn$ is a Herbrand model of $\thcomb$, then its universe for a sort $\sigma$ is precisely $U_\sigma$. Therefore, satisfying definitions \emph{everywhere} on $\Nn$ simply amounts to satisfying definitions on $\DApp$. The following lemma captures this idea:

\begin{lemma}[Definition Application Terms in a Herbrand Model]
\label{lem:defs-everywhere-herbrand}
Let $\Nn$ be a Herbrand model of $\thcomb$. Then, $\Nn \models \defns$ if and only if $\Nn \models \defns[\DApp]$
\end{lemma}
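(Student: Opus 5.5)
The plan is to unfold both sides of the biconditional into statements about individual elements of the universe, and then use that $\Nn$ is Herbrand to match elements with terms. By definition, $\defns$ is the conjunction of the universally quantified sentences $\defn_D \equiv \forall \overline{x}.\, D(\overline{x}) = \rho_D(\overline{x})$. Its instantiations, by Definition~\ref{def:definition-unfolding}, are the quantifier-free formulas $\defn_D[\tup] \equiv D(\tup) = \rho_D(\tup)$. So the task is to show that $\Nn$ satisfies every such universal sentence exactly when it satisfies all instantiations of it on tuples of ground terms of the right sorts. By the definition of $\DApp$, these are precisely the $\Dd$-applications in $\DApp$.

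For the forward direction, suppose $\Nn \models \defns$ and fix $(D,\tup) \in \DApp$. Since $\Nn \models \forall \overline{x}.\, D(\overline{x}) = \rho_D(\overline{x})$, the formula holds in particular when the variables are assigned the values $\llbracket t_1\rrbracket_\Nn,\ldots,\llbracket t_r\rrbracket_\Nn$. By the standard substitution lemma of first-order logic (which applies because each $t_i$ is a ground term and so no variable capture occurs), $\Nn \models \defn_D[\tup]$. This direction does not even use that $\Nn$ is Herbrand; it is simply universal instantiation.

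For the converse, suppose $\Nn \models \defns[\DApp]$ and fix $D \in \Dd$ of signature $\sigma_1 \times \cdots \times \sigma_r \rightarrow \sigma$, together with arbitrary elements $e_1,\ldots,e_r$ of the corresponding sorts of $\Nn$. Because $\Nn$ is Herbrand, its universe for sort $\sigma_i$ is exactly $U_{\sigma_i}$. Hence each $e_i$ is itself a ground term $t_i$, and $t_i$ denotes itself, that is, $\llbracket t_i\rrbracket_\Nn = t_i$. Therefore $(D,(t_1,\ldots,t_r)) \in \DApp$, and so $\Nn \models D(\tup) = \rho_D(\tup)$. Applying the substitution lemma again turns this into satisfaction of $D(\overline{x}) = \rho_D(\overline{x})$ under the assignment $x_i \mapsto e_i$. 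Since the $e_i$ were arbitrary, $\Nn \models \defn_D$, and ranging over all $D$ gives $\Nn \models \defns$.

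The only point needing care, and the main (mild) obstacle, is the claim that every element of $\Nn$ is denoted by a ground term, with that term's interpretation being the element itself. I will justify this from the simplifying assumption stated in the Prologue: the Herbrand universe consists of terms closed under the symbols in $\Ff \cup \Dd$ (including the Skolem symbols added to $\Ff$), and each term is interpreted as itself. One small detail also needs checking: the body $\rho_D$ may contain $\ite$ expressions and guarded destructors. The substitution lemma still applies, because $\ite$ is interpreted compositionally in the standard way and its semantics commutes with substituting ground terms for variables.
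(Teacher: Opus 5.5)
Your proof is correct and is essentially the paper's own argument, which the paper dismisses as obvious: in a Herbrand model the universe of each sort $\sigma$ is exactly $U_\sigma$, so the forward direction is universal instantiation and the converse holds because every element is a term, making satisfaction of $\defns$ everywhere the same as satisfaction of $\defns[\DApp]$. One minor refinement: your converse only needs every element of $\Nn$ to be the denotation of \emph{some} ground term, not that $\llbracket t\rrbracket_{\Nn} = t$ literally, and this weaker reading is the safer one because the equational content of $\thcomb$ (e.g.\ $\head(\cons(1,\nil)) = 1$) forces a Herbrand model of $\thcomb$ to be a quotient of the term algebra.
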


We skip the proof of the lemma since it is obvious. We now show the correctness of repairing arbitrarily large sets of $\Dd$-applications outside a computational closure.

\begin{lemma}[Definition Completion Lemma]
\label{thm:compclose-completion}
Let $\defns$ be a set of definitions with only one stratum. Let $C$ be a set that is computationally closed with respect to a set of quantifier-free formulas $\Gamma$. If $\defns[C] \land \Gamma$ is $\thcomb$-satisfiable, then $\defns \land \Gamma$ is $\thcomb$-satisfiable.
\end{lemma}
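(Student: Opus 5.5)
The plan is a compactness argument layered on top of Lemma~\ref{lem:multi-repair-lemma}. By the simplifying assumption of the Prologue, satisfiability of $\defns[C] \land \Gamma$ in a $\thcomb$ model yields a Herbrand model $\Mm$ of $\thcomb$ satisfying $\defns[C] \cup \Gamma$. By Lemma~\ref{lem:defs-everywhere-herbrand}, the target $\defns \land \Gamma$ is therefore witnessed by any Herbrand model of $\thcomb$ satisfying $\defns[\DApp] \cup \Gamma$. So I will show that the set
\[
\Delta \;=\; \defns[\DApp] \cup \defns[C] \cup \Gamma
\]
of quantifier-free formulas is $\thcomb$-satisfiable. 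Every formula in $\Delta$ is a ground instance of a universal definition, so $\defns \land \Gamma$ is then satisfiable as well. Note that this route does not even need Lemma~\ref{lem:defs-everywhere-herbrand} for the final step: a model of $\Delta$ that we re-Herbrandize on its term universe already works.

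By Proposition~\ref{prop:fol-compactness}, it suffices to show that every finite subset $\Delta_0 \subseteq \Delta$ is $\thcomb$-satisfiable. Any such $\Delta_0$ is contained in $\defns[K] \cup \defns[C] \cup \Gamma$, where $K$ is the finite set of $\Dd$-applications $(D,\tup) \in \DApp \setminus C$ whose unfoldings appear in $\Delta_0$. Since $\Mm$ is a Herbrand model satisfying $\defns[C] \cup \Gamma$ and $C$ is computationally closed for $\Gamma$, Lemma~\ref{lem:multi-repair-lemma} (which uses the single-stratum hypothesis through provable acyclicity in the termination of $\mathit{Minimal}$) supplies a model $\Mm'$ of $\defns[K] \cup \defns[C] \cup \Gamma$. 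The repaired model only changes interpretations of $\Dd$ symbols, so its $\Ff$-reduct is still a $\thcomb$ model and $\Mm'$ is a $\thcomb$ model. Hence $\Delta_0$ is satisfiable, and by compactness $\Delta$ has a $\thcomb$ model $\Nn_0$.

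The main obstacle is the step from $\Nn_0 \models \defns[\DApp]$ to $\Nn_0 \models \defns$. The model produced by compactness need not be Herbrand, so the definitions are guaranteed only on elements denoted by terms. I would handle this by restricting $\Nn_0$ to the substructure generated by the interpretations of ground terms, that is, by taking the term model induced by $\Nn_0$. This substructure is closed under $\Ff \cup \Dd$, and quantifier-free formulas keep their truth values in it, so it satisfies $\defns[\DApp] \cup \Gamma$, and every element there is named by a term, giving $\defns$. It still has to be checked that this substructure is a $\thcomb$ model. This is exactly what the Skolemization assumption of the Prologue provides: with $\thcomb$ Skolemized, it is (equivalent to) a universal theory, and universal theories are preserved under substructures. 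With that, Lemma~\ref{lem:defs-everywhere-herbrand} applies and gives the desired model of $\defns \land \Gamma$.
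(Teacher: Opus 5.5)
Your proposal is correct and is essentially the paper's proof. Both apply compactness to $\defns[C] \cup \defns[\DApp \setminus C] \cup \Gamma$, handle each finite subset with Lemma~\ref{lem:multi-repair-lemma}, then pass to a Herbrand model and invoke Lemma~\ref{lem:defs-everywhere-herbrand}. Your extra care fills in details the paper leaves implicit: you check that the repaired models are still $\thcomb$ models, and you justify the paper's bare ``without loss of generality, Herbrand'' step via the term-generated substructure of a model of the Skolemized, universal theory. (Your first-paragraph remark that satisfying the ground instances already gives $\defns$ only becomes valid after that restriction.)
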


\begin{proof}[\textit{End of proof of Lemma~\ref{thm:compclose-completion}}]
We claim that $\defns[\DApp] \cup \Gamma$ is $\thcomb$-satisfiable. Since $C \subseteq \DApp$, let us rewrite this as $\defns[C] \cup \defns[\overline{C}] \cup \Gamma$, where $\overline{C} = \DApp \setminus C$ is the complement of $C$. 

We have from the statement of the theorem that $\defns[C] \cup \Gamma$ is satisfiable. Therefore, to show satisfiability of our desired set by compactness, it is sufficient to show that $\defns[C] \cup \Gamma \cup B$ is satisfiable for an arbitrary finite subset $B$ of $\defns[\overline{C}]$. 

Observe that a finite subset of $\defns[\overline{C}]$ is of the form $\defns[K]$ for a finite set $K \subseteq \DApp$. We are now done, since we know that $\defns[C] \cup \Gamma \cup \defns[K]$ is satisfiable from Lemma~\ref{lem:multi-repair-lemma}.

Finally, consider a model $\Nn$ such that $\Nn \models \defns[DApp] \cup \Gamma$. Without loss of generality, we can assume that $\Nn$ is a Herbrand model. Applying Lemma~\ref{lem:defs-everywhere-herbrand} gives us that $\Nn \models \defns \cup \Gamma$, which concludes the proof.
\end{proof}

\subsection*{\bfseries Epilogue: Generalizing model repair to stratified definitions}

In the above proof we assumed that $\defns$ had only one stratum. For the case of multiple strata, we first begin with the model $\Mm$ of $\infunfold$ given to us by Stage 1. We then induct on the stratum number $i$, with the inductive hypothesis being that definitions for functions from strata $< i$ are satisfied everywhere. This hypothesis is true for the base case of the lowest stratum $i=0$ by Lemma~\ref{lem:multi-repair-lemma}. 

Inductively, we assume the hypothesis and then repair the model on definitions in the current strata by applying Lemma~\ref{thm:compclose-completion}. The arguments for the correctness of repair in this case are identical, i.e., we show the correctness of finite repair and then apply compactness.

However, there is a subtlety involved in showing the correctness of finite repair. The arguments are the same as in the proof of Lemma~\ref{lem:multi-repair-lemma}, with one exception. When we consider the argument for decreasing ranks, we needed the conjunct $\left(\bigwedge\limits_{\strat(H) < \strat(D)}\defn_H\right)$ to be valid. When there is only one stratum, this is trivial since the conjunct is empty. For a general stratum $i$ in our induction proof, the formula demands that the definitions corresponding to defined functions from lower strata are satisfied everywhere. But this is precisely the induction hypothesis! This concludes the proof of correctness of repair for a stratified set of definitions.\hfill \textit{End of proof of Theorem~\ref{thm:completeness}}\hskip2pt$\blacksquare$

\bigskip
As a corollary, we obtain the following result which captures the intuition behind the definition of provable acyclicity:

\begin{corollary}[Repetition of Theorem~\ref{thm:provably-acyclic}]
\label{cor:provably-acyclic}
Given a signature $(\Ss, \Ff, \Dd, \thcomb)$, a set of stratified definitions $\defns$ that are provably acyclic, and a model $\Mm$ of $\thcomb$, there exists a model $\Mm'$ of $\thcomb$ such that the interpretation of symbols in $\Ff$ coincides with $\Mm$ and interpretations of symbols in $\Dd$ satisfy their definitions.
\end{corollary}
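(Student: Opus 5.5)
We obtain the corollary by specializing the Definition Completion Lemma (Lemma~\ref{thm:compclose-completion}) and its stratified generalization in the Epilogue to a formula set $\Gamma$ that pins down the interpretation of $\Ff$ in $\Mm$. We take $C=\emptyset$. Computational closure for $\Gamma$ then requires only that no $\Dd$-symbol occur in $\Gamma$; condition (2) is vacuous for an empty set. So we choose $\Gamma$ to be the \emph{diagram} of $\Mm$ restricted to the signature $\Ff$. Concretely, we expand $\Ff$ with a fresh constant $c_a$ for every element $a$ of $\Mm$. $\Gamma$ is the set of all quantifier-free $\Ff$-sentences over these constants that hold in $\Mm$: equalities and disequalities among constants, $f(c_{a_1},\dots,c_{a_n})=c_b$ whenever $f^{\Mm}(a_1,\dots,a_n)=b$, and the atomic and negated atomic facts for the relation symbols.

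The first step is to check the hypotheses. $\defns[\emptyset]\land\Gamma=\Gamma$ is $\thcomb$-satisfiable, since $\Mm$ itself, with each $c_a$ interpreted as $a$ and $\Dd$ interpreted arbitrarily, is a $\thcomb$-model of it. Adding constants and interpreting the uninterpreted $\Dd$ does not affect membership in $\thcomb$. Next, $\emptyset$ is computationally closed for $\Gamma$ because $\Gamma$ mentions no symbol of $\Dd$. Applying the Definition Completion Lemma, with the Epilogue's induction on strata when there are several strata, yields a $\thcomb$-model $\Nn$ with $\Nn\models\defns\land\Gamma$. The Herbrand-model simplification from the Prologue is used inside that lemma. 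Since expanding the signature by constants only adds Skolem-style symbols, that assumption still applies.

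The final step, which we expect to be the main obstacle, is to recover a model whose $\Ff$-part literally \emph{coincides} with $\Mm$. The model $\Nn$ satisfies the diagram of $\Mm$, so $a\mapsto c_a^{\Nn}$ is an embedding of the $\Ff$-reduct of $\Mm$ into that of $\Nn$. However, $\Nn$ may contain extra elements. To handle this, we would not pass through an abstract $\Nn$ at all. Instead, we would rerun the repair argument directly on $\Mm$, or equivalently on the Herbrand-style structure whose universe is $\Mm$'s universe named by the constants $c_a$. Then the set $\DApp$ of all $\Dd$-applications is exactly the set of tuples of elements of $\Mm$.

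The finite repair of Lemma~\ref{lem:multi-repair-lemma} only ever updates $\Dd$-values, never $\Ff$, so every finite repair preserves $\Mm$'s $\Ff$-reduct, and termination of $\mathit{Minimal}$ is again given by provable acyclicity. To pass from finite to total repairs without leaving $\Mm$'s universe, we plan to replace compactness by a König's-lemma or Tychonoff-style argument. The space of $\Dd$-interpretations over the fixed universe is a product of finite boolean choices when $\Dd$ consists of predicates; for general codomains, including background sorts, we would constrain each $D(\tup)$ to the finitely many values produced by evaluating bodies. Failing that, we would fall back on an ordinal-indexed transfinite repair that updates values along the rank order.

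If this route proves too delicate, the corollary still follows up to isomorphism onto a substructure. We would take $\Nn$, note that its $\Ff$-reduct is also a $\thcomb$-model, and argue that the $\Ff$-reduct of $\Mm$ embeds into it. We would then read ``coincides'' as ``identical after the identification of $\Mm$ with its image'', which is the sense actually needed for Theorem~\ref{thm:provably-acyclic}.
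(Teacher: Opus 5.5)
The step that fails is the one you single out as the main obstacle, and it cannot be rescued: the statement as literally worded is false. Take lists over naturals with Presburger arithmetic (the running example of Section~\ref{sec:combined-theories}) and $\forall x.\,\mathit{length}(x)=\ite(\isnil(x),0,1+\mathit{length}(\tail(x)))$, which is at stratum $0$ and provably acyclic. Let $\Mm$ have the standard $\mathbb{N}$ as its number sort. For its list sort, use the universe of Section~\ref{sec:sorted-list-membership}: finite sequences together with pairs $(s,i)$, where $s$ is an infinite sequence, $i\in\mathbb{Z}$, and $\cons(j,(s,i))=(j{::}s,i+1)$. This is a $\thcomb$-model. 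For $x_0=((0,0,\ldots),0)$, every $\tail^n(x_0)$ is a cons. So any interpretation satisfying $\defn_{\mathit{length}}$ has $\mathit{length}(x_0)=n+\mathit{length}(\tail^n(x_0))\ge n$ for all $n$, and no element of $\mathbb{N}$ qualifies. Hence no $\Mm'$ with the same $\Ff$-part exists.

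This shows where each of your tools breaks. Every finite set of instances is repairable on $\Mm$, but the forced value of $\mathit{length}(x_0)$ escapes to infinity. There is therefore no finite or compact set of candidate values, and K\"onig/Tychonoff does not apply; it does go through when every symbol of $\Dd$, ranking functions included, has a finite codomain. ``The finitely many values produced by evaluating bodies'' is not well defined, since along an infinite descending rank chain evaluation has no base case. A transfinite repair along the rank order needs well-foundedness, which the paper deliberately drops and which fails exactly on such nonstandard elements. Compactness succeeds only by adding a nonstandard natural, i.e.\ by leaving $\Mm$'s universe.

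Your fallback is the correct salvage and is sound as you describe it. With $C=\emptyset$ and $\Gamma$ the atomic diagram of $\Mm$, the Definition Completion Lemma plus the epilogue's induction on strata yields $\Nn\models\thcomb\cup\defns$, with $a\mapsto c_a^{\Nn}$ an embedding of $\Mm$'s $\Ff$-reduct. The counterexample shows that passing to such an extension cannot be avoided in general.

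This is also more than the paper's own argument delivers. The paper applies the completion lemma with $C=\Gamma=\emptyset$, and that lemma ends with compactness and a fresh Herbrand model. So its proof by itself only shows that $\defns$ is $\thcomb$-satisfiable, with no connection to $\Mm$; the gap you noticed is present there too.

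What you should not do is present the embedding version as a reading of ``coincides''. The paper relies on the literal version when it says provably acyclic definitions ``can always be given a valuation consistent with their definitions on any model'', and that is exactly what fails. State the extension form explicitly as the corrected corollary.
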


\begin{proof}[\textit{End of proof of Corollary~\ref{cor:provably-acyclic}}]
We simply apply the arguments of the definition completion lemma (Lemma~\ref{thm:compclose-completion}) to $\Mm$ for each stratum of the definitions starting with the lowest, setting the set $C$ where definitions are already satisfied as well as the set of formulas $\Gamma$ to be empty. We can show by induction that when we apply the lemma at stratum $i$, the resulting model will satisfy definitions at strata $\leq i$ everywhere.
\end{proof}
} 

\section{\sys reasoning in \lh}
\label{sec:lh}

Next, let us see how the \lh verifier (\slh) employs a particular
instance of \sys reasoning referred to by the tool as
\emph{reflection} and \emph{proof by logical evaluation} (PLE).
We show how a user might use \slh to develop a small library
of theorems about Peano numbers to illustrate why it can be viewed as \sys reasoning, why its \sys-style
instantiation heuristics are effective in practice, and, perhaps
more importantly, why extra information is \emph{really} required
from the user when instantiation fails.

\mypara{Peano Addition}
Consider the definition of \emph{Peano} numbers
\begin{code}
data Peano = Z | S Peano
\end{code}

As described in Section~\ref{sec:syntax-semantics},
\lh uses the above definition to generate an
ADT |Peano|  with
(1)~two \emph{constructors} $\tZ$ and $\tS$,
(2)~two \emph{recognizers} $\isZ$ and $\isS$, and
(3)~a single \emph{destructor} $\tpred$.
Next, consider the following function
that recursively defines addition
\begin{code}
  plus :: Peano -> Peano -> Peano
  plus Z     m = m
  plus (S n) m = S (plus n m)
\end{code}
\slh generates a \emph{definition} for |plus| which is an ``axiom'' constraining $\tplus$ \cite{Vazou18}
%
\begin{align}
\label{eq:def:plus}
\fdef{\tplus}\ \equiv\ \forall n, m.\ \tplus(n, m) = \tIte{\isZ(n)}{m}{\tS(\tplus(\tpred(n), m)}
\end{align}

\subsection{Proof by instantiation}

\mypara{Propositions as Types}
Suppose we wish to verify that the addition of |Z| is an identity function, \ie
the proposition
${\forall n: \tPeano.\ \tplus(\tZ, n) = n}$.
In \slh, a user uses the recipe of ``Propositions as Types''
to \emph{specify} the property as a type, and \emph{verify}
it via a function |zeroL| that inhabits the type:
\begin{code}
  {-@ zeroL :: n:Peano -> { plus Z n == n }  @-}
  zeroL n = ()
\end{code}
In the above type signature, the \emph{input parameter} has the
effect of quantifying over all $n$, and the \emph{output post-condition}
stipulates the particular property that must hold for each $n$ \cite{Wadler15}.

\mypara{Programs as Proofs}
To check this proof, \slh generates a VC
${\fdef{\tplus} \rightarrow \forall n. \tplus(\tZ, n) = n}$.
Next, it uses \emph{logical evaluation} (PLE) \cite{Vazou18} to
instantiate the definition of $\tplus$ (\ref{eq:def:plus}) 
at $(\tZ, n)$ to get the \ivc
${\forall n.\ \instat{\fdef{\tplus}}{\tZ, n} \rightarrow \tplus(\tZ, n) = n}$
using the instantiation

\centerline{
$\instat{\fdef{\tplus}}{\tZ, n} \ \equiv\ (\tplus(\tZ, n) = \tIte{\isZ(\tZ)}{n}{\tS(\tplus(\tpred(\tZ), n)}$
}

\noindent
The SMT solver proves the above \ivc is valid
even when $\tplus$ is uninterpreted, thereby
verifying that |plus Z| is an identity function.



\subsection{Proof by induction}
\label{sec:lh-induction-proofs}

\slh makes no attempt to automate inductive proofs.
Instead, the programmer must \emph{explicate induction
via recursion}, by writing programs where the induction
hypothesis is made explicit in the VC via the asserted
post-conditions of recursive calls to smaller inputs.
For example, suppose we wish to verify that the
definition of |plus| is commutative.
As before, the programmer would start by specifying
the above proposition as the following type:

\begin{code}
    {-@ comm :: n:_ -> m:_ -> {plus n m == plus m n} @-}
\end{code}

They would first attempt a \emph{direct
proof} |comm n m = ()|\footnote{$()$ is a ``unit proof'' with no extra hints from the user. \slh~ attempts to prove the VC directly given a unit proof.} which yields the \sys VC
%
\begin{equation}
\label{eq:comm:prop}
\fdef{\tplus} \rightarrow (\forall n, m.\ \tplus(n, m) = \tplus(m, n))
\end{equation}
Sadly, PLE does not find any suitable instantiations,
and so the SMT solver \emph{cannot} prove the above
is valid when $\tplus$ is uninterpreted and hence
\emph{rejects} the code on the left.

\mypara{Rogue nonstandard model}
Did \slh simply give up too early --- maybe some carefully
chosen instantiations would produce a valid \ivc?
It turns out that this is not true, and in fact we can argue this formally. 
Verification fails because formula~\ref{eq:comm:prop} 
is refuted by a rogue nonstandard model (Figure~\ref{fig:peano:non}) where the
interpretation for the constructors and destructors respects the ADT axioms for |Peano| and $\tplus$ satisfies its definition, but there exists an element $i'$ such that $\tplus(i', \tZ) \neq \tplus(\tZ, i')$ in the model. 
%
Let us banish such rogue models by proving
that adding |Z| on the \emph{right} is also an identity:
%
\begin{equation}
\label{eq:zeroR:prop}
{\forall n: \tPeano.\ \tplus(n, \tZ) = n}
\end{equation}
%

A direct proof of~\ref{eq:zeroR:prop} 
is doomed--- it yields the VC below which is refuted by 
\iffull
the rogue nonstandard model in Figure~\ref{fig:peano:non}: 
\fi
\iffull\else
a rogue nonstandard model (see the full version of the paper for a description of the model): 
\fi
${\fdef{\tplus} \rightarrow \forall n.\ \tplus(n, \tZ) = n}$


\begin{figure}[t]
\added{
$$\begin{array}{lcllcllcl}
\multicolumn{3}{c}{\mbox{Constructor}} & \multicolumn{3}{c}{\mbox{Destructor}}  & \multicolumn{3}{c}{\mbox{Plus}}    \\
\intp{\tS}(i)  & = & i + 1    & \intp{\tpred}(i)  & = & n - 1\ \mbox{if}\ 0 < n & \intp{\tplus}(i, j)   & = & i + j \\
\intp{\tS}(i') & = & (i + 1)' & \intp{\tpred}(i') & = & (i-1)'                  & \intp{\tplus}(i', j') & = & (i + j)' \\
\intp{\tZ}     & = & 0        &                   &   &                         & \intp{\tplus}(i', j)  & = & (i + j + 1)' \\
               &   &          &                   &   &                         & \intp{\tplus}(i,  j') & = & (i + j)'
\end{array}$$
}
\caption{
\added{
Rogue Nonstandard Model for $\tPeano$ over the universe
$\univ \equiv \{ 0,1,2, \ldots \} \cup \{ \ldots,-2',-1',0',1',2',\ldots \}$.
comprising the naturals and a \emph{primed} version of each integer.
The model provides an interpretation for various constructors, destructors
and $\tplus$ that respects the ADT axioms,
but where $\tplus$ has a nonstandard interpretation on nonstandard ADT elements $i'$: $\intp{\tplus}(i', \tZ) = (i + 1)' \not = i'$, refuting~\ref{eq:zeroR:prop}, 
and
$\intp{\tplus}(i', j) \not = \intp{\tplus}(j, i')$, refuting~\ref{eq:comm:prop}. 
}
}
\label{fig:peano:non}
\end{figure}

\begin{figure*}
%
\begin{code}
-- Succeeds
{-@ zeroR :: n:_ -> {plus n Z == n} @-}
zeroR Z     = ()
zeroR (S n) = zeroR n

-- Fails
{-@ comm :: n:_ -> m:_ -> {plus n m == plus m n} @-}
comm Z     m = zeroR m
comm (S n) m = comm n m

-- Succeeds
{-@ comm :: n:_ -> m:_  -> {plus n m == plus m n} @-}
comm Z     m = zeroR m
comm (S n) m = comm n m && succR m n
\end{code}
\caption{Proof of the commutativity of $\tPeano$ addition:
The explicit case-splitting, recursion and ``lemma application''
are needed to eliminate rogue nonstandard models.}
\label{fig:zeroR}
\label{fig:comm}
\end{figure*}

\mypara{An inductive proof}
The programmer must spell out an inductive 
proof as a (recursive) piece of code that
yields a VC which \emph{excludes} rogue nonstandard models by explicitly stating the induction hypothesis as an antecedent in the VC.
This is achieved via the proof |zeroR|
shown on the left in Figure~\ref{fig:zeroR}.
First, we split cases (via a pattern match)
on the first argument, treating separately
the cases where the argument is |Z| or |S n|.
Second, the recursive call to |zeroR n| puts
the post-condition of |zeroR| for the
\emph{smaller input} |n| as a hypothesis
for the new VC
%

\centerline{
$\fdef{\tplus} \rightarrow (\forall n.\ \tplus(\tZ, \tZ) = \tZ \wedge \tplus(n, \tZ) = n \rightarrow \tplus(\tS(n), \tZ) = \tS(n))$
}

\smallskip
\noindent
PLE now instantiates $\fdef{\tplus}$ (\ref{eq:def:plus}) 
at
$(\tZ, \tZ)$ and $(\tS(n), \tZ)$ to get the \ivc
%

\centerline{
$\instat{\fdef{\tplus}}{\tZ, \tZ} \rightarrow \instat{\fdef{\tplus}}{\tS(n), \tZ} \rightarrow
                (\tplus(\tZ, \tZ) = \tZ \wedge \tplus(n, \tZ) = n \rightarrow \tplus(\tS(n), \tZ) = \tS(n))$
}
%
\added{
\smallskip
\noindent
where the instances are
\begin{align*}
\instat{\fdef{\tplus}}{\tZ, \tZ} & \equiv (\tplus(\tZ, \tZ) = \tIte{\isZ(\tZ)}{\tZ}{\tS(\tplus(\tpred(\tZ), \tZ)} \\
\instat{\fdef{\tplus}}{\tS(n), \tZ} & \equiv (\tplus(\tS(n), \tZ) = \tIte{\isZ(\tS(n))}{\tZ}{\tS(\tplus(\tpred(\tS(n)), \tZ)}
\end{align*}
} 

%

\noindent
The \ivc 
is valid even when $\tplus$ is
uninterpreted, thus proving formula~\ref{eq:zeroR:prop}. 
In essence, the (well-founded) \emph{recursive call} to |zeroR| establishes
the \emph{induction hypothesis} for the smaller |n|, thereby eliminating the rogue nonstandard models, letting us verify the proposition for \emph{any} |Peano| |n|.

\subsection{Proof by lemmas}
\label{sec:lh-lemma-proofs}

Next, let us see how to use auxiliary lemmas like |zeroR|
to eliminate rogue nonstandard models that thwarted the direct
proof of the commutativity of |plus|.
First, the programmer might attempt an inductive proof
(like the |zeroR|) as shown in the middle in Figure~\ref{fig:comm}:
split cases on whether the first parameter is |Z| or |S n|.
In the base case, they would \emph{call} |zeroR m| to
eliminate the rogue nonstandard model where 
%
\iffull
$\tplus(i', 0) \neq \tplus(0, i')$ (Figure~\ref{fig:peano:non}). 
\fi
\iffull\else
$\tplus(i', 0) \neq \tplus(0, i')$.
\fi
In the inductive case, they would recursively invoke
the induction hypothesis via recursively calling |comm n m|.
This time, \slh generates the VC
%
\begin{align}
\label{eq:comm:vc:1:2}
  \fdef{\tplus} \rightarrow (\forall n, m.\ & (\tplus(m, \tZ) = m \rightarrow \tplus(\tZ, m) = \tplus(m, \tZ)) \notag \\
  \wedge\ & (\tplus(n, m) = \tplus(m, n) \rightarrow \tplus(\tS(n), m) = \tplus(m, \tS(n))))
\end{align}
%
%
%
%

Thanks to the equality asserted by the use
of the ``lemma'' |zeroR m|, the first conjunct
can proved valid via the instantiation $\instat{\tplus}{(m, \tZ)}$.
However, the second conjunct is invalid
\emph{despite} the recursive (inductive)
call to |comm n m| because of a \emph{different} rogue nonstandard model
for $\tplus$ 
that falsifies the second conjunct! 

\added{
\mypara{Rogue nonstandard model for $\mathtt{comm}$ Attempt 1}

The following is a rogue nonstandard model for the \inlinespeccode{Peano} numbers over the same ADT universe as the model in Figure~\ref{fig:peano:non} and a different interpretation of \inlinespeccode{plus} that refutes the second conjunct of formula~\ref{eq:comm:vc:1:2}. 
%
\begin{equation}
\begin{array}{lcllcl}
  \intp{\tplus}(i, j)   & = & i + j       & \quad \intp{\tplus}(i', j') & = & (i + j + 1)' \quad \mbox{if}\ 0 \leq j \\
  \intp{\tplus}(i, j')  & = & (i + j)'    & \quad \intp{\tplus}(i', j') & = & (i + j - 1)' \quad \mbox{otherwise}    \\
  \intp{\tplus}(i', j)  & = & (i + j)'    & \quad
\end{array}\notag{}
\end{equation}
The reader should take a moment to check that the
above definition respects $\fdef{\tplus}$.
However, even though the induction hypothesis trivially holds
at $\intp{\tplus}(-1', -1')$ (as the arguments are the same),
$\intp{\tplus}(0', -1') = -2'$ which differs from $\intp{\tplus}(-1', 0') = 0'$!

\mypara{Proof of $\mathtt{comm}$ Attempt 2}
The peculiar property of the rogue nonstandard
model from attempt 1 is that it introduces a \emph{discontinuity}
at $0'$ that violates 
$\forall n, m: \tPeano.\ \tplus(n, \tS(m)) = \tS(\tplus(n, m))$.
We separately specify and inductively prove this property via
a lemma \inlinespeccode{succR}:
}

\begin{code}
  succR :: n:Peano -> m:Peano -> { plus n (S m) = S (plus n m) }
\end{code}

\added{
The proof of \inlinespeccode{succR} is similar to \inlinespeccode{zeroR}:
}

\begin{code}
  {-@ succR :: n:Peano -> m:Peano -> { plus n (S m) = S (plus n m) } @-}
  succR Z     _ = ()
  succR (S n) m = succR n m
\end{code}

\added{
Now, we can use both helper lemmas \inlinespeccode{zeroR} and \inlinespeccode{succR} to
eliminate the rogue nonstandard models that thwarted
our previous attempts, by the proof shown on the right in
Figure~\ref{fig:comm}.
First, we replace the body of \inlinespeccode{comm Z m} with \inlinespeccode{comm z m = zeroR m}
which add the post-condition of \inlinespeccode{zeroR} as a lemma.
Second, we strengthen the body of \inlinespeccode{comm (S n) m} with
a call \inlinespeccode{succR n m}.
Together, these lemmas yield a VC with the strengthened antecedents
that preclude the above rogue nonstandard models
\begin{align*}
  \fdef{\tplus} \rightarrow (\forall n, m.\ & (\tplus(m, \tZ) = m \rightarrow \tplus(\tZ, m) = \tplus(m, \tZ)) \\ 
   \wedge\ & (\tplus(n, m) = \tplus(m, n) \rightarrow \tplus(m, \tS(n)) = \tS(\tplus(m, n)) \rightarrow \\ 
    & \quad \tplus(\tS(n), m) = \tplus(m, \tS(n)))) 
\end{align*}
This time, PLE instantiates $\fdef{\tplus}$ at $(\tZ, m)$ and $(\tS(n), m)$
to yield an \ivc that the SMT solver validates even when $\tplus$ is uninterpreted.
Note that as with induction, \slh makes no attempt to automate the creation
and use of such lemmas: the programmer must explicitly spell them out by defining
and proving them, and then ``calling'' the lemmas inside the theorem body to
appropriately ``instantiate'' them at the relevant values, thereby yielding a VC that can be automatically discharged by \sys reasoning.
} 

\subsection{Rogue nonstandard models in proofs about data structures}
\label{sec:intuitive-ns-model}

We use the simple |Peano| datatype to illustrate how \slh implements \sys reasoning, and how direct proofs can fail due to rogue nonstandard models which can be eliminated via explicit induction (recursion) and lemmas (function calls).
Similar phenomena occur when verifying more complicated
properties. Consider the datatype of finite
maps from keys (|k|) to values (|v|)
\begin{code}
  data Map k v = Leaf | Node k v (Map k v) (Map k v)
\end{code}
Figure~\ref{fig:bst} shows the code for two functions that
respectively |get| the value of a |key| from a tree, and
|set| the value of a |key| to some new |val| leaving the values
of all other keys unchanged.
\begin{figure*}[t]
\begin{minipage}[t]{.42\textwidth}
\begin{code}
get :: Map k v -> k -> Maybe v
get (Node k v l r) key
 | key == k  = Just v
 | key <  k  = get l key
 | otherwise = get r key
get Leaf _   = Nothing
\end{code}
\end{minipage}
\begin{minipage}[t]{.57\textwidth}
\begin{code}
set :: Map k v -> k -> v -> Map k v
set (Node k v l r) key val
 | key == k  = Node key val l r
 | key <  k  = Node k v (set l key val) r
 | otherwise = Node k v l (set r key val)
set Leaf k v = Node k v Leaf Leaf
\end{code}
\end{minipage}
\caption{Implementations of $\mathtt{get}$ and $\mathtt{set}$ functions for Binary Search Tree.}
\label{fig:bst}
\end{figure*}
The following proposition is one of McCarthy's two laws that
characterize finite maps

\centerline{$\forall m, k, v.\ \tget(\tset(m, k, v), k) = \tJust(v)$}
%

\added{
\mypara{Attempt 1: Direct proof}
In \slh, we could try to specify and
verify the above law as
}
\begin{code}
  getEq :: m:_ -> k:_ -> v:_ -> { get (set m k v) k = Just v }
  getEq m k v = ()
\end{code}
\added{
which would yield the VC
${ ( \fdef{\tget} \wedge \fdef{\tset} )  \rightarrow \forall m,k,v.\ \tget(\tset(m, k, v), k) = \tJust(v)}$
Unfortunately, no (finite) instantiation can prove the above VC, and we describe here an intuitive rogue nonstandard model that falsifies the theorem.
}

\textit{Rogue Nonstandard Model~}
Let the universe $\univ$ be the set of \emph{finite trees} and \emph{infinite non-regular trees}\footnote{A non-regular tree is one that is not isomorphic to any of its proper subtrees. This is a technical condition we require to ensure that the ADTs are \emph{acyclic}, i.e., it is not possible to reach a term by destructing itself.} over $(k, v)$ pairs.
The interpretation for $\tget(m, k)$ on a tree $m$
follows the usual path in a binary search tree to find $k$, even on infinite trees.
If the $k$ is found, $\tget$ 
returns the corresponding value, and if the path ends or continues forever, then $\tget$
returns |Nothing|.
%
%
%
%
%
The interpretation for $\tset$ is similar,
except that on an infinite computation it
returns the input tree.

\rev{To see why this model refutes the VC, consider an infinite non-regular binary tree $m$ that does not contain $1$ as a key. For instance, the nodes of $m$ can have keys labeled according to a level-order traversal, starting from $2$.}
Let us call |set|, to set key $1$ to the value $1$
and try to $\tget$ the value of key $1$ after
that, \ie consider $\tget(\tset(m, 1, 1), 1)$
By the above interpretation $\tset(m, 1, 1) = m$,
as all paths in $m$ are infinite, and further
$\tget(m, 1) = \mathtt{Nothing}$ thereby refuting
the proposition despite being a model of the ADT
theory and the definitions of |get| and |set|. Intuitively, |set| loses the update entirely, and hence |get| returns |Nothing|.

\added{
\mypara{Attempt 2: Inductive proof}
We cannot prove the law directly. Instead, we need an inductive proof as in Figure~\ref{fig:zeroR}.
}
\begin{code}
  getEq (Node key _ l r) k v
    | k == key   = ()
    | k <  key   = getEq l k v
    | otherwise  = getEq r k v
  getEq Leaf _ _ = ()
\end{code}
\added{
The successful proof splits cases on constructor
for \inlinespeccode{m} and then on the ordering of the two keys,
and recursively invokes \inlinespeccode{getEq} (\ie applies the
induction hypothesis) on the left and right subtrees
appropriately.
The induction hypotheses in the antecedents of the VC for the above 
eliminates the rogue nonstandard models, and allows PLE to find suitable instantiations such that the resulting \ivc can be validated by the SMT solver.
}

\added{
\subsection{Comparison between PLE~\texorpdfstring{(\cite{Vazou18})}{} and \qialg}
\label{sec:lh-ple}

Although our development of the \qialg\ algorithm is inspired from the PLE (Proof by Logical Evaluation) heuristic introduced for \liquid in~\cite{Vazou18}, there are some differences between the two algorithms. The work in~\cite{Vazou18} shows that PLE is in fact complete if an `equational proof' exists. However, this result is much weaker than ours, and in fact PLE fails to prove simple theorems that \qialg\ can prove. We illustrate this difference using the following example:
}

\begin{code}
  g :: Int -> Int
  g x = 1

  h :: Int -> Int
  h x = 2

  f :: Int -> Int
  f x
    | x > 0   = g x
    | otherwise  = h x
\end{code}

\added{
\smallskip
Let us consider the property $\forall x.\, f(x) > 0$. The PLE heuristic implemented in \slh cannot prove this property. Intuitively, in the case of this example PLE would only unroll a definition if it can show that the terms used for instantiation satisfy the branch conditions, effectively only unrolling the definition for one case. Therefore, it cannot unroll either case in the definition of $f$ for an arbitrary $x$ and ends up not using any instantiations. To fix this, the user would have to explicitly provide a hint by splitting cases. 

In contrast, it is easy to see that \qialg~ will succeed as unfolding definitions on $x$ yields 

\centerline{$(f(x) = \ite(x > 0, g(x), h(x)) \land g(x) = 1 \land h(x) = 2) \rightarrow (f(x) > 0)$}

\noindent
which is FO-valid. \qialg\ is complete with respect to the combined first-order theories, but may perform a greater number of instantiations than PLE would in general. We discuss some related work on thriftiness and completeness of instantiations in Section~\ref{sec:relatedwork}.
} 
\section{\sys reasoning and reasoning in \leon}
\label{sec:leon}

The \leon system and its successor \stainless~\cite{stainless} reason with functional programs~\cite{SuterKoksalKuncak,BlancKuncakKneussSuter,suter10} using techniques broadly similar to \slh and \qialg. \leon reasons about Scala programs with quantifier-free pre/post conditions, and the recursively defined functions occurring in annotations are written in Scala as terminating functions. It also automates certain induction proofs, 
including  induction by ``stack-height'' (akin to Hoare-style reasoning), as well as structural induction on ADTs. \leon caters to other aspects of development as well, including techniques similar to bounded model-checking 
for finding errors. We do not discuss these aspects here as they are not relevant to our work.

The first observation is that verification conditions for \leon programs can also be modeled as $(\defns, \varphi)$ in the \sys fragment. Specifically, the property $\varphi$ is quantifier-free (implicitly universally quantified) and definitions are proven terminating.

While reasoning in \leon also involves unfolding definitions followed by SMT solving, there are important differences in comparison to \slh or \qialg. First, whereas \slh typically unfolds definitions only once, \leon continually unfolds definitions over multiple rounds similar to \qialg. Second, \leon asserts the contract of a function along with its definition on the given input arguments during the unfolding. In theory, it does so for \emph{every} unfolding, \emph{ad infinitum}. 
Observe that when expressing a problem as $(\defns, \varphi)$, contracts cannot be assumed for all subsequent function calls, as that would require universally quantified assumptions. 
Assuming contracts for subsequent function calls is also strictly more powerful than simply unfolding definitions, as we illustrated in Section~\ref{sec:insert-example-contracts}.

\begin{algorithm}
\begin{algorithmic}[1]
\Statex {\bf Input:} $\defns = \{\defn_D \mid D \in \Dd\}$ and $\mathit{Contracts} = \{\psi_D \mid D \in \Dd\}$, where contracts are universally quantified, as well as the universally quantified property $\varphi$ to be proven
\Statex {\bf Output:} $\mathit{VALID}$ (when it terminates)
\Statex {\bf Imports:} $\smtproc$ for deciding $\thstd$-satisfiability of quantifier-free formulas
\Procedure{\textsc{{\bfseries \qialg-Contracts}}$[\Ss; \Ff; \Dd; \thstd]$}{}
\State $\mathit{unfoldings} := \{\neg \varphi\}$ \CodeComment{Negate and Skolemize}
\State $\mathit{contracts} := \{\}$
\While{$\mathit{True}$}
\label{leonalg:outer-loop}
\State $\mathit{res} := \smtproc(\mathit{unfoldings} \cup \mathit{contracts})$ \CodeComment{Prove $\varphi$ with unfoldings + contracts}\!
\label{leonalg:checksat}
\If{$\mathit{res} = \mathit{UNSAT}$} 
\State \Return{$\mathit{VALID}$} \CodeComment{Success}
\Else
\State \CodeComment[\linewidth]{Compute $\Dd$-applications occurring in $\mathit{unfoldings}$}
\State $\unfolded := \{ (D,\tup) \mid D(\tup) \textrm{ occurs in } \psi \textrm{ for } \psi \in \mathit{unfoldings}\}$
\label{leonalg:get-terms}
\State \CodeComment[\linewidth]{Unfold the definitions and assert the corresponding contracts}
\State $\mathit{unfoldings} := \mathit{unfoldings} \cup \{ \defn_D(\overline{t}) \mid (D,\overline{t}) \in \unfolded\}$
\label{leonalg:add-instantiation}
\State $\mathit{contracts} := \mathit{contracts} \cup \{  \psi_D(\overline{t})\mid (D,\overline{t}) \in \unfolded\}$
\label{leonalg:add-contracts}
\EndIf
\EndWhile
\EndProcedure
\end{algorithmic}
\caption{\qialg\ (Algorithm~\ref{fig:qialg}) adapted to \leon-style reasoning. The additional step compared to Algorithm~\ref{fig:qialg} is line~\ref{leonalg:add-contracts}. When there are both pre- and post-conditions given, the contract considered to be pre $\rightarrow$ post.}
\label{fig:leonalg}
\end{algorithm}

\rev{
\mypara{Reduction to \qialg\ and completeness for \leon} Algorithm~\ref{fig:leonalg} shows the \qialg\ algorithm adapted to \leon-style reasoning. Given a set of definitions along with their contracts, the \leon-style algorithm unfolds definitions similar to \qialg, and additionally asserts their contracts (line~\ref{leonalg:add-contracts}). As with \qialg\, Algorithm~\ref{fig:leonalg} also unfolds definitions and asserts contracts until the goal is proven, or continues indefinitely. 

We show that the reasoning mechanism in \leon can in fact be captured in the \sys framework and proven using \qialg. Formally, given a set of definitions $\defns$, their contracts, and the property $\varphi$ to be proven, we construct a pair $(\defns',\varphi')$ such that running \qialg\ on $(\defns', \varphi')$ is equivalent to running Algorithm~\ref{fig:leonalg} on $\varphi$. The following theorem captures this:

\begin{theorem}
\label{thm:leonalg-to-qialg}
Given a set of provably acyclic definitions $\defns = \{\defn_D \mid D \in \Dd\}$, their contracts $\{\psi_D \mid D \in \Dd\}$ with the same signature, and a universally quantified property $\varphi$, there exists an effectively computable set of provably acyclic definitions $\defns'$ and a universally quantified formula $\varphi'$ such that Algorithm~\ref{fig:qialg} applied to $(\defns', \varphi')$ simulates the verification of $\varphi$ using Algorithm~\ref{fig:leonalg}: the two algorithms either both return $\mathit{VALID}$ or both diverge, and across a diverging run the limit value (i.e., accumulated across the diverging run) of the set $\mathit{unfoldings} \cup \mathit{contracts}$ in the run of \qialg-Contracts (Algorithm~\ref{fig:leonalg}) is the same as the limit value of the set of instantiated formulas (``$\mathit{formulas}$'') in the corresponding run of \qialg (Algorithm~\ref{fig:qialg}).
\end{theorem}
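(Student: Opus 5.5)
We will construct $(\defns',\varphi')$ by \emph{folding the contracts into the definitions}. For each $D \in \Dd$ with $\defn_D \equiv \forall \overline{x}.\, D(\overline{x}) = \rho_D(\overline{x})$ and contract $\psi_D \equiv \forall \overline{x}.\, \chi_D(\overline{x})$ (with $\chi_D$ quantifier-free), we introduce a fresh Boolean-valued symbol $C_D$ of the same arity, placed in the same stratum as $D$ and given the same ranking function $\mathit{Rank}_{C_D} := \mathit{Rank}_D$. The intended reading of $C_D(\overline{x})$ is ``the contract holds at $\overline{x}$, and the contracts of all calls made while computing $D(\overline{x})$ hold as well''. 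To this end we rewrite the body of $D$. In $\rho_D$, replace every occurrence of a $\Dd$-application $G(\overline{s})$ by
\[
\ite\big(C_G(\overline{s}),\, G(\overline{s}),\, G(\overline{s})\big),
\]
which leaves the value unchanged but makes $C_G(\overline{s})$ a $\Dd$-application that appears in the unfolding of $D$ under the same path condition as $G(\overline{s})$. We then set $\defn'_{C_D} \equiv \forall \overline{x}.\, C_D(\overline{x}) = \chi_D(\overline{x})$. Finally, let $\varphi' \equiv (\bigwedge_{D(\overline{t}) \in \varphi} C_D(\overline{t})) \rightarrow \varphi$, or rather its equivalent in which the guard is placed on each top-level occurrence. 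The precise bookkeeping must match line~\ref{leonalg:add-contracts}: Algorithm~\ref{fig:leonalg} asserts $\psi_D(\overline{t})$ exactly for those $(D,\overline{t})$ it unfolds.

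The difficulty with this first attempt is that an $\ite$ whose two branches coincide does not force $C_G(\overline{s})$ to be true, so it only generates the term, not the assumption. We therefore adjust the construction and make the contract an \emph{antecedent in $\varphi'$} instead. The cleanest encoding uses a single fresh predicate $\mathit{Ok}$, defined by conjoining, for each $\Dd$-application in the body, the corresponding contract instance together with the recursive $\mathit{Ok}$ calls. Concretely, $\mathit{Ok}_D(\overline{x}) = \chi_D(\overline{x}) \wedge \bigwedge \{\mathit{Ok}_G(\overline{s}) \text{ guarded by its path condition}\}$, and $\varphi' \equiv \bigwedge_{D(\overline{t}) \text{ in } \varphi} \mathit{Ok}_D(\overline{t}) \rightarrow \varphi$. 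The guard is $\pathcond \rightarrow \mathit{Ok}_G(\overline{s})$, written via $\ite$.

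This is not yet sound, since it makes $\varphi'$ weaker than $\varphi$. The theorem only claims that the two algorithms simulate each other, so this is acceptable as long as the formulas generated coincide with $\mathit{unfoldings}\cup\mathit{contracts}$ up to the definitional equations of the auxiliary symbols. Negating $\varphi'$ asserts all the $\mathit{Ok}$ atoms, and unfolding $\mathit{Ok}_D(\overline{t})$ then yields exactly $\chi_D(\overline{t})$ plus new $\mathit{Ok}$ atoms on precisely the applications that the next unfolding of $D$ produces.

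Provable acyclicity of $\defns'$ follows directly. Each call $\mathit{Ok}_G(\overline{s})$ in the body of $\mathit{Ok}_D$ sits under the same path condition as $G(\overline{s})$ in $\rho_D$, and we reuse $\mathit{Rank}_D$. Definition~\ref{def:provably-acyclic} for $\defns$ therefore transfers verbatim. Contract bodies may call $\Dd$ functions, and for those calls we place $\mathit{Ok}$ in a stratum above all of $\Dd$, so they cause no problem.

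The main obstacle is the round-by-round correspondence. We will prove, by induction on the round $i$, that the $i$-th value of $\formulas$ for $(\defns',\varphi')$ equals $\mathit{unfoldings}_i \cup \mathit{contracts}_i$, modulo the $\mathit{Ok}$ atoms and their defining equations, which are conservative: any model of the Leon set extends by interpreting the $\mathit{Ok}$ atoms appropriately. Equisatisfiability at every round then gives ``both return VALID or both diverge''. A possible off-by-one in rounds, arising because $\mathit{Ok}$ is unfolded in the same round as $D$, has to be checked carefully. If it arises, we will argue that the limits coincide and that satisfiability is monotone, which is all the theorem requires.
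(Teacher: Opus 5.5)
Your final construction is essentially the paper's. Your $\mathit{Ok}_D$ is its $\cont_G$: the contract instance conjoined with the recursive calls, each guarded by its path condition. Your $\varphi'$ asserts these predicates on the top-level applications of $\varphi$, provable acyclicity is inherited from $\defns$ by reusing the ranks, and the correspondence is a round-by-round induction, which the paper likewise only sketches. One small fix: the $\mathit{Ok}$ predicates should mirror the stratification of $\Dd$, shifted above it (e.g.\ $\strat(\mathit{Ok}_D) = N + \strat(D)$ with $N$ larger than every stratum of $\Dd$); putting them all in a single stratum would require a rank decrease on calls $\mathit{Ok}_D \to \mathit{Ok}_G$ with $\strat(G) < \strat(D)$, and $\defns$ does not provide that.
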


\begin{proof}
We sketch the core of the proof here. The key idea is to construct, for every $G \in \Dd$, a recursively defined predicate $\mathit{Contract}_G$ with the same input signature that returns a Boolean value that computes whether the contract of $G$, $\psi_G(\overline{x})$, holds for the called parameters $\overline{x}$, as well as all the contracts $\psi_D(\overline{y})$ corresponding to the recursive calls $D(\overline{y})$ that are encountered in the computation of $G(\overline{x})$. In other words, asserting $\mathit{Contract}_G(\overline{x})$ asserts all the contracts that would be encountered in an unfolding of $G(\overline{X})$, as well as the contract for $G(\overline{x})$ itself. We then write a new verification condition that augments the property to be verified with extra assumptions that assume the contract predicates for the first level of unfolding. \qialg\ operating on this VC will then naturally assume the contracts for successive unfoldings, mimicking \leon-style reasoning. Theorem~\ref{thm:completeness} then yields completeness with respect to the underlying combined theory.


We present the detailed construction for a simplified scenario for ease of exposition. Let us fix an ADT universe with two destructors $d_1$ and $d_2$ that map to the ADT sort and other destructors that map to background sorts, and a single nullary constructor $\nil$. We also assume that there is only one recursively defined function $G$ that takes arguments $(x,\overline{y})$ where $x$ is of an ADT sort. Without loss of generality, let us also suppose that the definition of $G$ on arguments $(x,\overline{y})$ has two recursive calls with arguments $(d_1(x), \overline{t_1}(x,\overline{y}))$ and $(d_2(x), \overline{t_2}(x,\overline{y}))$ where $\overline{t_1}$ and $\overline{t_2}$ are tuples of terms over $x,\overline{y}$. Further, let the path condition to the recursive calls simply be the guard $x \neq \nil$. This definition is in the \sys fragment, i.e., provably acyclic, since the first parameter decreases in the subterm ordering on the ADT sort. Finally, let $\psi_G(x,\overline{y}, \rho)$ be the contract for $G$, with $x$ of type ADT and $\overline{y}$ from background sorts as the input parameters and a variable $\rho$ of the appropriate sort denoting the return value of $G$ on $(x,\overline{y})$.

Let $\varphi$ be the property that we want to prove. Let us assume that $G$ occurs in $\varphi$ only once, as the term $G(z,\overline{w})$. To model \qialg-Contracts, we need to be able to assume that $\psi_G(x,\overline{y}, f(x,\overline{y}))$ holds on arguments obtained by unfolding the definition of $G$ on $(z,\overline{w})$ arbitrarily many times.
 
Let us define a new recursive predicate $\cont_G(x, \overline{y})$ with input parameters identical to $G$, along with the following definition:
\begin{align*}
\forall x.~~ \cont_G(x,\overline{y}) = &\left( \psi_G(x, \overline{y}, f(x,\overline{y}))   \,\wedge\right.\notag\\
&\left.\left( x \neq \nil \rightarrow (\cont_G(d_1(x),\overline{t_1}(x,\overline{y})) \wedge \cont_G(d_2(x)),\overline{t_2}(x,\overline{y}))\right) \right)
\end{align*}

\noindent
The above definition says $\cont_G(x,\overline{y})$ is true iff $G$ satisfies its contract on the input $x,\overline{y}$ and further, if $x$ is not $\nil$, $\cont_G$ also holds for the recursive calls $(d_1(x), \overline{t_1}(x,\overline{y}))$ and $(d_2(x), \overline{t_2}(x,\overline{y}))$ that occur in the definition of $G$. Therefore, asserting $\cont_G(x, \overline{y})$ can be seen as asserting that the contract of $G$ holds for $(x,\overline{y})$ as well as all the tuples that occur when unfolding the definition of $G$ on $(x,\overline{y})$ \emph{ad infinitum}, i.e., the computational closure of $(x, \overline{y})$ (see Section~\ref{sec:completeness}). Note that we respect the path condition for the recursive calls of $\cont_G$, `aligning' them with the structure of the computation of the original definition. The path condition is $x \neq \nil$ in this case per our simplifying assumption, but one can analogously construct $\cont$ in the general case as well. Since $G$ is provably acyclic, $\cont_G$ is as well. We set $\defns'$ to be $\defns \cup \{\defn_{\cont_G}\}$.

Finally, to simulate \leon-style reasoning we want to assert the contract predicate for the topmost occurrence of $G$ in $\varphi$:

\begin{center}
$\varphi' \equiv \cont_G(z,\overline{w}) \rightarrow \varphi$
\end{center}

\noindent
where $G(z,\overline{w})$ is the only occurrence of $G$ in $\varphi$. If there are more occurrences, we similarly conjoin the corresponding $\cont_G$ calls in the antecedent.



We provide a concrete example to  illustrate the above construction. Consider definition of the usual $\mathit{length}$ function on lists, along with the contract $\mathit{length}(x) > 0$. The corresponding $\cont$ predicate is as follows:
\begin{align*}
\cont_\mathit{length}(x) = (\mathit{length}(x) > 0 \land (x \neq \nil \rightarrow \cont_\mathit{length}(\tail(x))))
\end{align*}

Now, observe that \qialg\ applied to a VC of the form $\cont_\mathit{length}(x) \rightarrow \varphi$ would, across rounds, add $\mathit{length}(x) > 0$, $\mathit{length}(\tail(x)) > 0$, $\mathit{length}(\tail(\tail(x))) > 0$ and so on to the quantifier-free query (along with definition unfoldings), mimicking \leon-style reasoning as desired.

We show the theorem by induction, arguing that in each round the two algorithms essentially instantiate the same set of formulas. The inductive hypothesis is that in each round the value of the expression $\mathit{unfoldings} \cup \mathit{contracts}$ in the run of \qialg-Contracts with $\defns$, $\mathit{Contracts}$, and $\varphi$ is the same as the value of $\mathit{formulas}$ in the corresponding round in the run of \qialg on $(\defns',\varphi')$. We skip detailing this argument since it essentially follows from an inspection of Algorithms~\ref{fig:qialg} and~\ref{fig:leonalg}.


We hence reduce \leon-style reasoning to \qialg. It is easy to see that the above construction can be generalized beyond the simplifying assumptions we made, including multiple ADT sorts with different signatures as well as multiple mutually recursively defined functions with their respective contracts.
\end{proof}
}

\smallskip
As far as we know the above result is new. Prior literature on \leon~\cite{suter10,BlancKuncakKneussSuter,SuterKoksalKuncak} shows soundness of the procedure. Restricted fragments~\cite{suter10} involving certain kinds of ``measures'' (functions from ADTs to background sorts) have been shown to admit complete unfolding based reasoning with respect to the \emph{standard model}, with a \emph{decidable} validity problem. In contrast, we show completeness (i.e., recursively enumerable procedures) for validity with respect to the \emph{combined theory} for a more general class of functions. Further, our logic is \emph{undecidable} (see Section~\ref{sec:boundarythms}), which shows that it is fundamentally different from decidable subclasses reported in prior art~\cite{suter10} (see also Section~\ref{sec:relatedwork}). 

Our results also show that when theorems are not provable in \leon, there ought to be rogue nonstandard models. We considered a few such examples  
and were indeed able to construct rogue nonstandard models. We describe one such example below.

\added{

\subsection{Rogue nonstandard models for \leon-style VCs}
\label{sec:leon-nonstd-model}

Consider the usual ADT of lists over integers, and the function $\mathit{rev}$ that computes the reverse of a list as follows:

$\forall x: \mathsf{List}.\, \reverse(x) = \;\ite(\isnil(x),\mathit{True}, \extend(\reverse(\tail(x)), \head(x)))$

\noindent where

$\forall x: \mathsf{List}, k: \mathsf{Int}.\, \extend(x, k) = \;\ite(\isnil(x), \cons(k, \nil), \cons(\head(x),\extend(\tail(x), k)))$

Consider the property $\forall x: \mathsf{List}.\, \reverse(\reverse(x)) = x$. We find that the \leon tool fails to prove this automatically. It follows then from our theorem on the completeness of \leon-style unfolding that there exists a rogue nonstandard model on which the VC generated by \leon does not hold. More precisely, since \leon attempts multiple rounds of unfolding along with assuming contracts, there exists a rogue nonstandard model on which none of the generated VCs hold. 

\mypara{Unfolding once} To build intuition, we first illustrate a rogue nonstandard model for the first level VC obtained from \leon-style unfolding with contracts. We encourage the reader to assure themselves that this VC is essentially
\begin{center}
$\big(\defn_\reverse \land \defn_\extend\big) \rightarrow \big(\reverse(\reverse(\tail(x))) = \tail(x) \rightarrow \reverse(\reverse(x)) = x \big)$    
\end{center}

A rogue nonstandard model that falsifies the above formula would satisfy the definitions of $\reverse$ and $\extend$, but would contain an element $x$ for such that applying $\reverse$ twice on $\tail(x)$ yields itself, but applying it twice on $x$ does not. We construct this model using the same universe considered in Section~\ref{sec:sorted-list-membership}. Namely, we consider the universe consisting of all finite sequences of integers (which correspond to the standard ADT Lists), as well as nonstandard elements of the form $(s, i)$ where $s$ is an infinite sequence of integers and $i$ is an integer ``index''. $\cons$ behaves as expected on standard elements, and on nonstandard elements prepends to the sequence component and increments the index component. $\head$ and $\tail$ can be obtained as the appropriate inverses of this definition. 

In this model, we define $\extend$ and $\reverse$ on standard elements in the usual way, but on the nonstandard elements we do the following: $\extend$ simply loses the given value, i.e., returns the input list, and $\reverse$ maps every nonstandard element to a default nonstandard element $d$, say $([1,1,1,\ldots],0)$. It is easy to see that these interpretations satisfy the recursive definitions above, and we encourage the reader to follow the definitions carefully and check that this is indeed the case.

Now, consider $x = ([2,1,1,1,\ldots], 1)$. This is in fact the element $\cons(2,d)$ where $d$ is the default value in the interpretation of reverse defined above. Following the interpretations, one can see that $\reverse(\reverse(x)) = d$ and $d \neq x$, but $\reverse(\reverse(\tail(x))) = d = \tail(x)$, which violates the \leon-style VC for the first level of unfolding.

\mypara{Unfolding ad infinitum} It turns out that we can similarly construct a rogue nonstandard model for all the VCs obtained from unfolding ad infinitum. Formally, we consider the set of formulas:

\begin{center}
$\big(\defn_\reverse \land \defn_\extend\big) \rightarrow \big(\left(\bigwedge_{i=1}^{n}\reverse(\reverse(\tail^i(x))) = \tail^i(x)\right) \rightarrow \reverse(\reverse(x)) = x \big)$\\
\hfill for every $n \in \mathbb{N}, n \geq 1$
\end{center}

To falsify the $i$-th formula above, a model would have to satisfy the definitions of $\extend$ and $\reverse$, and contain an element $x$ such that applying $\reverse$ twice behaves like the identity on all the $j$-th tails of $x$ for $1 \leq j \leq i$, but this property does not hold on $x$ itself.

We construct this rogue nonstandard model using a universe consisting of finite sequences of integers corresponding to the standard Lists, and a set of nonstandard elements of the form $((\mathit{front}, i),(\mathit{back},j))$ where $\mathit{front}$ and $\mathit{back}$ are infinite sequences of integers and $i,j$ are integers. Intuitively, $\mathit{front}$ represents the sequence of elements obtained by reading the nonstandard list from the front, and $\mathit{back}$ represents the sequence of elements obtained by reading the list in reverse starting from the end. $i$ and $j$ are indices corresponding to the sequences similar to the rogue model described above. However, as opposed to the model above where nonstandard elements had an infinite ``tail'', nonstandard elements in this model have an infinite ``middle''. $\cons$ behaves as expected on standard elements, and on nonstandard elements behaves similarly to the earlier rogue model: $\cons(k, ((s_1,i),(s_2,j))) = ((k :: s_1,i+1),(s_2,j))$. $\head$ and $\tail$ can then be derived from this interpretation. These interpretations satisfy the usual first order axiomatization of ADTs, and therefore is a faithful model of the theory of ADTs.

The interpretation of $\extend$ is the intuitive one. Just as $\cons$ added the value to the front, $\extend$ adds the value to the back: $\extend(((s_1,i),(s_2, j)), k) = ((s_1,i),(k :: s_2, j+1))$.

The interpretation of $\reverse$ is more involved. We interpret it as expected on standard elements. On nonstandard elements, we define $\reverse((\mathit{front},\mathit{back})) = (\mathit{back}, \mathit{front})$ for elements where $\mathit{back}$ does not begin with a $0$, otherwise we define it to be $(\cons(0,\mathit{back}), \mathit{front})$. Informally, we swap the front and back sequences to reverse the element (as is intuitive), but if the back begins with a $0$ we duplicate the value during reversal. Note that for nonstandard elements where neither the front nor the back sequence begin with a $0$, applying $\reverse$ twice yields the original element.

As earlier, we encourage the reader to follow the recursive definitions given above and check that these interpretations indeed satisfy the definitions. In particular, the recursive definition of $\reverse(x)$ extends the reverse of $\tail(x)$ with $\head(x)$, which allows us to add a spurious value to the front of the reversed tail since $\extend$ only adds to the back.

Finally, we consider $x = (([0,1,2,3,\ldots],0),([1,2,3,\ldots],0))$. Observe that all elements of the form $\tail^i(x)$ for $i \geq 1$ are elements $([i,i+1,i+2,\ldots],-i),([1,2,3,\ldots],0))$, where neither the front nor the back sequence begins with a $0$. Therefore applying $\reverse$ twice yields the original element as desired, but this does not hold for $x$ since the back of $\reverse(x)$ would begin with a 0, which adds an extra $0$ when applying $\reverse$ a second time.

\mypara{Remarks} The above models are nontrivial constructions in several ways. First, there appears to be no clear recipe or taxonomy of nonstandard universes to consider for constructing rogue models. \rev{There is some recent preliminary work in this direction~\cite{EladSeplogicCompleteness2026} for rogue models arising in entailments over Separation Logic with inductive definitions, but it largely identifies only structures with infinite `tails'.} We were also mostly able to find rogue models by using lists with infinite tails in our anecdotal study, but this seemed to be insufficient for the second set of VCs considered above where we instead resorted to lists with infinite middles. We do not know of a rogue model for the second case using lists with infinite tails only, and we also do not know how to prove that such a model cannot exist\footnote{Such claims are especially hard to make as one needs to consider the logical fragment in which interpretations can be defined}. Additionally, constructing interpretations of theory-specific symbols (like $\cons$) so that the interpretations collectively respect an ADT axiomatization seems challenging. Second, although the above interpretations are broadly intuitive, critical portions of the construction such as the definition of $\reverse$ in the second model above require careful consideration in order to simultaneously satisfy the recursive definitions and falsify the property of interest. Finally, there is an additional sense in which the second construction above is nontrivial: to show that \leon cannot prove the property no matter how many unfoldings (with contracts) it uses, i.e., to show that the proof of the property is truly beyond the methodology that \leon uses, the model must simultaneously falsify the infinite set of formulas corresponding to all the levels of unfolding.

\smallskip
In practice, the above nonstandard models are eliminated by an appropriate inductive lemma provided by the user. 

}  

\section{Expressiveness Results on the \sys Fragment}
\label{sec:boundarythms}


We show some technical results pertaining to the \sys fragment. 
\iffull\else
We discuss the results themselves here and provide proofs in the full version of the paper. 
\fi
%

\mypara{Undecidability of the \sys fragment} 

We show undecidability of the \sys fragment even when the combined theory admits decision procedures for quantifier-free reasoning. In other words, the validity problem for the \sys fragment,  for which we proved UQFR is a semi-decision procedure in Section~\ref{sec:completeness}, does not admit any decision procedures.

\begin{theorem}
\label{thm:boundary-undec}
The validity problem for \sys formulas is undecidable. 
\end{theorem}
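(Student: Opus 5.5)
Because \qialg{} is a semi-decision procedure for \sys validity (Theorem~\ref{thm:completeness}), the \sys validity problem is recursively enumerable. It therefore suffices to reduce to it some problem that is r.e.\ but not decidable; equivalently, to reduce a co-r.e.-complete problem to \sys \emph{non}-validity. I would use the halting problem for two-counter (Minsky) machines, or equivalently Turing machines. The signature uses only ADTs with no background sort, or at most a trivial one, so that quantifier-free $\thcomb$-validity is decidable and every hypothesis of Section~\ref{sec:qialg} holds. The target is a computable map $M \mapsto (\defns_M, \varphi_M)$ in the \sys fragment with
\[
\thcomb \models (\defns_M, \varphi_M) \iff M \text{ halts on empty input}.
\]

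The construction encodes configurations of $M$ as ADT terms: Peano numerals for the counters, a finite enumeration datatype for control states, and a product constructor for a configuration. The defined function must be provably acyclic, yet it has to simulate a computation that may run forever, and the counters may increase. I therefore add an explicit \emph{fuel} argument $n$ of sort $\mathit{Peano}$. Define
\[
\mathit{run}(n, c) = \ite(\isZ(n), \mathit{False}, \ite(\mathit{halting}(c), \mathit{True}, \mathit{run}(\tpred(n), \mathit{step}(c)))).
\]
Here $\mathit{step}$ is a nonrecursive ADT term built with $\ite$ on the state and zero-tests of the counters, and $\mathit{halting}$ is a recognizer test on the state component. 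Take $\mathit{run}$ at stratum $0$, with ranking given by the subterm order on its first argument; strictly, stratum $0$ requires a unary function over an ordered sort, so I would pair $(n,c)$ into one ADT whose first component decreases, or place $\mathit{run}$ at stratum $1$ with rank $n$ via a projection. The recursive call then receives $\tpred(n)$ under the path condition $\neg\isZ(n)$, so provable acyclicity follows from the destructor/subterm fact noted in the Aside of Section~\ref{sec:fluid}. The goal is $\varphi_M \equiv \neg \mathit{run}(n, c_0)$ with $n$ universally quantified.

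Then $\varphi_M$ is \emph{valid} iff no fuel suffices, i.e.\ iff $M$ does not halt, and this needs two directions. If $M$ halts in $k$ steps, the standard model with $n = S^k(Z)$ falsifies $\varphi_M$, so $\thcomb \not\models (\defns_M,\varphi_M)$. Conversely, suppose $M$ never halts and some $\thcomb$ model $\Nn$ of $\defns_M$ has an element $n$ with $\mathit{run}(n,c_0)$ true. This is the main obstacle, since $n$ may be a nonstandard numeral with infinite fuel and a rogue model could then set $\mathit{run}$ to $\mathit{True}$ along an infinite descent. Indeed, a chain $n, \tpred(n), \ldots$ that never reaches $Z$, with $\mathit{run}$ true all along and no halting configuration ever reached, is consistent with the definition. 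So non-halting does not give $\thcomb$-validity of $\varphi_M$, and the reduction must be flipped.

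Instead I take $\varphi_M \equiv \mathit{run}(S^{k}(Z), c_0)$, which is unsuitable because it is quantifier-specific. Better is $\varphi_M \equiv \exists$-free $\neg\mathit{run}$ negated: test whether $\defns_M \rightarrow \neg\mathit{halted}$ with a non-fuelled predicate $\mathit{reach}(c)$, defined by recursion downward from $c$ as ``$c$ is reachable''. Reversibility is not available in general, so the cleanest route is the fuelled $\mathit{run}$ with goal $\varphi_M \equiv \mathit{run}(n,c_0) \rightarrow \mathit{False}$. I would argue \emph{undecidability of non-validity} directly. If $M$ halts, $\varphi_M$ is falsified, hence not valid. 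If $M$ does not halt, it must be shown that some model also falsifies it, which the rogue chain above supplies; so this encoding makes $\varphi_M$ always invalid, which is useless. The fix I would pursue is to choose the goal as $\varphi_M \equiv \neg\mathit{halting}(\mathit{exec}(n,c_0))$, together with a lemma that in \emph{every} $\thcomb$ model of $\defns_M$, standard $n$ evaluates to the true configuration. Validity then reduces to the standard model on standard $n$. To handle nonstandard $n$, I would make $\mathit{exec}$ return the reached configuration and keep the value undetermined but nonhalting when unconstrained, and then use Theorem~\ref{thm:provably-acyclic} and the repair construction of Lemma~\ref{thm:compclose-completion} to show that an invalidating model exists exactly when a standard halting witness exists. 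Verifying this last equivalence, namely that nonstandard fuel can always be repaired to a nonhalting value, is the step I expect to require the most care; the repair lemma, applied with $\Gamma$ asserting nonhalting on the nonstandard chain, is what I would try first.
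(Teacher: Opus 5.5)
Your stated target, $\thcomb \models (\defns_M,\varphi_M)$ iff $M$ halts, is the right one. The construction you finally settle on, $\varphi_M \equiv \neg\mathit{halting}(\mathit{exec}(n,c_0))$, instead aims at validity iff $M$ does \emph{not} halt. That would reduce the non-halting problem to \sys validity, and no such reduction can exist. As you noted at the start, Theorems~\ref{thm:soundness} and~\ref{thm:completeness} make \sys validity r.e., so non-halting would become r.e.\ and halting decidable.

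Concretely, the step you flag as needing care is false. Validity quantifies over \emph{every} $\thcomb$ model of $\defns_M$. For a nonstandard $n$, the definition of $\mathit{exec}$ only forces the values at $(\tpred^k(n),\mathit{step}^k(c_0))$, $k \geq 0$, to coincide. A model may take that common value to be a halting configuration: this satisfies the unfoldings on the computational closure of $\mathit{exec}(n,c_0)$, and Lemma~\ref{thm:compclose-completion} completes the model. This is the same rogue chain you already found for $\mathit{run}$, so your $\varphi_M$ is invalid for every $M$. The repair lemma only shows that \emph{some} model gives a nonhalting value, which is the wrong quantifier for a validity claim.

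The missing idea is to orient the encoding the other way. An infinite nonstandard chain should witness \emph{non}-halting, while halting is detected by finitely many unfoldings that hold in every model.

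The paper does this by storing the configurations in the ADT element itself. It defines $\mathsf{nonhalt}(x)$ to be $\mathit{False}$ on $\nil$ and on halting configurations, and otherwise $\mathsf{nextconfig}(x,\tail(x)) \land \mathsf{nonhalt}(\tail(x))$. The goal is $\forall x.\,\mathsf{init}(x) \rightarrow \neg\mathsf{nonhalt}(x)$.
\begin{itemize}
\item If $M$ halts after $k$ steps, take any $x$ with $\mathsf{init}(x)\land\mathsf{nonhalt}(x)$. Its first $k$ tails are forced to be the actual configurations, so the $k$-th one is halting. This is a contradiction in every model.
\item If $M$ runs forever, compactness yields a model containing a list whose infinite tail encodes the run, and $\mathsf{nonhalt}$ holds on it.
\end{itemize}

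Your fuel idea also works once flipped. Take $\mathit{run}(n,c) = \ite(\mathit{halting}(c),\mathit{True},\ite(\isZ(n),\mathit{True},\mathit{run}(\tpred(n),\mathit{step}(c))))$ with goal $\forall n.\,\mathit{run}(n,c_0)$.
\begin{itemize}
\item If $M$ halts after $k$ steps, the value $\mathit{True}$ is forced for every $n$, standard or not. Either some $\tpred^j(n)$ with $j<k$ is $Z$, or the halting configuration is reached at step $k$.
\item If $M$ does not halt, a nonstandard $n$ carries an unconstrained chain whose common value can be $\mathit{False}$.
\end{itemize}
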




\added{
We provide a reduction from the non-halting problem for two-counter machines. A two-counter machine~\cite{HopcroftMotwaniUllman} is a machine with two registers that can contain unbounded integers. The machine can only increment or decrement these counters, or check whether they are equal to zero. Two-counter machines are computationally equivalent to Turing machines~\cite{HopcroftMotwaniUllman}, and checking the halting/non-halting of a two-counter machine is undecidable (assuming, without loss of generality, an initial configuration where counters are set to zero).

It is tempting to try to find a simple reduction that encodes executions of the machine using ADTs (say, as lists of configurations), defining a recursive predicate that identifies \emph{halting} executions (which are finite), and stating the theorem that no ADT element encodes a halting execution of the machine. However, note that we are seeking validity with respect to the \emph{combined theory} and not validity in the standard model. In fact, since validity over the combined theory is recursively enumerable, we cannot reduce non-halting problem of two counter machines (which is co-r.e. hard) to it. Our reduction reduces the non-halting problem to the complement of validity, \ie, satisfiability. We provide the proof below. 
}

\begin{proof}
\added{
Fix a two-counter machine $M$.
Let us consider ADTs that are lists of triples of integers:
ADT $\mathsf{List}$, with two constructors $\mathsf{Nil}$ and $\mathsf{Cons}$:
}

\begin{code}
data List= Nil 
           | Cons (state: Int) (fst: Int) (snd: Int) (tail: List)
\end{code}
%

\added{
Each element of the list represents a configuration of a two-counter machine-- the state of the two-counter machine and the value of the two counters. We can write quantifier-free logical formulae $\mathsf{init}(x)$ representing the initial configuration, $\mathsf{halt}(x)$ representing any halting configuration, and $\mathsf{nextconfig}(x, y)$ representing that $y$ is the successor configuration of $x$. Now consider the following recursive definition $\defn_\mathsf{nonhalt}$:
\begin{align*}
    \forall x:\mathsf{List}.\,\mathsf{nonhalt}(x) =\; &\ite(x = \nil, \mathit{False}, \\
    &\;\;\ite(\mathsf{halt}(x), \mathit{False}, \mathsf{nextconfig}(x, \tail(x)) \land \mathsf{nonhalt}(\tail(x))
\end{align*}


Note that this function is provably acyclic since it recurses on $\tail(x)$. 

Consider the property $\varphi \equiv \forall x.  (\mathsf{init}(x) \rightarrow \neg~\mathsf{nonhalt}(x))$. We claim $\defn_\mathsf{nonhalt} \rightarrow \varphi$ is valid in the combined theory if and only if the two-counter machine \emph{halts}.

If the formula is not valid, then there is a model
and an ADT element $x$ such that $\mathsf{init}(x)$ and $\mathsf{nonhalt}(x)$ hold. Then there are two cases: $x$ corresponds to a finite list (reaching $Nil$ in finitely many destructions) or it is a nonstandard element corresponding to an infinite list. The former case is impossible, as no finite list can have $\mathsf{nonhalt}$ to be true on it. In the second case,
the recursive definition of $\mathsf{nonhalt}$ ensures
that the list pointed to by $x$ encodes an execution of the two-counter machine, and hence the machine does not halt.

Conversely, assume the machine does not halt. It turns out that we can build a nonstandard model where $x$ points to a nonstandard ADT element encoding an infinite list that corresponds to the non-halting execution of the machine. Formally, we use the compactness theorem instead of constructing this model explicitly. Note that for any $k \in \mathbb{N}$, there is a \emph{standard} ADT element that encodes a finite list corresponding to a partial execution of the two-counter machine for $k$ steps. Hence any \emph{unfolding} of the definition of $\mathsf{nonhalt}$ on $x$, $tail(x)$, etc.
up to $k$ destructions is satisfiable. By the compactness theorem, the unfolding for $\omega$ number of steps is also satisfiable. $\mathsf{nonhalt}(x)$ will hold in this model.
}
\end{proof}

\mypara{Incompleteness with terminating definitions} 

It is natural to ask whether \qialg\ is complete for all terminating functions, not just provably acyclic ones. We show that this is not the case.

\begin{theorem}
\label{thm:boundary-incompleteness}
There exists a signature $(\Ss, \Ff, \Dd, \thcomb)$, a set $\defns$ of well-defined definitions for $\Dd$ that are not provably acyclic, and a universally quantified formula $\varphi$ such that $\thcomb \models (\defns, \varphi)$ but $\qialg$ does not terminate.
\end{theorem}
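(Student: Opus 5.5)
We will use the example from the discussion on provable acyclicity versus termination in Section~\ref{sec:fluid}. Take lists over integers as the foreground sort, with definitions
\[
\forall x.\, \mathit{std}(x) = \ite(x = \nil, \mathit{True}, \mathit{std}(\tail(x))), \qquad \forall x.\, g(x) = \ite(\mathit{std}(x), \mathit{True}, g(x)).
\]
Both terminate on the standard model, but $g$ is not provably acyclic. The goal is to choose $\varphi$ so that $\thcomb \models (\defns,\varphi)$, while the only evidence available to \qialg\ about $g$ on nonstandard elements is the circular self-unfolding. The natural first candidate is $\varphi \equiv \forall x.\, g(x)$.

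\textbf{Validity.} Let $\Mm$ be any $\thcomb$-model of $\defns$ and fix an element $x$. If $\mathit{std}(x)$ holds in $\Mm$, the definition of $g$ gives $g(x)=\mathit{True}$. Otherwise the definition gives only $g(x)=g(x)$, so $g(x)$ may be false. This means $\forall x.\,g(x)$ is \emph{not} $\thcomb$-valid: in a nonstandard model where $\mathit{std}$ is false on an infinite-tail element, $g$ can be false there. So that candidate fails, and we must adjust $\varphi$ so that it is valid over $\thcomb$ yet still requires information that unfolding cannot supply.

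\textbf{Repair of the candidate.} Make $\varphi$ trivially true whenever $\mathit{std}(x)$ fails, and require $g(x)$ only in the remaining case: take $\varphi \equiv \forall x.\, (\neg\mathit{std}(x) \vee g(x))$. This is $\thcomb$-valid under $\defns$, since whenever $\mathit{std}(x)$ holds the definition forces $g(x)=\mathit{True}$. But then a single unfolding of $g$ on $x$ already proves it, so \qialg\ terminates and this $\varphi$ is useless for the theorem. The tension is that any property whose $\thcomb$-validity follows from $\defns$ is \emph{some} consequence of $\defns$, and we need one that is not reachable by thrifty unfolding. The intended fix is to use a definition whose non-acyclicity lets it constrain values non-locally, so that the proof requires unfolding on terms that never appear as arguments. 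Concretely, I would let $g$ take an extra argument, e.g.
\[
g(x,y) = \ite(\mathit{std}(x), P(y), g(\tail(y), y)),
\]
or otherwise arrange that the body calls $g$ on a term that is not a subterm of anything reachable from $\neg\varphi$. Validity over $\thcomb$ should then come from a first-order argument that uses instances of $\defn_g$ outside the computational closure of $\neg\varphi$.

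\textbf{Non-termination, the main obstacle.} To show \qialg\ diverges, by Proposition~\ref{prop:fol-compactness} it suffices to show that every $\formulas_i$ is $\thcomb$-satisfiable. Concretely, I would build one model $\Mm$ that satisfies $\neg\varphi$ together with all unfoldings in the computational closure $\bigcup_i \unfolded_i$, but violates $\defn_g$ off that closure. For the self-loop $g(x)=g(x)$, the value of $g$ on a nonstandard $x$ is unconstrained by its own unfolding, so it can be set freely. The difficulty is twofold. First, $\varphi$ must be chosen so that its $\thcomb$-validity genuinely uses $\defn_g$ at tuples outside the closure; it is this gap that Lemma~\ref{lem:multi-repair-lemma} cannot bridge without acyclicity. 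Second, the model must be exhibited concretely, presumably with the infinite-tail list universe of Section~\ref{sec:sorted-list-membership}. I expect choosing this $\varphi$ to be the hardest step.
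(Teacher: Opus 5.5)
\textbf{There is a genuine gap: the proposal never produces a witness $(\defns,\varphi)$, and the direction you end with cannot produce one.}

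You correctly rule out $\forall x.\,g(x)$, which is not valid, and $\forall x.\,(\neg\mathit{std}(x)\vee g(x))$, which one unfolding proves. But your replacement $g(x,y)=\ite(\mathit{std}(x),P(y),g(\tail(y),y))$ has the same defect as your original $g$: it can be consistently interpreted in every model. When $\neg\mathit{std}(x)$ holds, the body reduces to $g(\tail(y),y)$. The equation for that tuple is either $g(\tail(y),y)=P(y)$ or the vacuous $g(\tail(y),y)=g(\tail(y),y)$. So values outside the computational closure can always be chosen consistently, and the finite-repair argument of Lemma~\ref{lem:multi-repair-lemma} goes through: as long as $P$ does not mention $g$, no choice of $\varphi$ separates it.

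Likewise, ``arranging that the body calls $g$ on a term not reachable from $\neg\varphi$'' is impossible. Every call a body makes on a closure tuple is, by definition, in the closure. Non-acyclicity that yields only harmless self-loops imposes no constraint. The missing idea is a definition that is \emph{unrealizable} on some $\thcomb$ models, so that $\defns$ by itself forces a global fact that thrifty unfolding never sees.

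The paper does this with your $\mathit{std}$ and $R(x)=\ite(\mathit{std}(x),\mathit{True},\ite(\head(x)=0,R(x),\neg R(x)))$, taking $\varphi\equiv\forall y.\,R(y)$.

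\textbf{Validity.} The branch $R(x)=\neg R(x)$ is contradictory. Suppose a model of $\defns$ has $\neg R(y)$. Then $\neg\mathit{std}(y)$, hence $\mathit{std}(\cons(1,y))=\mathit{std}(y)=\mathit{False}$ while $\head(\cons(1,y))=1\neq 0$. This forces $R(\cons(1,y))=\neg R(\cons(1,y))$, which is impossible. So $\thcomb\models(\defns,\varphi)$, but only via the instance of $\defn_R$ at the superterm $\cons(1,y)$, which no unfolding ever generates.

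\textbf{Non-termination.} The escape branch $R(x)=R(x)$ for head $0$ keeps every $\formulas_i$ satisfiable. The unfoldings only involve $y,\tail(y),\tail(\tail(y)),\ldots$. A nonstandard model in which all of these have head $0$, and $\mathit{std}$ and $R$ are false on them, satisfies all the unfoldings together with $\neg R(y)$.

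Your intuition that validity must come from an instance outside the computational closure is right. What you lacked were two ingredients: the contradictory branch, which makes that outside instance informative, and the escape branch, which keeps the closure consistent.
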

Note that this theorem implies that generalizing definitions to arbitrary universally quantified formulas also leads to incompleteness.

\begin{proof}
\added{
We construct an instance of the validity problem without provably acyclic definitions that is unprovable for \qialg. We use the ADT of lists over integers as our foreground sort:
}

\begin{code}
    data List = Nil | Cons (head : Int) (tail : List)
\end{code}

\added{
\noindent
as well as the following definitions:
\begin{align*}
\defn_{\mathit{std}}\; \equiv&\; \forall x:\mathsf{List}.\,\mathit{std}(x) =\; \ite(x = \nil, \mathit{True}, \mathit{std}(\tail(x))) \\
\defn_{R}\; \equiv&\; \forall x:\mathsf{List}.\, R(x) =\; \ite(\mathit{std}(x), \mathit{True}, \ite(\head(x) = 0, R(x), \neg R(x)))
\end{align*}

Both functions are well-defined definitions as they terminate on the standard model. The termination of $\mathit{std}$ is apparent: it simply destructs the input term recursively until $\nil$ and then returns $\mathit{True}$. $R$ is also terminating on the standard model since $\mathit{std}(x)$ is always true on standard elements, therefore the \emph{else} branch of the outer $\ite$ is never taken.

$\mathit{std}$ is also provably acyclic since the arguments to the recursive call are smaller according to the subterm ordering. However, $R$ is not provably acyclic. We demonstrate this indirectly by proving the incompleteness of \qialg\ and defer the discussion of why it is not provably acyclic.

Consider the theorem $\varphi \equiv \forall y:\mathsf{List}.\, R(y)$. We claim: (1) $\thcomb \models \defns \rightarrow \varphi$, and (2) \qialg\ does not terminate on $(\defns, \varphi)$. We do not prove the latter here as it can be deduced easily by following Algorithm~\ref{fig:qialg} and only show the former.

Suppose that the claim is not true. Note that the antecedent is not vacuous since it is possible to satisfy the definitions on the standard model. Therefore, for the claim to be false there must exist a model where the definitions are satisfied, but there is a $y$ such that $R(y)$ does not hold. From the definition of $R$, we know that this is only possible when $\neg\mathit{std}(x)$ and $\head(x) = 0$. Any other path in the definition either leads to $R(x)$ being true or the impossibility $R(x) = \neg R(x)$. Now, consider the element $\cons(1,y)$. From the definition of $\mathit{std}$, we have $\mathit{std}(\cons(1,y)) = \mathit{std}(y) = \mathit{False}$. Then, following the definition of $R$ yields $R(\cons(1,y)) = \neg R(\cons(1,y))$, which is impossible. Therefore, it must be the case that there is no model where the definitions are satisfied but $\varphi$ does not hold. In other words, $\thcomb \models (\defns, \varphi)$.

\qialg\ never terminates on the algorithm because unfolding the definitions only ever produce terms that are destructions of $y$, whereas we proved the validity above by instantiating the definitions on a superterm of $y$. This shows that \qialg\ is incomplete for this instance.
}
\end{proof}

\added{
Since we prove completeness for provably acyclic definitions, the above shows that $R$ is not provably acyclic. More specifically, this is because in order to prove that the absurd recursive call $\neg R(x)$ is unreachable, we must essentially prove that $\mathit{std}(x)$ always holds. However, this not true in the combined theory, and one would typically use induction to establish this.

Consequently, in models where $\mathit{std}(y)$ does not hold for some $y$, $R$ is in fact unrealizable as the element $\cons(1,y)$ cannot be given a valuation that is consistent with the definition of $R$. This does not happen with provably acyclic definitions because such functions can always be given a valuation consistent with their definitions on any model (see Theorem~\ref{thm:provably-acyclic}).

The completeness of \qialg\ arises from the fact that unfolding the (provably acyclic) definition of some $R$ ($R \in \Dd$) on $x$ amounts to a simulation of the ``computation'' of $R(x)$ in any model. Without provable acyclicity, we have shown that it is possible to construct well-defined definitions that are unrealizable in some models, rendering \qialg\ incomplete. 
}

\section{Related work}
\label{sec:relatedwork}

We discussed the relationship of our work to \liquid~\cite{Vazou18,liquid} and {\sc Leon}~\cite{SuterKoksalKuncak,BlancKuncakKneussSuter} extensively in Sections~\ref{sec:lh} and~\ref{sec:leon}. The mechanism for verifying specifications in the functional programming sub-language of Dafny~\cite{dafny,dafny-manual} is also similar. There is much prior work on techniques based on unfolding recursive definitions~\cite{amin2014computing,verimap,leino2013verifiedcalculations,acl2sedan}, going back to ACL2~\cite{acl2} and the NQTHM prover~\cite{boyer88}. The work on Set-Of-Support resolution~\cite{SetOfSupport1965,SetOfSupportImproved2021} is also similar, but it does not consider background theories.

The work in~\cite{suter10} shows that {\sc Leon}-like reasoning (and \qialg\ in this paper) is actually a decision procedure for certain restrictive logics. More precisely, it exhibits a logic over restricted classes of user-defined abstractions of ADTs to collections/measures in a decidable sort using catamorphisms, and shows that unfolding function definitions just \emph{once} followed by quantifier-free reasoning is a decision procedure. The classes of such abstractions (\emph{infinitely surjective and sufficiently surjective abstractions}) however are extremely semantically restrictive compared to \sys. In particular, as we show in Section~\ref{sec:boundarythms}, validity of \sys is undecidable, which argues this difference. The work in~\cite{Vazou18} shows that the PLE heuristic implemented in \slh is complete if an `equational proof' exists, but this result is much weaker than ours, and in fact PLE fails to prove simple theorems that \qialg~ can prove. 
\iffull
We furnished an example demonstrating this in Section~\ref{sec:lh-ple}.
\fi

The Why3 system~\cite{why3logic,filliatre13esop,why3casestudy2015} also verifies functional programs against contracts, but reduces verification conditions to first-order logic, integrating with several first order logic reasoning engines like Vampire~\cite{vampire}. FO provers such as Vampire~\cite{Hajdu2021} and Zipperposition~\cite{Cruanes2017} support reasoning about ADTs, even providing some automation for inductive reasoning. While it is also possible in our setting to use FO theorem provers to prove formulas of the form $\vc$, the practical effectiveness of such a reduction has not been evaluated and requires addressing some challenges, especially background theory reasoning; see~\cite{Reger2017-Vampire-as-SMT-solver}.

SMT solvers~\cite{z3,cvc4,bjornerthesis,Reynolds-Blanchette2017} provide powerful automation for logic reasoning, especially for quantifier-free fragments of decidable combinations of theories~\cite{nelson-oppen1979,tinelliharandi,calcofcomp}. There is work that develops decidable fragments by building over SMT solvers~\cite{suter10,RADA2013} as well as specialized decision procedures~\cite{MannaLFCS2007,Zhang2006,HosseinInterpolation2017,KapurInterpolation2006}. Extensions of the ADT theory have also been studied~\cite{kovacs17,RybinaVoronkov2001}.

Prior work on combining theories include Nelson-Oppen decidable combinations of theories~\cite{nelson-oppen1979,nelson80,tinelliharandi} and following work~\cite{WiesPiskac2009CombiningTheories,BaaderGhilardi2005ConnectingTheories,Ghilardi2004CombinedSatisfiability,Fontaine2007CombinationsBSR,Krstic2007CombinedSatisfiability,Tinelli2005NonstablyInfinite} extending this result. Local theory extensions~\cite{Ihlemann2008LocalTheoryExtensions,Sofronie-Stokkermans2009LocalTheoryExtensions} have also been employed for constructing decidable logics. Our work can be seen as reasoning with a particular fragment of quantified first-order logic (\sys) over combined theories using particular procedures (especially SMT solvers) that work well in practice in certain domains.

Techniques based on quantifier instantiation have been popular in automatic reasoning of quantified logics, including works on quantifier instantiation for SMT solvers~\cite{Reynolds16,barrett11,z3}. There are many methods to guide instantiation, such as triggers/E-matching~\cite{detlefs05,rummer12,amin2014computing,Moskal2009Triggers}, MBQI~\cite{Ge-deMoura2009CompleteInstantiationSMT}, etc. In general, systematic quantifier instantiation in the style of \qialg~ is not applicable to SMT solvers as the set of terms blows up exponentially. 

The work reported in~\cite{loding18} that shows completeness of a heuristic in practice called \emph{natural proofs}~\cite{qiu13,pek14} is closest to our work,
They show that natural proofs can be viewed as
reasoning in FOL by instantiating quantifiers using terms over a foreground uninterpreted universe followed by quantifier-free reasoning. They prove that this technique is complete for a \emph{safe fragment} of first-order logic. There are, however, fundamental differences in our work. First, the foreground sorts in our setting are ADT sorts and not uninterpreted. Second, the safe fragment identified in~\cite{loding18} is very restrictive as it disallows uninterpreted functions to involve background sorts, which in our setting would mean programs cannot have input parameters of the background sort, like integers. Finally, the quantifier instantiation strategy studied in~\cite{loding18} is much more liberal than in our work (and what tools like {\sc Liquid Haskell} and {\sc Leon} do). For example, if $\overline{t}$ is a set of terms that occur in a theorem, the instantiation in~\cite{loding18} will always instantiate the definition of $f$ on $\overline{t}$, while we will do so only when $f(\overline{t})$ occurs in the theorem. Consequently, the proof of our main theorem is quite complex and fundamentally different from the proof of completeness in~\cite{loding18}. 

Triggers are heuristic ways to control quantifier instantiation in SMT solvers, and SMT solvers as well as tools such as Boogie~\cite{boogie2-rustan} and Dafny~\cite{dafny} provide mechanisms for specifying triggers, both automatically~\cite{LeinoPit-Claudel2016StableTriggers} and manually~\cite{dafny-manual}. However, triggers are not simple quantifier instantiations~\cite{LeinoPit-Claudel2016StableTriggers,Moskal2009Triggers}. Further, trigger-based quantifier instantiation can be unpredictable and flaky. \rev{To the best of our knowledge, there has not been a substantial body of work on studying completeness for these approaches. Prior work provides some understanding of the structure of E-matching and triggers for completeness when applied to local theory extensions~\cite{DecidingLocalTheoryExtensions2015}, and more recent work develops a formal model of E-matching to prove termination~\cite{InstantiationTerminationEMatching2024}. In particular, the formal model in the latter work may be a useful starting point to explore showing completeness as well.}

\rev{In this work we constructed rogue models manually, and did not explore a systematic way of synthesizing such models from failed proofs. There has been some work on surveying the range of rogue nonstandard models that arise from failed proofs in practice~\cite{BlanchetteLPAR2010,EladSeplogicCompleteness2026}, as well as some synthesis algorithms for generating such models from shape `templates'~\cite{EladetalPOPL24,EladSeplogicCompleteness2026}. However, to the best of our knowledge there has not been dedicated work that establishes a database or taxonomy of rogue models, especially in the context of many-sorted FOL with background theories.} 

Automating induction has been explored in prior work~\cite{clam,isaplanner,hipspec,passmore20,Hajdu2020,Cruanes2017} for various specialized fragments~\cite{Unno2017CHCInduction}. In many cases, user help in the form of lemmas is still needed, though there is work on synthesizing inductive lemmas automatically~\cite{reynolds15,Weikun19,fossil,Sivaraman2022-LemmaSynthsesisCoq,EladShohamPOPL2025}.

\section{Discussion}
\label{sec:discussion}


\added{
This work is part of a larger program to give foundations to the theory and practice of automated reasoning for expressive logics, especially those that involve quantification and recursion. Expressive logics of this kind abound in program verification with unbounded program configurations or unbounded specifications. For example, programs manipulating data structures in pointer-based heaps operate over unbounded linked lists and trees. The domain of functional programs manipulating ADTs considered in this work is another example, where terms in the initial algebra are unbounded in size. Other examples include distributed systems with unbounded message queues or unbounded processes, or even specifications that deal with unbounded abstract/ghost values such as sequences or sets. Logics that express properties of such unbounded structures typically use either quantification (e.g., ``Every index in the array stores a positive number'') or recursion (e.g., ``The set of all reachable locations in memory is untainted''). 

Automated validity checking for expressive logics is the next frontier in automated verification. We compare this moment to the development of modern SMT solving, which was the result of research into practicable theory and algorithms for decision procedures for combinations of decidable quantifier-free theories. However, progress has since tapered off, and new approaches are needed. Looking at decidable fragments of these expressive logics might appear to be a natural alternative, but this has not worked out either. There are but a few exceptions such as the decidable theory of maps with pointwise updates~\cite{pointwisearrays}, the Array Property Fragment~\cite{Bradley2006Decidable}, and some decidable fragments of Separation Logic~\cite{BerdineCalcagnoOHearn2005,PiskacWiesZufferey2013,twbcade13}.

\smallskip
\mypara{Challenges} There are several challenges towards the development of an elegant theory in our program. First, expressive logics are almost always incomplete, i.e., there cannot exist sound proof systems or validity checking algorithms that can prove every valid formula. This often means that there may not be any theoretical insights for building useful proof systems, and from an empirical standpoint there may not exist any canonical approaches for automating validity. Second, even if individual theories are complete, combining them can result in incompleteness. For example, consider integers with addition and uninterpreted functions with equality. Each domain separately admits a complete theory, e.g., Presburger Arithmetic~(\cite{presburgertranslation1,presburgertranslation2}) and the theory of uninterpreted functions with equality (EUF) respectively. However, their combination is incomplete. Said differently, the set of valid theorems in arithmetic or uninterpreted functions (considered separately) is precisely the set of valid formulas in the individual theories. However, the set of valid theorems that involve both integers with addition and uninterpreted functions cannot be captured precisely by any recursive set of axioms! In particular, simply combining the axioms of Presburger arithmetic and EUF will result in a theory in which some valid theorems will not hold. Therefore, even automation that is complete for the union of the two theories would necessarily fail to prove some valid theorems\footnote{The reason for this is also attributable to the kind of Rogue Nonstandard Models we discuss in this work.}. These difficulties then lead to the following question: what is a reasonable theoretical standard to which we can hold automation for expressive logics?

From the practical side, a third challenge is to find a modular reasoning approach that can delegate reasoning over each theory to specialized solvers, similar to Nelson-Oppen combinations~\cite{nelson-oppen1979,nelson80} for decidable theories. Finding a set of reasonable theoretical conditions that would guarantee successful combination--- meaning that algorithms that are `good' for the individual theories can be combined into a `good' one for the combination--- appears hard.

\smallskip
\mypara{From practice to theory} This work goes from practice to theory. We present a theoretical foundation for a practical heuristic for reasoning about properties of functional programs and show that this technique is in fact complete procedure for reasoning about the combination of the underlying first-order theories, building a basis for why the technique works predictably well in practice. To achieve this end we developed the \sys fragment and the idea of provably acyclic definitions.

Our study is nontrivial since it focuses specifically on \qialg, which is a \emph{thrifty} unfolding strategy. Observe that even for the $\defns \rightarrow \varphi$ fragment that we study, one can achieve complete reasoning by instantiating $\defns$ on \emph{all} possible terms: all possible ADTs formed from all possible terms over the various sorts, all possible integers for the integer sort, etc. This naive strategy is in fact similar to the one implemented by Gilmore in the 1960s~\cite{Gilmore1960}, and it suffered significantly from the explosion of instantiations.

In contrast, instantiation performed by tools such as {\lh} and {\leon} is \emph{thrifty}, only instantiating function definitions on terms where their applications occur in the current formula. For instance, if $\varphi$ has only one occurrence of a function $f$, say applied to the tuple of terms $(t_1, \ldots, t_k)$, then the definition of $f$ will be instantiated \emph{only} on this tuple. Other terms of the sort, even constants, will not get instantiated. Even permutations of this tuple will not get instantiated. The thrifty strategy only instantiates on terms that naturally occur during \emph{computation} of function terms occurring in the formula.

Our completeness theorem proves that even this thrifty instantiation is  \emph{complete} for the \sys\ fragment. This means that it is as powerful as instantiating on all possible terms. Our result has practical consequences: thrifty instantiation is computationally less expensive, and tool designers can comfortably employ it without worrying about missing proofs. From a user's perspective, one can discourage users from providing instantiations that are not in the horizon of the thrifty strategy, i.e., the computational closure.

This paper is the second work in the program that shows that thrifty instantiations employed in practice turn out to be complete for first-order reasoning of programs. Earlier work on the foundations of \emph{Natural Proofs}~\cite{qiu13,pek14} for verifying heap-manipulating programs~\cite{loding18} showed that natural proofs, which is a thrifty instantiation strategy for reasoning about first-order logic with recursively defined functions, is complete when interpreting recursively defined functions under fixpoint semantics (which is first-order) as opposed to the intended least fixpoint semantics. The logics in the two works as well as the completeness theorems differ in many ways; we discuss these at the end of this section. However, considered broadly, these results suggest that first-order completeness may be the theoretical artifact that can guide the design of predictable automation for expressive logics.

\smallskip
\mypara{Going back to practice} There are several ripe directions to pursue for transferring insights from theory to practice. The first one is the idea that FO-completeness is a theoretical standard we can demand of automation even for incomplete logics. This is quite elegant, since FO-completeness means that validity procedures \emph{do not miss any proofs}. 
It shows that the procedures are not merely heuristics, and there won't be any silly failures that are product of ad-hoc algorithmic design.

A second realization is that FO-completeness can serve as a useful design principle for designing expressive logics with efficient practical automation. We know anecdotally that in both the world of functional programs as well as heap-manipulating programs, the corresponding FO-complete procedures are practically efficient, requiring only a few instantiations with terms of small height ($\sim$2-3). Recent work~\cite{flvOOPSLA25} investigates the validity of this principle for the design of new logics by designing two new heap logics (including a new separation logic) that admit FO-complete reasoning procedures. The work also evaluates these procedures for verifying a standard suite of data structure manipulating programs with rich specifications of their functional behavior and shows that the procedures are efficient.

Adopting FO-completeness as a design principle is not a trivial decision. If one is not prepared to prove complex technical results anew with each new proposed logic, one must essentially design logics that obey the restrictions placed by various fragments for which FO-completeness results are known. In the case of the work discussed above~\cite{flvOOPSLA25}, the authors make careful choices in order to define heap logics that admit FO-completeness. For example, in Separation Logic it is common to define a predicate $\mathit{list\_and\_length(x,n)}$ that holds when $x$ points to a linked list of length $n$. However, this does not fall into the FO-complete fragment leveraged by the authors~\cite{loding18}, so they must instead look for alternate formulations to express their specifications. Of course, the work ultimately shows that despite these technical restrictions it is indeed possible to provide rich functional specifications for heap-manipulating programs.

Similarly, if one were to design an FO-complete logic for verifying functional programs using the fragment developed in this work, the fact that \sys only allows \emph{definitions} is absolutely crucial. Though $\defns$ are universally quantified formulas, definitions do not constrain the space of models of $\defns \rightarrow \varphi$ in significant ways (on the standard model, they are well-defined, and on nonstandard models, they are always satisfiable; see Theorem~\ref{thm:provably-acyclic}). Relaxing this fragment to arbitrary universally quantified premises destroys completeness (follows from Theorem~\ref{thm:boundary-incompleteness} in Section~\ref{sec:boundarythms}). In particular, \sys does not allow the inclusion of quantified lemmas or quantified contracts in the antecedent of the implication. The landscape of choices is complex and needs careful scrutiny. In fact, although we presented our \sys fragment and the completeness of UQFR for it upfront, we spent more than a year on identifying this fragment!

\smallskip
Finally, a third direction for impacting practice is the possibility of using rogue nonstandard models to witness proof failure and to guide (semi-)automatic synthesis algorithms for inductive lemmas that eliminate the rogue models. Several recent works on inductive lemma synthesis leverage this insight differently; some utilize the first-order models returned by SMT solvers as approximate counterexamples and pose inductive lemma synthesis as a learning problem~\cite{fossil}, while others design a sound but incomplete procedure for synthesizing rogue nonstandard models~\cite{EladetalPOPL24} which they then use to inform the instantiation of specific induction principles~\cite{EladShohamPOPL2025}. We discuss some proposals for future work in this direction in Section~\ref{sec:conclusions}.

\medskip
\mypara{On the completeness theorems} We end this section with a few thoughts on the two FO-Completeness results~\cite{loding18,oopsla2023completeness}. The similarities between them are amply clear. The results also satisfy many of the desiderata we listed when discussing challenges above. In particular, they support black-box combinations with an SMT solver. Unlike other historical completeness results known in literature (see Section~\ref{sec:relatedwork}), the thrifty instantiation algorithms only deal with recursively defined functions and delegate the remaining quantifier-free reasoning to an SMT solver. They are completely agnostic to any particular axiomatization of the background theories and have no access to the specific decision procedures used. This is also interesting from a technical standpoint because it is an instance of a successful combination of quantified theories!

On the other hand, the two results also have many differences. In natural proofs for imperative programs~\cite{loding18}, we have an \emph{uninterpreted foreground sort} (for modeling arbitrary pointer-based heaps), have universally quantified formulas that quantify only over the foreground sort, and certain restrictions on the logic. In particular, completeness of formula-based quantifier instantiation is guaranteed when functions that involve a background sort in their domain do \emph{not} map to the foreground sort. The work presented in this paper assumes that the foreground sort is an ADT sort (FO-axiomatized), allows quantification only for defining functions (which must be provably acyclic), but allows universal quantification in these definitions to span over both the foreground ADT sort as well as the background sorts (i.e., parameters to functions can be of the background sort).

One open problem is whether there exists a more general result that extends both these results. For instance, it would be useful to have a result that allows for (user-written) universally quantified lemmas to be incorporated in a completeness result that defines a thrifty instantiation scheme for such lemmas (current tools like \lh ask for users to provide the lemmas as well as instantiations of them, which of course fits into the fragment defined in this paper). The work on natural proofs for imperative programs~\cite{loding18}, however, allows already for incorporation of lemmas, as long as quantification is only over the foreground sort, in its completeness result. On the other hand, the results in this paper allow for definitions to allow quantification over background sorts which is disallowed in the completeness results for natural proofs of imperative programs~\cite{loding18}. 
}

\rev{%
One of the most promising directions for combining the two results is the exploration of definitions involving quantification over background sorts, to yield a completeness result for imperative programs. There has been some recent progress in this direction. The work in~\cite{EladSeplogicCompleteness2026} shows a  completeness result for fold/unfold reasoning for a fragment of separation logic with inductive definitions along with background theories. The fragment is described as the 'Effectively Determined Heap' fragment, and is aesthetically similar to (but formally incomparable with) the idea of precise predicates introduced in early work on separation logic~\cite{ohearn04}. Formulas in this fragment essentially have determined heaps, and the work shows that these determined heaps--- and in fact the entire fragment--- admit a faithful translation of the fragment to first-order logic. Completeness then follows via the application of Herbrand's theorem. Notably, this work does not show completeness for the \emph{thrifty} instantiation schemes typically used in practice. In contrast, the salient feature of our work is the completeness result for thrifty instantiation that is much more efficient than the one yielded by Herbrand's theorem. We leave the exploration of the marriage of these ideas to show completeness of thrifty fold/unfold reasoning for Separation Logic and/or the EDH fragment to future work.
}

\section{Conclusions and future work}
\label{sec:conclusions}

In this work we investigate the theory behind the unreasonable effectiveness of a popular heuristic for reasoning with verification conditions generated during the verification of functional programs that compute over Algebraic Datatypes. We show that the heuristic is not just a set of intuitive design decisions, but is in fact complete for a first-order abstraction of the intuitive semantics a user has in their mind while utilizing the heuristic to prove theorems. We use our theoretical framework to explain the success of the heuristic, its failures, and formally explain the role of the user's intervention in helping the heuristic overcome its failures.

We believe our completeness result not only gives a theoretical foundation for heuristics used by practical verification tools, but also suggests a fundamentally new design paradigm for 
verification languages. The design of programming languages with specification languages guaranteed to be verifiable using complete techniques can lead to practical automation. 
Enriching our logic to datatypes beyond ADTs (e.g., abstract data types such as sets, maps, and queues) while supporting complete verification is also an interesting future direction.

Apart from the future directions mentioned above, a particularly interesting extension concerns Higher-Order Functions (HOFs). \slh and {\sc Stainless} do support defining HOFs, but such definitions are beyond the scope of the theory developed in this work. Tools like LH reason with HOFs by defunctionalization~\cite{Reynolds1972Defunctionalization}, i.e., converting them to FOL definitions by modeling function symbols as constants in a new sort and introducing an uninterpreted function $\mathit{apply}(f,\mathit{args})$ to model the application of a (higher-order) function $f$ on arguments $\mathit{args}$. However, simply defunctionalizing higher-order definitions and applying \qialg~ does not yield a completeness result for the appropriate higher-order logic. We conjecture that an analogous completeness theorem does in fact exist for ADTs and background theories with higher-order functions. 

That fact that rogue nonstandard models exist when inductive lemmas are needed (since our procedure is FO complete) provides an interesting direction to guide both users and tools towards new lemmas. In particular, one may be able to synthesize finite descriptions of rogue nonstandard models (similar to Section~\ref{sec:intuitive-ns-model}) using program synthesis, template-based synthesis using DSLs~\cite{EladetalPOPL24}, or even finite model finders~\cite{BlanchetteLPAR2010}. Such models can be presented to users as evidence of proof failure and lemmas suggested by users can be checked against them to evaluate whether they are useful. 

Exploiting these models to search automatically for inductive lemmas is also an interesting direction, especially in light of recent work that uses FO models to guide inductive lemma synthesis~\cite{fossil} for verifying heap manipulating programs. The Type-1 and Type-3 counterexamples in the work witness the falsehood of the goal and the non-inductiveness of candidate lemmas respectively. Rogue nonstandard models correspond precisely to Type-1 models (since they falsify the goal), and correspond to a variant of Type-3 models (helpful lemmas must not be inductive on rogue nonstandard models). We therefore believe that similar counterexample-guided synthesis techniques using rogue nonstandard models can lead to effective lemma discovery for verifying functional programs.


\bibliographystyle{ACM-Reference-Format}
\bibliography{refs}

\label{lastpage01}


\end{document}